\documentclass[english,11pt,a4paper,dvipsnames]{article}
\ifx\pdfoutput\undefined\else
\pdfoutput=1
\fi 
\usepackage[latin9,utf8]{inputenc}
\usepackage{geometry}
\usepackage{babel}
\usepackage{hyperref}

\usepackage[shortlabels]{enumitem}
\usepackage{float}

\usepackage{amsmath, amssymb, amstext, mathtools}
\usepackage{amsthm}

\usepackage{thm-restate}

\usepackage{algorithm}
\usepackage{algpseudocodex}
\usepackage{tikz}
\usepackage{pgfplots}
\usepackage{textpos}
\usepackage{xcolor}
\usetikzlibrary{fadings,backgrounds,patterns,arrows,decorations.pathreplacing,decorations.pathmorphing,calc}

\usepackage[noabbrev,capitalize]{cleveref}

\newenvironment{claimproof}{\begin{proof}}{\end{proof}}

\newtheorem{theorem}{Theorem}[section]

\newtheorem{lemma}[theorem]{Lemma}
\newtheorem{corollary}[theorem]{Corollary}

\newtheorem{observation}[theorem]{Observation}

\newtheorem{claim}[theorem]{Claim}

\theoremstyle{definition}
\newtheorem{definition}[theorem]{Definition}
\newtheorem{example}[theorem]{Example}
\theoremstyle{remark}
\newtheorem{remark}[theorem]{Remark}

\newcommand{\FA}{{\mathfrak A}}
\newcommand{\FB}{{\mathfrak B}}

\newcommand{\N}{{\mathbb N}}

\newcommand{\R}{{\mathbb R}}
\newcommand{\C}{{\mathbb C}}

\newcommand{\bv}{{\boldsymbol{v}}}
\newcommand{\bw}{{\boldsymbol{w}}}
\newcommand{\bchi}{{\boldsymbol{\chi}}}

\DeclareMathOperator{\ext}{ext}

\newcommand{\trans}{{\top}} 

\newcommand{\zeroMatrix}{\mathbf{0}}

\newcommand{\oneMatrix}[2]{E^{(#1, #2)}}

\newcommand{\cstar}{$\mathrm{C}^\ast$}

\usepackage{microtype}

\newcommand{\block}[3]{\mathbf{#1}[#2, #3]}

\newcommand{\Alg}{\mathbb{A}}
\newcommand{\FullMatrixAlg}{\mathsf{M}}
\newcommand{\ColorMatrix}[1]{M_{#1}}

\newcommand{\Blg}{\mathbb{B}}

\newcommand{\mulSet}[1]{\{\!\{#1\}\!\}}

\newcommand{\kWL}[1]{#1\text{-}\mathrm{WL}}

\newcommand{\cost}{C}

\makeatletter
\def\blfootnote{\gdef\@thefnmark{}\@footnotetext}
\makeatother

\newcommand{\orcid}[1]{\href{https://orcid.org/#1}{\includegraphics[height=1.8ex]{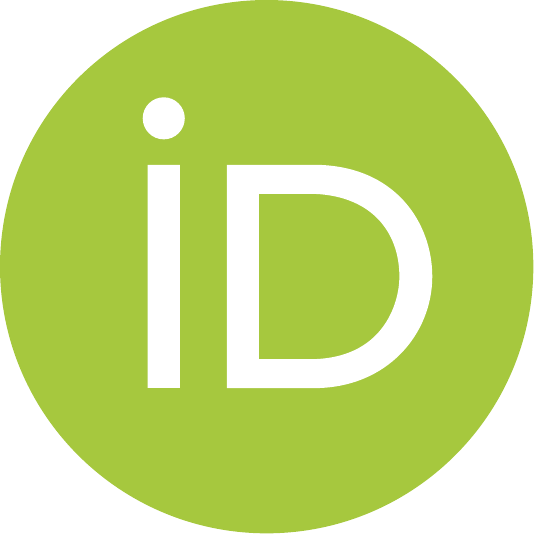}}}

\title{The Classical Weisfeiler-Leman Algorithm\\Stabilizes in $O(n)$ Rounds}

\author{
Simon D\"{o}ring \orcid{0009-0002-6667-5257} \\
Max Planck Institute for Informatics \\ Saarland University (SIC)
\and
Daniel Neuen \orcid{0000-0002-4940-0318}\\
TU Dresden
}
\date{}

\definecolor[named]{urlblue}{cmyk}{1,0.58,0,0.21}
\hypersetup{
    breaklinks=true,
    colorlinks=true,
    citecolor=purple!70!blue!60!black,
    linkcolor=purple!70!blue!90!black,
    urlcolor=urlblue,
    pdflang={en},
    pdftitle={The Classical Weisfeiler-Leman Algorithm Stabilizes in O(n) Rounds},
    pdfauthor={Simon Döring, Daniel Neuen}
}

\begin{document}

\maketitle

\begin{abstract}
 The classical Weisfeiler-Leman algorithm (also known as the $2$-dimensional Weisfeiler-Leman algorithm) is a simple combinatorial algorithm that was originally designed as a heuristic for the graph isomorphism problem.
 However, it has also numerous connections to other areas such as algebraic graph theory, logics, proof complexity, combinatorial optimization and machine learning.

 We prove that the classical Weisfeiler-Leman algorithm terminates after $5(n-1)$ iterations.
 This improves over the previous best upper bound of $O(n \log n)$ by Lichter, Ponomarenko and Schweitzer [LICS 2019], and asymptotically matches the known lower bound of $\Omega(n)$ by Fürer [ICALP 2001].

 Additionally, building on our results for the $2$-dimensional case, we obtain an improved upper bound of $O(n^{k-1}/(k-2)! + n^{k-2})$ on the number of iterations performed by the $k$-dimensional Weisfeiler-Leman algorithm, for every $k \geq 3$.
 Our arguments actually hold for a larger class of sequences of colorings of $k$-tuples; in this larger class our upper bounds are essentially tight for all $k \geq 3$.
\end{abstract}

\section{Introduction}

The Weisfeiler-Leman (WL) algorithm is a simple combinatorial procedure that iteratively computes a coloring of tuples of vertices in a graph, attempting to assign different colors to tuples that are structurally different.
The WL algorithm was originally introduced in its $2$-dimensional version (also known as the \emph{classical Weisfeiler-Leman algorithm}) by Weisfeiler and Leman \cite{WeisfeilerL68}.
For each $k\geq 1$, the $k$-dimensional WL ($k$-WL) algorithm, independently introduced by Immerman and Lander \cite{ImmermanL90} and Babai and Mathon (see \cite{Babai16} for a historical note), colors $k$-tuples of vertices of an input graph $G$.
More precisely, initially the $k$-WL algorithm colors each $k$-tuple $(v_1,\dots,v_k)$ of vertices with the isomorphism type of the underlying (ordered) induced subgraph.
Afterwards, the coloring is iteratively refined by taking colors of $k$-tuples into account that differ in only one coordinate.

The most common application of the WL algorithm is in the context of the graph isomorphism problem.
Indeed, despite the WL algorithm not being a complete isomorphism test for $n$-vertex graphs for dimension $k = o(n)$ \cite{CaiFI92}, it is widely used both in practice \cite{AndersS21,JunttilaK07,JunttilaK11,McKay81,McKayP14} and theory \cite{GroheN21,GroheS20,Kiefer20,Neuen26}.
The most important versions of the WL algorithm are dimensions $1$ and $2$.
The $1$-WL algorithm (also known as \emph{Color Refinement} or \emph{Naive Vertex Refinement}) can be implemented in near-linear time (see, e.g., \cite{BerkholzBG17}) and is already powerful enough to serve as a complete isomorphism test for almost all graphs \cite{BabaiES80}.

Whereas the $1$-dimensional algorithm fails to distinguish between $d$-regular graphs of equal size, the $2$-WL version serves as a complete isomorphism test for almost all regular graphs \cite{Bollobas82}.
Also, the $2$-WL algorithm is closely tied to algebraic graph theory via so-called coherent configurations \cite{Babai95,Higman87}, and is expressive enough to capture, e.g., many spectral invariants \cite{Furer10,RattanS23} as well as other graph properties (see, e.g., \cite{ArvindFKV20,KieferN22,KieferN22b}).
For these reasons, the $2$-WL algorithm is frequently used as a subroutine for isomorphism testing (see, e.g., \cite{ArvindNPZ26,GroheN26,Neuen24a,Neuen24b}) and also forms the basis for extensions such as Deep WL \cite{GroheSW21}.

Higher dimensions of the WL algorithm are often sufficient for a complete isomorphism test on restricted classes of graphs.
For example, it is known that $3$-WL identifies all planar graphs \cite{KieferPS19}, and more generally, for every class excluding some fixed minor , the $k$-WL algorithm is a complete isomorphism test for some fixed $k$ \cite{Grohe17}.
Also, Babai's quasipolynomial-time isomorphism test \cite{Babai16} prominently relies on the $k$-WL algorithm for dimension $k = O(\log n)$.

Beyond that, the WL algorithm has surprising connections to several other areas of (theoretical) computer science.
For example, the expressive power of $k$-WL can be characterized via the $(k+1)$-variable fragment $\mathsf{C}^{k+1}$ of first-order logic with counting quantifiers \cite{ImmermanL90,CaiFI92}, homomorphism counts from graphs of treewidth at most $k$ \cite{Dvorak10,DellGR18}, and suitable linear programming relaxations \cite{AtseriasM13,GroheO15}.
Moreover, the WL algorithm has connections to combinatorial optimization \cite{AtseriasM13,GroheKMS14,Malkin14}, proof complexity \cite{AtseriasF23,BerkholzG15,RezendeFJN025,ToranW24} and machine learning \cite{MorrisLMRKGFB22,MorrisRFHLRG19,ShervashidzeSLMB11,XuHLJ19}.

Besides the dimension of the WL algorithm, the key parameter is the number of iterations it takes until stabilization (see, e.g., \cite{Kiefer20,Kiefer20b}).
The $k$-WL algorithm trivially terminates after $n^{k} - 1$ refinement steps, since the number of different colors of $k$-tuples increases in each iteration.
For the $1$-WL algorithm , this trivial upper bound is in fact tight \cite{KieferM20}, and there are even partial characterizations of those graphs that achieve $n-1$ refinement rounds \cite{KieferM26}.
In contrast, for $k = 2$, the first non-trivial upper bound of $O(n^2/ \log n)$ was given by Kiefer and Schweitzer \cite{KieferS19}.
This bound was improved by Lichter, Ponomarenko and Schweitzer to $O(n \log n)$ \cite{LichterPS19}.
On the other hand, the best known lower bound due to Fürer \cite{Furer01} is $\Omega(n)$.
As the first main result of this work, we close the gap between upper and lower bound for the iteration number of the classical Weisfeiler-Leman algorithm (up to a constant factor).

\begin{restatable}{mtheorem}{mtOne}\label{mtheo:2:dim}
    The 2-WL algorithm stabilizes after at most $5(n-1)$ iterations, where $n$ denotes the number of vertices.
\end{restatable}

We note that our arguments are not limited to the $\kWL{2}$ algorithm.
Indeed, as a conceptual contribution, we introduce the notion of a \emph{multiplicative $2$-refinement sequence} $\bchi = (\chi_0,\dots,\chi_r)$, where $\chi_i \colon V^{2} \to C$ is a coloring of pairs (over some universe $V$ of size $n$) for all $i \in \{0,\dots,r\}$.
As our main technical result, we show that every multiplicative $2$-refinement sequence $\bchi = (\chi_0,\dots,\chi_r)$ has length $r \leq 5(n-1)$.
Since the sequence of colorings produced by the iterative refinement process of $2$-WL is a multiplicative $2$-refinement sequence, the above theorem follows.

We stress that \cref{mtheo:2:dim} is not only relevant to the Weisfeiler-Leman algorithm itself, but has several corollaries via the characterizations mentioned above.
To give just one example, the iteration number of $k$-WL precisely corresponds to the quantifier depth in the corresponding logic $\mathsf{C}^{k+1}$ (see, e.g., \cite{Kiefer20b}).
Exploiting this connection, \cref{mtheo:2:dim} implies that every property expressible in $\mathsf{C}^{3}$ on graphs with $n$ vertices can also be expressed using only quantifier depth $5(n-1)$.

Building on our results for $k = 2$, we also obtain improved bounds for all $k \geq 3$.
For $k \geq 3$, the best known upper bound on the iteration number of $k$-WL has been $O(kn^{k-1} \log n)$ \cite{GroheLN25}.
On the other hand, the currently best known lower bound is $\Omega_k(n^{k/2})$ \cite{GroheLNS25} (which improves over previous lower bounds from \cite{BerkholzN23,GroheLN25}).
As our second main result, we improve the upper bound for all $k \geq 3$.

\begin{restatable}{mtheorem}{mtTwo}\label{mtheo:k:dim}
    For every $k \geq 3$, the $k$-WL algorithm stabilizes after $O\left(\frac{n^{k-1}}{(k-2)!} ~+~ n^{k-2}\right)$ iterations, where $n$ denotes the number of vertices.
\end{restatable}

Note that our upper bounds also significantly improve on the dependence on $k$.
This is particularly relevant when $k$ depends on $n$ (e.g., this is the case in Babai's quasipolynomial-time isomorhism test which uses $k = O(\log n)$).
While the improvement on the dependence on $n$ is minor, let us stress that we view our main contribution to be on the technical side.
Indeed, to obtain the upper bound of $O(kn^{k-1} \log n)$, \cite{GroheLN25} relies on an involved argument via certain matrix tensors obtained from colorings produced by $k$-WL.
In contrast, our upper bound follows from a rather straight-forward application of our results for the case $k = 2$.
Indeed, generalizing the case $k = 2$, we introduce the notion of an \emph{extendable multiplicative $k$-refinement sequence} $\bchi = (\chi_0,\dots,\chi_r)$, where $\chi_i \colon V^{k} \to C$ is a coloring of $k$-tuples (over some universe $V$ of size $n$) for all $i \in \{0,\dots,r\}$.
For such sequences, we can derive an upper bound on $r$ from the corresponding result for $k = 2$ essentially by considering suitable projections of the colorings $\chi_i$.
This argument is significantly simpler and more direct than the arguments used in \cite{GroheLN25}.
Also, it implies that any further improvement for the case $k = 2$ (e.g., on the constant factor) directly translates to an improvement for higher dimensions.

On top of that, for the notion of extendable multiplicative $k$-refinement sequences, it turns out that our upper bounds are essentially optimal.
Indeed, a construction from \cite{GroheLN25} gives extendable multiplicative $k$-refinement sequences\footnote{The notion of an extendable multiplicative $k$-refinement sequence is not present in \cite{GroheLN25}, but it can be easily checked that the constructed sequences satisfy the required properties (in fact, it satisfies even stronger properties).} of length $n^{k-1}/k^{\Omega(k)}$.
Hence, our upper bounds are not only optimal in terms of $n$, but even essentially optimal in terms of $k$.

\paragraph{Related Work.}

There is a significant body of work on the relation between dimension and iteration number of the WL algorithm. 
For example, Grohe and Verbitsky \cite{GroheV06} proved that $O(\log n)$ rounds of $k$-WL can be implemented in $\textsf{TC}^1$.
For many classes of graphs, such as planar graphs \cite{Verbitsky07,GroheK21} or graphs of bounded treewidth \cite{GroheV06}, it turns out that $O(\log n)$ rounds of $k$-WL, for some constant $k$, are sufficient to obtain a complete isomorphism test.
In particular, this implies that isomorphism testing for the corresponding classes of graphs is in $\textsf{TC}^1$.
Additionally, there are connections to descriptive complexity theory \cite{BergeremGKO23}.
Moreover, there has also been a series of results of this flavor in the context of the group isomorphism problem; see, e.g., \cite{GrochowL26}.

\section{Technical Overview}

We give an overview on the proofs of Main Theorems \ref{mtheo:2:dim} and \ref{mtheo:k:dim}.

\subsection{The Classical WL Algorithm}

In the following, we discuss the ideas behind the proof of \cref{mtheo:2:dim}.
Let $G = (V,E)$ be an (undirected) graph on $n$ vertices (our results also apply to more general structures).
The $2$-WL algorithm is defined via a sequence $\bchi = (\chi_0,\dots,\chi_r)$ of colorings $\chi_i\colon V^{2} \to C$ for all $i \in \{0,\dots,r\}$, where $C$ is a suitable set of colors.
The initial coloring $\chi_0$ is defined via $\chi_0(v,v) = 2$ for all $v \in V$, $\chi_0(v, w) = 1$ if $vw \in E$, and $\chi_0(v,w) = 0$ otherwise.

Next, the initial coloring is iteratively refined by setting\footnote{The definition in \cref{sec:WL:algo} is slightly different from this definition. However, it is easy to see that both definitions are equivalent.}
\[\chi_{i+1}(v,w) = \Big\{\!\!\Big\{ \big(\chi(v,x), \chi(x,w)\big) ~\Big|~ x \in V \Big\}\!\!\Big\}\]
for all $i \geq 0$.
This process terminates for the minimal $r \geq 0$ such that $\chi_{r} \equiv \chi_{r+1}$, i.e., the partition of $V^2$ into color classes does not change anymore.
It is not difficult to see that $\chi_0 \succ \chi_1 \succ \dots \succ \chi_r$, i.e., in each step the partition into color classes gets strictly finer.
In particular, we trivially obtain that $r \leq n^2 - 1$ since the number of partition classes increases in each step.

To obtain a linear upper bound on the iteration number $r$, our starting point is similar to \cite{LichterPS19} where each coloring is represented by a set of matrices.
More precisely, for a coloring $\chi\colon V^2 \to C$ and a color $c \in C$, we define the matrix $M_{\chi,c} \in \C^{V \times V}$ via $(M_{\chi,c})_{v,w} = 1$ if $\chi(v,w) = c$, and $(M_{\chi,c})_{v,w} = 0$ otherwise.
Now, let us write $M_i \coloneqq \{M_{\chi_i,c} \mid c \in C\}$ for the set of matrices that occur for the $i$-th coloring.
Also, let us write $\C M_i$ for the $\C$-linear span of $M_i$.
Now, we obtain the following properties:
\begin{enumerate}[label = (\alph*)]
    \item\label{item:matrix-sequence-intro-1} $M_i$ is closed under conjugate transposition,
    \item\label{item:matrix-sequence-intro-2} $\C M_{i-1} \subsetneq \C M_i$, and
    \item\label{item:matrix-sequence-intro-3} $M_{i-1}^{\leq 2} \subseteq \C M_i$
\end{enumerate}
for all $i \in [r]$.
Here, for a set $M \subseteq \FullMatrixAlg_n(\C)$ of $n \times n$ matrices over $\C$, we write
\[M^{\leq k} \coloneqq \{A_1\cdots A_\ell \mid A_i \in M, \ell \leq k\}\]
for all matrix products of length at most $k$ using matrices from $M$.
We call a sequence of such sets $M_0,\dots,M_r \subseteq \FullMatrixAlg_n(\C)$ that satisfy \ref{item:matrix-sequence-intro-1}-\ref{item:matrix-sequence-intro-3}, and such that $I_n \in M_0$, an \emph{$\FullMatrixAlg_n(\C)$-sequence}.
We now consider the more general problem of bounding the length of an $\FullMatrixAlg_n(\C)$-sequence.
Note that, due to \ref{item:matrix-sequence-intro-2}, the dimension of the vector space $\C M_i$ increases in each step, which in particular implies that all $\FullMatrixAlg_n(\C)$-sequences have length at most $n^2-1$.

So let us consider an $\FullMatrixAlg_n(\C)$-sequence $M_0,\dots,M_r \subseteq \FullMatrixAlg_n(\C)$ of maximum length $r$.
Clearly, we have that $\C M_r = \FullMatrixAlg_n(\C)$, since otherwise appending (a generating set for) $\FullMatrixAlg_n(\C)$ results in a longer sequence.

Now, let $\Alg_i \coloneqq \langle M_i \rangle$ denote the closure of $M_i$ under $\C$-scalar multiplication, matrix addition and matrix product.
Note that $\Alg_i$ is also closed under conjugate transposition by \ref{item:matrix-sequence-intro-1}, which implies that $\Alg_i$ is a \cstar-algebra for every $i \in \{0,\dots,r\}$.
Now, let $s \leq r$ be maximum such that $\Alg_s \neq \FullMatrixAlg_n(\C)$.
First, it can be shown that $\C M_s = \Alg_s$ (i.e., $\C M_s$ is closed under matrix multiplication), since otherwise we could construct a longer sequence satisfying \ref{item:matrix-sequence-intro-1}-\ref{item:matrix-sequence-intro-3}. 

At this point, we can rely on the following theorem which shows that $\C M_s = \Alg_s$ has a very specific block form (see \cref{sec:alg:block} for details).
The theorem can be seen as a modification of the Wedderburn–Artin theorem.

\begin{restatable}[{see, e.g., \cite{Davidson96} or \cite[page 218]{HolbrookKLP05}}]{theorem}{theoDirSum}\label{theo:direct:sum}
    Let $\Alg \subseteq \FullMatrixAlg_n(\C)$ be a \cstar-algebra with $I_n \in \Alg$. Then
    there are values $1 \leq \alpha_1, \dots, \alpha_k$ and $n \geq \beta_1 \geq  \dots \geq \beta_k \geq n$ 
    and a unitary matrix $U \in \FullMatrixAlg_n(\C)$ such that
    \[U \Alg U^\ast = \bigoplus_{i = 1}^k I_{\alpha_i} \otimes \FullMatrixAlg_{\beta_i}(\C) \,\text{ and } \, \sum_{i = 1}^k \alpha_i \cdot \beta_i = n.\]
\end{restatable}

By applying the unitary matrix (obtained from the last theorem applied to $\Alg_s$) to all matrices appearing in the sets $M_0,\dots,M_r$, we may simply assume that
\[\C M_s = \Alg_s = \bigoplus_{i = 1}^k I_{\alpha_i} \otimes \FullMatrixAlg_{\beta_i}(\C).\]
Now, for $i = 1,\dots,k$, let $r_i$ denote the maximum length of an $\FullMatrixAlg_{\beta_i}(\C)$ sequence, which is bounded by a linear term in $\beta_i$ by induction.
We show that
\begin{equation}
    \label{eq:overview-1}
    s \leq (k-1) + \sum_{i = 1}^k r_i = (k-1) + \sum_{i=1} O(\beta_i).
\end{equation}

With this, it remains to bound the difference between $r$ and $s$.
Here, we actually prove two bounds.
First, we prove that
\begin{equation}
    \label{eq:overview-2}
    r \leq s + 2 \log(n).
\end{equation}
Indeed, a theorem by Paz (see \cref{theo:dim:algebra}) implies that
\[\FullMatrixAlg_n(\C) = \Alg_{s+1} = \C M_{s+1}^{\leq \lfloor n^{2} / 2 \rfloor}.\]
By repeatedly applying \ref{item:matrix-sequence-intro-3}, we obtain that
\[M_{s+1}^{\leq \lfloor n^{2} / 2 \rfloor} \subseteq M_{s+1 + \log(n^2/2)}\]
which implies that $\C M_{s + 2 \log(n)} = \FullMatrixAlg_n(\C)$.
While this bound is simple and can always be applied, it is itself too weak to obtain the desired linear bound on $r$ and only results in a bound $r = O(n \log n)$ (which would recover the bound from \cite{LichterPS19}).
To address this issue, we show a second, more intricate, bound for the critical case where $\beta_1 > n/2$, which in particular implies that $\alpha_1 = 1$.
In this case, we show that
\begin{equation}
    \label{eq:overview-3}
    r \leq s + 1 + \log(2(n - \beta_1)).
\end{equation}
Now, solving the recursion obtained from Equations \eqref{eq:overview-1}, \eqref{eq:overview-2} and \eqref{eq:overview-3} shows that $r \leq 5(n-1)$.
This also completes the proof of \Cref{mtheo:2:dim}.

Finally, let us stress that the above approach does not only bound the iteration number of the $2$-WL algorithm, but it bounds the length of every sequence $\bchi = (\chi_0,\dots,\chi_r)$, where $\chi_i\colon V^2 \to C$ for every $i \in \{0,\dots,r\}$, such that properties \ref{item:matrix-sequence-intro-1}-\ref{item:matrix-sequence-intro-3} are satisfied for the corresponding sequence of sets of matrices.
We call such coloring sequences multiplicative $2$-refinement sequences over $V$ (we refer to \Cref{sec:col} for details).

\begin{restatable}{theorem}{theoMultSequence}\label{thm:upper-bound-multiplicative}
    Let $\bchi$ be a multiplicative $2$-refinement sequence over $V$, and let $n \coloneqq |V|$.
    Then
    \[|\bchi| \leq 5(n - 1).\]
\end{restatable}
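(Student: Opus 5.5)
We reduce the statement to a bound on the length of $\FullMatrixAlg_n(\C)$-sequences. Given a multiplicative $2$-refinement sequence $\bchi = (\chi_0,\dots,\chi_r)$, the sets $M_i = \{M_{\chi_i,c}\}$ satisfy \ref{item:matrix-sequence-intro-1}--\ref{item:matrix-sequence-intro-3}. Since $\chi_0$ refines the diagonal/off-diagonal partition, $I_n$ lies in $\C M_0$, and we may add it to $M_0$ without harm. So it suffices to show that every $\FullMatrixAlg_n(\C)$-sequence has length at most $f(n) \coloneqq 5(n-1)$. We prove this by strong induction on $n$; the case $n=1$ is trivial, since then $\C M_0 = \FullMatrixAlg_1(\C)$ and no strict increase is possible. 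We also note that the definition of $\FullMatrixAlg_n(\C)$-sequences is invariant under unitary conjugation, which lets us assume block form freely.

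For the induction step, take a sequence of maximal length $r$, so that $\C M_r = \FullMatrixAlg_n(\C)$, and let $s$ be maximal with $\Alg_s \neq \FullMatrixAlg_n(\C)$. The plan has three steps.
\begin{enumerate}[(1)]
\item Show that $\C M_s = \Alg_s$. If not, replacing $M_s$ by a basis of $\Alg_s$ and inserting it lengthens the sequence, because $\Alg_{s-1}\subseteq\Alg_s$ and properties \ref{item:matrix-sequence-intro-1}--\ref{item:matrix-sequence-intro-3} still hold. This contradicts maximality.
\item Apply \cref{theo:direct:sum} to write $\Alg_s = \bigoplus_{i=1}^k I_{\alpha_i}\otimes \FullMatrixAlg_{\beta_i}(\C)$ after conjugation.
\item Prove \eqref{eq:overview-1}, namely $s \le (k-1)+\sum_i r_i$. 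For each $j\le s$, project $\C M_j$ onto the $i$-th summand; this gives a sequence in $\FullMatrixAlg_{\beta_i}(\C)$ satisfying (a)--(c), after removing repetitions. At every step $j-1\to j$, either some projection grows strictly, or the "central" part grows, meaning the partition of the summands detected by the central idempotents in $\C M_j$ gets finer. The second kind of step can happen at most $k-1$ times. Making this rigorous is the first delicate point. I would argue that if all projections stay the same, then the strict growth must come from separating summands that were previously glued together by the span, which refines a partition of $[k]$.
\end{enumerate}

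Next we bound $r-s$. The general bound \eqref{eq:overview-2}, $r\le s+2\log n$, follows from Paz's theorem via repeated squaring using (c). The refined bound \eqref{eq:overview-3} handles the case $\beta_1>n/2$, so $\alpha_1=1$. Here $\FullMatrixAlg_n(\C)$ is generated by $\Alg_s$ together with $M_{s+1}$, and the only missing parts are the off-diagonal blocks linking the large block of size $\beta_1$ to the remaining $m=n-\beta_1$ coordinates. Take $A\in M_{s+1}$ whose block between the large part and the rest is nonzero. Multiplying by the full $\FullMatrixAlg_{\beta_1}$ block, already available in $\C M_{s+1}$, yields rank-one-type elements. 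The products needed to reach all off-diagonal matrices then have length growing with $m$ rather than $n$. The first products in $M_{s+2}$ are used to isolate these elements, and after that each doubling of product length handles twice as many of the $2m$ small coordinates. This gives $1+\log(2m)$ further steps. This is the step I expect to be the main obstacle: showing that products of length $O(m)$, rather than $O(n^2)$, suffice. I would try a Paz-type argument restricted to the chain of subspaces $\C M_{s+1}^{\le \ell}\cdot e$ inside the small block, where the dimension grows by at least one per step until saturation.

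Finally we solve the recursion. Using induction, $s\le k-1+\sum_i 5(\beta_i-1)$. If $\beta_1\le n/2$, then $\sum_i\alpha_i\beta_i = n$ gives $\sum_i(5\beta_i-4)\le 5n-4-\dots$ with enough slack to absorb $2\log n$; this needs a small check using $\beta_1\le n/2$ together with either $k\ge2$ or $\alpha_1\ge2$. If $\beta_1>n/2$, we get $s\le 5(\beta_1-1)+\sum_{i\ge2}(5\beta_i-4)\le 5\beta_1-5+5m-4(k-1)$, and adding $1+\log(2m)$ must stay at most $5(n-1)$. This holds because $\log(2m)+1\le 4(k-1)$ fails only if the blocks are large, in which case $\sum_{i\ge 2}\beta_i<m$ leaves extra slack. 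The constant $5$ is then chosen to make these elementary inequalities go through.
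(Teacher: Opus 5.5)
Your overall plan is the paper's: reduce to $\FullMatrixAlg_n(\C)$-sequences, take the last index $s$ with $\Alg_s\neq\FullMatrixAlg_n(\C)$, decompose $\Alg_s$ into blocks, then use the Paz bound and $1+\lceil\log(2m)\rceil$ extra steps when $\beta_1>n/2$. However, the final induction does not close with the hypothesis $\Lambda(\FullMatrixAlg_\beta(\C))\le 5(\beta-1)$.

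\begin{itemize}
\item Take $k=2$, $\alpha_1=\alpha_2=1$, $\beta_1=\beta_2=n/2$. You only get $s\le 1+2\cdot 5(n/2-1)=5n-9$, hence $\ell\le 5n-9+\lceil 2\log n\rceil$. This exceeds $5(n-1)$ for every even $n\ge 6$ (e.g.\ $37>35$ for $n=8$).
\item Take $\beta_1=n-m>n/2$ and a single further block $\beta_2=m$. You get $\ell\le 5n-7+\lceil\log m\rceil>5(n-1)$ as soon as $m\ge 5$. The ``extra slack'' you appeal to is not there, because $\sum_{i\ge 2}\beta_i=m$ when $\alpha_2=1$.
\end{itemize}

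The slack $4(k-1)$ is constant while the added cost is logarithmic. So with these bounds no hypothesis of the form $c(n-1)$ closes the induction. The paper instead proves the stronger statement $\cost(n)\le 5n-q(n)$ with $q(n)=\lceil 2\log n\rceil+4$, i.e.\ $\Lambda(\FullMatrixAlg_n(\C))\le 5n-5-\lceil 2\log n\rceil$. Subadditivity $q(a)+q(b)\ge q(a+b)$ then pays for the additive terms. When $\beta_1\le n/2$ and there are at least three blocks, it adds a grouping argument: two groups of blocks of total size at least $n/4$ each, plus a nonempty rest.

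Second, the dichotomy you propose for \eqref{eq:overview-1} is false. Work in $\FullMatrixAlg_2(\C)\oplus\FullMatrixAlg_2(\C)$, writing pairs for block-diagonal matrices. Let
\begin{itemize}
\item $M_1=\{I_4,(\oneMatrix{1}{1},\oneMatrix{1}{1}),(\oneMatrix{2}{2},\oneMatrix{2}{2})\}$,
\item $M_2=M_1\cup\{(\oneMatrix{1}{1},\zeroMatrix)\}$,
\item $M_3=M_2\cup\{(\oneMatrix{2}{2},\zeroMatrix)\}$,
\item then add $(\oneMatrix{1}{2},\zeroMatrix),(\oneMatrix{2}{1},\zeroMatrix)$, and finally $(\zeroMatrix,\oneMatrix{1}{2}),(\zeroMatrix,\oneMatrix{2}{1})$.
\end{itemize}
Then $\langle M_5\rangle=\FullMatrixAlg_2(\C)\oplus\FullMatrixAlg_2(\C)$ and $k=2$. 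In steps $1\to 2$ and $2\to 3$ neither projection grows. Step $1\to 2$ also produces no new central idempotent, since $(I_2,\zeroMatrix)\notin\C M_2$. So partially un-gluing a glued subalgebra costs steps invisible to both projections and to the centre, and here there are two such steps although $k-1=1$.

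The inequality itself is still true, but it needs the paper's \cref{lemma:lambda:sum}. By rank--nullity, each step enlarges either the image under the projection onto one summand (at most $\Lambda(\Alg)$ times) or the kernel. The second-summand parts of the kernel, together with $I_m$, form a $\Blg$-sequence, so the kernel grows at most $\Lambda(\Blg)+1$ times; the $+1$ is the unique step where $(\zeroMatrix,I_m)$ enters. Un-gluing steps are thereby charged to the other summand's budget, not to a separate pool of size $k-1$.

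Two smaller points.
\begin{itemize}
\item In step (1), inserting $\Alg_s$ directly after $M_s$ can violate (c), because $\Alg_s\not\subseteq\C M_{s+1}$ in general. The paper instead inserts the doubling chain $\C M_s^{\le 2^j}$, $j=1,\dots,r$, and resumes at $\C M_{s+r}$.
\item For \eqref{eq:overview-3}, your dimension-growth idea is the right one and is exactly \cref{lem:mul:bound:beta} applied to $X=\C M_{s+1}$. There, $H_t=\C\{Au\mid A\in\C X^{\le t},\,u\in U\}$ grows from the $\beta_1$-dimensional coordinate space $U$ of the large block to $\C^n$ in $m$ steps. Then $vw^\ast\in\C X^{\le a+b+1}$ for $v\in H_a$, $w\in H_b$, because $u_1u_2^\ast\in\mathcal B_{\beta_1}\subseteq X$ for $u_1,u_2\in U$.
\end{itemize}
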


We use \Cref{thm:upper-bound-multiplicative} in our proof of \Cref{mtheo:k:dim}.

\subsection{The $k$-Dimensional Case}

Next, we give an overview on the proof of \cref{mtheo:k:dim}.
Let $G = (V,E)$ be a graph and let $\bchi = (\chi_0,\dots,\chi_r)$ denote the sequence of colorings produced by $k$-WL on input $G$ (see \cref{sec:WL:algo}).
Here, $\chi_i\colon V^{k} \to C$ for all $i \in \{0,\dots,r\}$, i.e., the $k$-WL algorithm colors $k$-tuples of vertices.
Recall that the previous best upper bound obtained in \cite{GroheLN25} is $r = O(kn^{k-1} \log n)$ via a complicated adaption of the arguments from \cite{LichterPS19} that relies on certain matrix tensors.
As our second main contribution, we provide a significantly simpler approach to obtain upper bounds on the iteration number of $k$-WL that, based on our results for $k = 2$, also improves over the previous upper bound for $k \geq 3$.
The key idea is to consider projections of the sequence $\bchi$ to the last two coordinates (we arbitrarily choose to project to the last two coordinates; every choice of two coordinates would work).
More precisely, for a tuple $\bv \in V^{k-2}$, let us define the coloring $\chi_i^{\bv}\colon V^{2} \to C$ via
\[\chi_i^{\bv}(w_1,w_2) = \chi_i(\bv,w_1,w_2)\]
(slightly abusing notation, we write $(\bv,w_1,w_2) \in V^{k}$ for the concatenation of the tuples $\bv$ and $(w_1,w_2)$).
Now, let us consider the sequence
\[\bchi^\bv = (\chi_0^\bv,\dots,\chi_r^\bv)\]
for every $\bv \in V^{k-2}$.
Observe that the sequence $\bchi^\bv$ may contain duplicates, i.e., it may happen that $\chi_i^{\bv} \equiv \chi_{i+1}^{\bv}$ although $\chi_i \not\equiv \chi_{i+1}$.
However, after removing duplicates, one can show that the resulting sequence is a multiplicative $2$-refinement sequence.
Let us write $f(n)$ for the maximal length of a multiplicative $2$-refinement sequence on a universe of size $n$; note that $f(n) = O(n)$ by \cref{thm:upper-bound-multiplicative}.

Now, if we could show that, for each $i \in \{0,\dots,r-1\}$, there is some $\bv \in V^{k-2}$ such that $\chi_i^{\bv} \not\equiv \chi_{i+1}^{\bv}$, then we would be essentially done.
Indeed, in this case, we could associate the $i$-th refinement step with one of the refinement steps in the sequence $\bchi^\bv$ (after removing duplicates), for some $\bv \in V^{k-2}$.
Since there are only $n^{k-2}$ such sequences $\bchi^{\bv}$, and each has length at most $f(n) = O(n)$, this would prove an upper bound of $n^{k-2} \cdot f(n) = O(n^{k-1})$.
In fact, slightly refining this argument, if $\chi_i^{\bv} \not\equiv \chi_{i+1}^{\bv}$, then we also get that $\chi_i^{\bv'} \not\equiv \chi_{i+1}^{\bv'}$ for every $\bv'$ obtained from $\bv$ by permuting coordinates (by exploiting that all colorings are shufflable; see \cref{sec:col} for details).
Taking this fact into account, we would obtain an upper bound of $f(n) \cdot \sum_{j = 1}^{k-2} \binom{n}{j}$.

However, it may happen that there is some $i \in \{0,\dots,r-1\}$ such that $\chi_i^{\bv} \equiv \chi_{i+1}^{\bv}$ for all $\bv \in V^{k-2}$.
To cover this case, we exploit one more property of the sequence of colorings produced by $k$-WL.
We say that $\bchi$ is \emph{extendable} if, for every $i \in \{0,\dots,r-1\}$, every $\bv,\bv' \in V^{k-2}$ and every $w_1,w_2 \in V$ such that
\[\chi_i(\bv,w_1,w_2) \notin \{\chi_{i}(\bv',w_1',w_2') \mid w_1',w_2' \in V\}\]
it holds that
\[\chi_{i+1}(\bv,w_1,w_1) \notin \{\chi_{i+1}(\bv',w_1',w_1') \mid w_1' \in V\}.\]
Being extendable allows us to prove that the situation above can only occur for $O(n^{k-2})$ indices $i \in \{0,\dots,r-1\}$.
This leads to the following theorem, which in particular implies \cref{mtheo:k:dim} by using \cref{thm:upper-bound-multiplicative} to bound the function~$f$.

\begin{restatable}{theorem}{theoHighWL}\label{theo:upper:bound:k}
    Let $k \geq 3$.
    Let $\bchi = (\chi_0,\dots,\chi_r)$ be a multiplicative, extendable $k$-refinement sequence over $V$, and let $n \coloneqq |V|$ with $n > k$.
    Then
    \[|\bchi| \leq 3 \cdot n^{k-2} + f(n) \cdot \sum_{j = 1}^{k-2} \binom{n}{j}.\]
\end{restatable}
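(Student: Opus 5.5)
We split the refinement steps $i \in \{0,\dots,r-1\}$ into two types. Step $i$ is of \emph{type A} if there is some $\bv \in V^{k-2}$ with $\chi_i^{\bv} \not\equiv \chi_{i+1}^{\bv}$, and of \emph{type B} otherwise. We bound the number of type-A steps by $f(n)\cdot\sum_{j=1}^{k-2}\binom{n}{j}$ and the number of type-B steps by $3\cdot n^{k-2}$.

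\textbf{Type A.} For each $\bv$, we first show that $\bchi^{\bv}$ with duplicates removed is a multiplicative $2$-refinement sequence over $V$. Multiplicativity should transfer directly: the matrices $M_{\chi_i^{\bv},c}$ are slices of the $k$-dimensional color tensors, and products of consecutive slices are expressible via the $k$-dimensional multiplicativity conditions restricted to the fixed prefix $\bv$. Consequently $\bchi^{\bv}$ has at most $f(n)$ strict refinements.

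Next we group tuples $\bv$ by their underlying set $S=\{v_1,\dots,v_{k-2}\}$, where $1\le |S|\le k-2$. By shufflability, a strict refinement at $\bv$ forces one at every permutation $\bv'$ of $\bv$, and more generally at every tuple with the same underlying set (repeated entries can be reordered). Refinement of $\chi^{\bv}$ is thus a property of the pair $(S,i)$. We therefore charge each type-A step $i$ to some set $S$ at which a refinement occurs. Each $S$ is charged at most $f(n)$ times, and there are $\sum_{j=1}^{k-2}\binom{n}{j}$ such sets, which gives the claimed bound. The point to verify here is that removing duplicates leaves a sequence still satisfying properties (a)--(c) between consecutive distinct entries. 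For (c) this holds because the products for the earlier index lie in the span of the next distinct index, by monotonicity of refinement.

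\textbf{Type B.} Here every slice partition is unchanged, so the refinement must occur \emph{between} slices. For two prefixes $\bv,\bv'$ whose slice color sets were equal at step $i$, the sets become distinct at step $i+1$. We track the equivalence relation $\bv\sim_i\bv'$ defined by $\{\chi_i(\bv,w_1,w_2)\}=\{\chi_i(\bv',w_1,w_2)\}$ as sets, possibly refined by the diagonal color sets $\{\chi_i(\bv,w,w)\mid w\in V\}$.

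The key step, which we expect to be the main obstacle, is to show that every type-B step strictly refines some such partition of $V^{k-2}$, or of a small constant number of related partitions. Since a partition of $V^{k-2}$ can be strictly refined at most $n^{k-2}-1$ times, using three partitions gives the bound $3n^{k-2}$.

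To carry this out, we will use extendability. If at step $i$ some color $\chi_i(\bv,w_1,w_2)$ does not occur in slice $\bv'$, then at step $i+1$ the diagonal color sets of $\bv$ and $\bv'$ already differ. Thus "off-diagonal" separation at step $i$ propagates into "diagonal" separation at step $i+1$.

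We then argue that a type-B step that does not split the slice color-set partition must split the diagonal partition, or a partition by slice color multisets, within at most a bounded number of subsequent steps. A change in the global partition of $V^k$ that is invisible within slices must identify or separate colors across slices, which changes one of these partitions. The amortization, ensuring that each partition refinement is charged by at most a constant number of type-B steps, is the delicate point. We would first try to charge each type-B step $i$ to a strict refinement of the pair of partitions at step $i$ or $i+1$, and then count the total.
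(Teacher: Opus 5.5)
Your handling of the steps in which some slice $\chi^{\bv}$ is strictly refined is correct and matches the paper. Multiplicativity of the restrictions is in fact the definition of a multiplicative $k$-refinement sequence, and shufflability with non-injective maps moves every slice refinement to one canonical representative per underlying set.

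The gap is the type-B bound. This is the real content of the theorem, and you leave it open. Moreover, the invariants you propose would not work as stated. In a type-B step $i\to i+1$, some class $X$ of $\chi_i$ splits into $\chi_{i+1}$-classes $X_1,\dots,X_\ell$. Since no slice is refined, their projections $\rho(X_j)$ to $V^{k-2}$ are pairwise disjoint. So for $\bv\in\rho(X_1)$ and $\bv'\in\rho(X_2)$ you only know two things: the two slices \emph{share one color} at step $i$, and the $\chi_{i+1}$-color of $X_1$ is missing from slice $\bv'$. Their slice color sets, and their diagonal color sets, may well have differed already at step $i$. In that case this step refines none of your partitions, and there is nothing to charge it to. Amortizing over such set-valued invariants does not work without further structure.

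The missing idea is to use the single color $\bar\chi_i(\bv)\coloneqq\chi_i(\bv,v_{k-2},v_{k-2})$.
\begin{itemize}
\item \emph{Before the step.} By shufflability (collapse coordinates $k-1,k$ onto $k-2$), two slices that share \emph{any} color at step $i$ have equal $\bar\chi_i$-value. So ``the slices share some color'' is exactly the equivalence relation induced by $\bar\chi_i$, and in particular $\bar\chi_i(\bv)=\bar\chi_i(\bv')$.
\item \emph{First extendability step.} Take $(\bv,w_1,w_2)\in X_1$. Extendability gives $\chi_{i+2}(\bv,w_1,w_1)\notin\{\chi_{i+2}(\bv',w',w')\mid w'\in V\}$.
\item \emph{Upgrading to the whole slice.} Shufflability with $\sigma(k-1)=k$ then shows that $\chi_{i+2}(\bv,v_{k-2},w_1)$ is not of the form $\chi_{i+2}(\bv',v'_{k-2},w')$. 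Compatibility with equality upgrades this to absence from the whole slice $\bv'$.
\item \emph{Second extendability step.} A second application of extendability gives $\chi_{i+3}(\bv,v_{k-2},v_{k-2})\neq\chi_{i+3}(\bv',v'_{k-2},v'_{k-2})$.
\end{itemize}
Hence $\bar\chi_i\succ\bar\chi_{i+3}$. So one of the steps $i+1,i+2,i+3$ strictly refines the single coloring $\bar\chi$ of $V^{k-2}$, which can happen at most $n^{k-2}-1$ times. This gives at most $3(n^{k-2}-1)+2$ type-B steps. The $+2$ covers the last two steps, where the three-step window runs past the end of the sequence. So the factor $3$ comes from a three-step delay for one partition, not from three different partitions.
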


\section{Preliminaries}

\subsection{Basics}

We write $\N \coloneqq \{0, 1, 2, \dots\}$ for the set of natural numbers.
For $n \in \N$ we write $[n] \coloneqq \{1, \dots, n\}$.
Also, we always write $\log$ for the logarithm to base 2.

A \emph{(relational) signature} is a set $\sigma = \{R_1,\dots,R_m\}$ where each $R_i$ is a relation symbol with prescribed arity $k_i \geq 1$.
A \emph{$\sigma$-structure} is a tuple $\FA = (V(\FA),R_1^\FA,\dots,R_m^\FA)$ where $V(\FA)$ is a finite universe, and $R_i^\FA \subseteq (V(\FA))^{k_i}$ for every $i \in [m]$.
In the remainder, we usually do not explicitly fix a signature and simply speak of relational structures.
We say a relational structure $\FA = (V(\FA),R_1^\FA,\dots,R_m^\FA)$ has \emph{arity at most $k$} if each relation symbol $R_i$ has arity at most $k$.
For $X \subseteq V(\FA)$, we define the \emph{induced substructure} $\FA[X]$ with vertex set $X$ and relations
\[R_i^{\FA[A]} \coloneqq R_i^{\FA} \cap X^{k_i}\]
where $k_i$ is the arity of $R_i$.
For two relational structures $\FA = (V(\FA),R_1^\FA,\dots,R_m^\FA)$ and $\FB = (V(\FB),R_1^\FB,\dots,R_m^\FB)$, an \emph{isomorhism from $\FA$ to $\FB$} is a bijection $\varphi\colon V(\FA) \to V(\FB)$ such that, for every $i \in [m]$, and every $v_1,\dots,v_{k_i} \in V(\FA)$, it holds that $(v_1,\dots,v_{k_i}) \in R_i^{\FA} \iff (\varphi(v_1),\dots,\varphi(v_{k_i})) \in R_i^{\FB}$.

Observe that every graph $G = (V,E)$ can be viewed as a relational structure using a binary relational symbol $E$.

\subsection{C*-Algebras}\label{sec:c:algebra}

We write $\C$ for the complex numbers.
For a matrix $M \in \C^{n \times m}$, we write $M^\trans$ for the transposition of $M$, and $M^\ast$ for the conjugate transposition of $M$.
We write $\oneMatrix{i}{j}$ for the $n \times m$ matrix whose $(i,j)$-entry is equal to $1$ and all other entries are zero.
These kinds of matrices are commonly known as \emph{matrix units}.
The dimension $n \times m$ of a matrix $\oneMatrix{i}{j}$ is always clear from the context. 

We write $\FullMatrixAlg_n(\C)$ for the algebra of all $n \times n$ matrices over $\C$.
Slightly abusing notation, we also write $\FullMatrixAlg_n(\C)$ for the set of all $n \times n$ matrices over $\C$.
For a set $S \subseteq \FullMatrixAlg_n(\C)$, we write $\C S$ for the $\C$-linear span of $S$, and
\[S^{\leq k} \coloneqq \{M_1\cdot \dots \cdot M_\ell \mid \ell \leq k, M_1,\dots,M_\ell \in S\}\]
for the set of all products of at most $k$ matrices from $S$.
Also, we define $S^\infty \coloneqq \bigcup_{k \geq 1} S^{\leq k}$ for the union over all $S^{\leq k}$.
Moreover, we write
\[\langle S \rangle \coloneqq \C S^\infty\]
for the algebra generated by $S$.

An algebra $\Alg$ over the field $\C$ is called a \cstar-algebra if it comes with an involution $\ast \colon \Alg \to \Alg$ and a norm $\lVert \cdot \rVert$ under which $\Alg$ becomes a Banach algebra.
Additionally, $\lVert \cdot \rVert$ has to be submultiplicative, i.e., $\lVert A B \rVert \leq \lVert A \rVert \cdot \lVert B \rVert$ for all $A,B \in \Alg$. 
In this paper, we only consider matrix-algebras over $\FullMatrixAlg_n(\C)$, and use conjugate transposition as involution~$\ast$ and the standard matrix norm $\lVert \cdot \rVert$ as the norm (i.e., the norm that is induced by the euclidean norm on $\C$).
Note that this norm is submultiplicative.
The following theorem ensures that we do not have to worry about norms and Banach-spaces since all our algebras have finite dimension.

\begin{theorem}\label{theo:subalgebra:cstar}
    Every subalgebra $\Alg$ of $\FullMatrixAlg_n(\C)$ that is closed under taking conjugate transposition is a \cstar-subalgebra. 
\end{theorem}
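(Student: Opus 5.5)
To prove \cref{theo:subalgebra:cstar}, take $\Alg \subseteq \FullMatrixAlg_n(\C)$ to be a subalgebra closed under conjugate transposition, equipped with the involution $A \mapsto A^\ast$ and the operator norm $\lVert \cdot \rVert$ induced by the Euclidean norm. We check the defining properties of a \cstar-algebra one at a time: that $\ast$ is an involution on $\Alg$, that $(\Alg,\lVert\cdot\rVert)$ is a Banach algebra, that the norm is submultiplicative, and the \cstar-identity $\lVert A^\ast A\rVert = \lVert A\rVert^2$.

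First, $\ast$ maps $\Alg$ into $\Alg$ by assumption. On all of $\FullMatrixAlg_n(\C)$ it satisfies $(A^\ast)^\ast = A$, $(AB)^\ast = B^\ast A^\ast$ and $(\lambda A + B)^\ast = \bar\lambda A^\ast + B^\ast$, so its restriction to $\Alg$ is an involution. Submultiplicativity of the operator norm holds on all of $\FullMatrixAlg_n(\C)$, since $\lVert ABx\rVert \le \lVert A\rVert\,\lVert Bx\rVert \le \lVert A\rVert\,\lVert B\rVert\,\lVert x\rVert$, and hence it holds on $\Alg$.

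Next, completeness. $\Alg$ is a $\C$-linear subspace of the finite-dimensional space $\FullMatrixAlg_n(\C)$. Every finite-dimensional subspace of a normed space is complete, because all norms on a finite-dimensional space are equivalent and $\C^d$ is complete. Equivalently, $\Alg$ is closed in $\FullMatrixAlg_n(\C)$, which is itself complete. So $\Alg$ is a Banach algebra.

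Finally, the \cstar-identity. For $x$ with $\lVert x\rVert = 1$ we have $\lVert Ax\rVert^2 = \langle A^\ast A x, x\rangle \le \lVert A^\ast A\rVert$, which gives $\lVert A\rVert^2 \le \lVert A^\ast A\rVert$. Conversely, $\lVert A^\ast A\rVert \le \lVert A^\ast\rVert\,\lVert A\rVert = \lVert A\rVert^2$, using $\lVert A^\ast\rVert = \lVert A\rVert$, which follows from $\langle Ax, y\rangle = \langle x, A^\ast y\rangle$ together with Cauchy–Schwarz. No step is a genuine obstacle. The only point needing care is completeness, and finite-dimensionality handles it directly.
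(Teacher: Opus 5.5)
Your proof is correct and follows the same route as the paper: completeness of $\Alg$ comes from finite-dimensionality, and the involution comes from closure under conjugate transposition. You also explicitly verify submultiplicativity and the $\mathrm{C}^\ast$-identity $\lVert A^\ast A\rVert = \lVert A\rVert^2$; the paper's stated definition omits the identity, but the standard definition of a $\mathrm{C}^\ast$-algebra requires it, so your version is if anything more complete.
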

\begin{proof}
    According to \cite[Corollary 1.4.19]{Megginson98} every finite-dimensional normed space is a Banach space (note that this implicitly assumes that the underlying field is either $\R$ or $\C$). Thus, $\Alg$ is a Banach algebra under $\lVert \cdot \rVert$. Furthermore, since $\Alg$ is closed under taking conjugate transposition, we obtain that $\ast \colon \Alg \to \Alg, X \mapsto X^\ast$ is an involution on $\Alg$. Thus, $\Alg$ is a \cstar-algebra.
\end{proof}

\cref{theo:subalgebra:cstar} ensures that our algebras are always \cstar-algebras as long as we ensure that they are closed under conjugate transposition. This also holds if we use a set of matrices to generate our algebra. For a set $S \subseteq \FullMatrixAlg_n(\C)$, we define $S^\ast \coloneqq \{M^\ast \mid M \in S\}$.

\begin{corollary}\label{theo:semisimple}
    Suppose $S \subseteq \FullMatrixAlg_n(\C)$ is closed under conjugate transposition, i.e. $S^\ast = S$.
    Then $S^{\leq t}$ is also closed under conjugate transposition for all $t \in \N$ and  $\langle S \rangle$ is \cstar-algebra. 
\end{corollary}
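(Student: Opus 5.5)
The plan is to prove the two claims in order: first that $S^{\leq t}$ is closed under conjugate transposition, and then that $\langle S \rangle$ is a \cstar-algebra by reducing to \cref{theo:subalgebra:cstar}.

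For the first claim, I use that conjugate transposition reverses products: $(M_1 \cdots M_\ell)^\ast = M_\ell^\ast \cdots M_1^\ast$. Let $M_1 \cdots M_\ell \in S^{\leq t}$ with $\ell \leq t$ and all $M_j \in S$. Since $S^\ast = S$, each $M_j^\ast$ also lies in $S$. Hence the reversed product $M_\ell^\ast \cdots M_1^\ast$ is again a product of at most $t$ matrices from $S$, so it lies in $S^{\leq t}$. The same argument applies to the empty product $I_n$ (if $\ell = 0$ is allowed), since $I_n^\ast = I_n$. Therefore $(S^{\leq t})^\ast \subseteq S^{\leq t}$. Applying this inclusion twice gives equality.

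For the second claim, I verify the two hypotheses of \cref{theo:subalgebra:cstar} for $\langle S \rangle$.
\begin{enumerate}
  \item $\langle S \rangle$ is a subalgebra. It is a linear span, so it is a subspace. The set $S^\infty$ is closed under multiplication, because a product of two finite products of elements of $S$ is again such a product. By bilinearity of matrix multiplication, the span of a multiplicatively closed set is closed under multiplication.
  \item $\langle S \rangle$ is closed under conjugate transposition. By the first claim, $S^\infty = \bigcup_k S^{\leq k}$ is closed under $\ast$. Conjugate transposition is conjugate-linear: $(\sum_j \lambda_j A_j)^\ast = \sum_j \overline{\lambda_j} A_j^\ast$. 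So the $\ast$ of any linear combination of elements of $S^\infty$ is again such a linear combination, and lies in $\langle S \rangle$.
\end{enumerate}
\cref{theo:subalgebra:cstar} then gives that $\langle S \rangle$ is a \cstar-algebra.

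None of these steps is a real obstacle. The only points needing care are that $\ast$ reverses the order of factors, which is harmless because the length bound $\leq t$ is preserved, and that $\ast$ is conjugate-linear rather than linear, which is still compatible with taking a $\C$-span.
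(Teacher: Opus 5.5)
Your proposal is correct and follows essentially the same route as the paper: the paper shows $S^{\leq t}$ and $\langle S \rangle$ are closed under conjugate transposition via the single formula $\bigl(\sum_i \lambda_i \prod_j M_{i,j}\bigr)^\ast = \sum_i \bar{\lambda}_i \prod_j^{\text{rev}} M_{i,j}^\ast$ (reversing products and conjugating coefficients, using $S^\ast = S$), and then applies \cref{theo:subalgebra:cstar}. You additionally verify explicitly that $\langle S \rangle$ is closed under multiplication and handle the empty product $I_n$; the paper leaves both implicit.
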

\begin{proof}
    Let $S$ be closed under conjugate transposition.
    Then $S^{\leq t}$ and the algebra $\langle S \rangle$ are also closed under conjugate transposition since
    \[\left(\sum_{i = 1}^s \lambda_i \prod_{j = 1}^{t_i} M_{i, j}\right)^{\!\!\ast} = \sum_{i = 1}^s \bar{\lambda}_i \prod_{j = t_i}^1 M_{i, j}^\ast,\]
    where $\prod_{j = k_i}^1  M_{i, j}^\ast$ is the product in reversed order.
    \cref{theo:subalgebra:cstar} implies the second result.
\end{proof}

For two \cstar-algebras $\Alg$ and $\Blg$, an algebra homomorphism $\varphi \colon \Alg \to \Blg$ is a function with $\varphi(A \cdot B) = \varphi(A) \cdot \varphi(B)$ and $\varphi(A + zB) = \varphi(A) + z \varphi(B)$ for all $A, B \in \Alg$ and $z \in \C$.
It is called a $\ast$\nobreakdash-homomorphism if additionally $\varphi(X^\ast) = (\varphi(X))^\ast$ holds.
Note that the image of a $\ast$\nobreakdash-homomorphism is again a \cstar-algebra. 
Moreover, if $\varphi$ is bijective then it is called an $\ast$\nobreakdash-isomorphism.

The following theorem is a weakening of Paz's Theorem \cite{Paz84}.
We use this weakening to simplify some computations later.

\begin{theorem}\label{theo:dim:algebra}
    Let $X \subseteq \FullMatrixAlg_n(\C)$ with $\langle X \rangle = \FullMatrixAlg_n(\C)$ and $n \geq 2$ then $\langle X \rangle = \C X^{\leq \lfloor n^2 / 2 \rfloor }$.
\end{theorem}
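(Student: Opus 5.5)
We will obtain the statement from Paz's bound on the length of a generating set, via the standard filtration argument. Define $V_t \coloneqq \C X^{\leq t}$ for $t \geq 0$, with $X^{\leq 0} = \{I_n\}$ if we allow the empty product. Then $V_0 \subseteq V_1 \subseteq V_2 \subseteq \cdots$ and $V_{t+1} = V_t + V_t X$, where $V_t X$ denotes the span of all products $AB$ with $A \in V_t$ and $B \in X$. The key observation is the stabilization property: if $V_t = V_{t+1}$ for some $t$, then $V_{t+2} = V_{t+1} + V_{t+1}X = V_t + V_t X = V_{t+1}$. By induction the chain is constant from index $t$ on. Since $\bigcup_t V_t = \langle X \rangle = \FullMatrixAlg_n(\C)$, a stabilized chain must already equal $\FullMatrixAlg_n(\C)$. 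Hence the dimensions strictly increase until the chain reaches $\FullMatrixAlg_n(\C)$.

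Starting from $\dim V_0 = 1$ (or $\dim V_1 \geq 1$ if we exclude the empty product), this naive counting gives only that $V_{n^2-1} = \FullMatrixAlg_n(\C)$. To reach $\lfloor n^2/2 \rfloor$ we will invoke Paz's Theorem \cite{Paz84}. In its standard form it states that if a set $X$ generates $\FullMatrixAlg_n(\C)$ as an algebra, then words of length at most $\lceil (n^2+2)/3 \rceil$ already span $\FullMatrixAlg_n(\C)$. The present statement is then a weakening, and only an inequality remains to be checked. For $n \geq 2$ we need $\lceil (n^2+2)/3\rceil \leq \lfloor n^2/2 \rfloor$. For $n=2$ both sides equal $2$; for $n=3$ they are $4$ and $4$; for $n \geq 4$ we have $(n^2+2)/3 + 1 \leq n^2/2 - 1/2$, which holds since $n^2 \geq 16$. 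Monotonicity of $V_t$ in $t$ then gives $\C X^{\leq \lfloor n^2/2\rfloor} = \FullMatrixAlg_n(\C) = \langle X \rangle$.

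The main subtlety is the convention for the empty product. The paper's definition of $S^{\leq k}$ allows $\ell \leq k$, which presumably includes $\ell = 0$. Paz's theorem is usually stated for the unital span, so it matches this convention. If instead $\ell \geq 1$ is intended, we will note that $I_n \in \C X^{\leq m}$ for the relevant $m$ by the same dimension argument applied to the nonunital filtration. Alternatively, we can rely on the slack between Paz's bound and $\lfloor n^2/2\rfloor$, which grows with $n$. The small cases $n = 2$ and $n = 3$ would need a direct check.

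If one prefers to avoid citing Paz's full theorem, a self-contained route is possible but harder. One would show that each dimension increment is at least $2$ on average, for instance by using the $\ast$-closure and the fact that the growth $V_t \mapsto V_{t+1}$ cannot add only one dimension repeatedly. I expect this to be the genuine obstacle, so the plan is to rely on \cite{Paz84} and reduce the statement to the numeric inequality above.
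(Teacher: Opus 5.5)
Your proposal is correct and matches the paper's proof: cite Paz's bound $\lceil (n^2+2)/3\rceil$ and check that it is at most $\lfloor n^2/2\rfloor$ for all $n \geq 2$, which your case analysis does correctly. The filtration argument and the nonunital fallback are not needed. The paper's convention includes the empty product (it later uses $X^{\leq 0} = \{I_n\}$), and this already matches the unital convention in Paz's theorem.
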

\begin{proof}
    By \cite[Theorem 1]{Paz84} we obtain $\langle X \rangle = \C X^{\leq {\lceil(n^2 + 2)/3 \rceil}}$.
    Observe that $\lceil(n^2 + 2)/3 \rceil \leq \lfloor n^2 / 2 \rfloor$ for all natural numbers $n \geq 2$.
\end{proof}

\subsection{Block Representation of Algebras}\label{sec:alg:block}

In this section, we present different ways to represent and combine \cstar-algebras in $\FullMatrixAlg_n(\C)$.

\begin{definition}\label{def:direct:sum}
    For \cstar-algebras $\Alg_1, \dots, \Alg_k$ with $\Alg_i \subseteq \FullMatrixAlg_{\beta_i}(\C)$ and $n \coloneqq \beta_1 + \dots + \beta_k$, we write
    \[\Blg \coloneqq \bigoplus_{i = 1}^k \Alg_i\]
    for the \cstar-algebra of $n \times n$ matrices over $\C$ that consists of \emph{blocks} in $\Alg_i$.
    Formally, we define 
    \[ b_i \coloneqq \sum_{t = 1}^{i-1} \beta_t.\]
    Note that $b_1 = 0$ and $b_{k + 1} = n$.
    Now, $X \in \Blg$ if and only if they are matrices $\mathbf{X}[i] \in \Alg_i$ that satisfy the following property: For all $p, q \in [n]$ let $i, j \in [k]$ be the unique indices with $p \in \left\{b_i + 1, \dots,  b_{i + 1}\right\}$ and  $q \in \left\{b_j + 1, \dots,  b_{j + 1}\right\}$.
    If $i = j$ then $X_{p, q} = \mathbf{X}[i]_{p - b_i, q - b_i}$, otherwise $X_{p, q} = 0$.
    We call $\mathbf{X}[i]$ the $i$th block of $X$.
\end{definition}

\begin{example}
    Consider a matrix $X \in \FullMatrixAlg_2(\C) \oplus \FullMatrixAlg_3(\C) \oplus \FullMatrixAlg_2(\C)$.
    Then $X$ can be written as 
    \[X = 
    \arraycolsep=0.0pt
    \left(\begin{array}{ccc}
    \scalebox{0.8}{\fbox{$\mathbf{X}[1]$}}  \\ 
    & \scalebox{1.0}{$\fbox{$\mathbf{X}[2]$}$} & \\
    & & \scalebox{0.8}{$\fbox{$\mathbf{X}[3]$}$} \\
    \end{array}\right).
    \]
    So $X$ is a block matrix with blocks $\mathbf{X}[1] \in \FullMatrixAlg_2(\C)$,  $\mathbf{X}[2] \in \FullMatrixAlg_3(\C)$, and $\mathbf{X}[3] \in \FullMatrixAlg_2(\C)$ which form the first, second and third block respectively.
    Notice that the blocks are independent of each other and that all entries outside a block are zero.
\end{example}

\begin{definition}\label{def:tensor:product}
    Let $\alpha, \beta$ be two positive integers and $n = \alpha \cdot \beta$.
    We write
    \[\Blg \coloneqq I_\alpha \otimes \FullMatrixAlg_{\beta}(\C)\]
    for the \cstar-algebra of $n \times n$ matrices over $\C$ which consist of $\alpha$ many copies of a \emph{block} in $\FullMatrixAlg_{\beta}(\C)$.
    Formally, we define 
    \[b_i \coloneqq (i-1) \cdot \beta.\]
    Note that $b_1 = 0$ and $b_{k + 1} = n$. 
    Now, $X \in \Blg$ if and only if they are matrices $\block{X}{1}{i} \in \FullMatrixAlg_\beta(\C)$ with $\block{X}{1}{i} = \block{X}{1}{j}$, for all $i, j$, that satisfy the following property:
    For all $p, q \in [n]$ let $i, j \in [k]$ be the unique indices with $p \in \left\{b_i + 1, \dots,  b_{i + 1}\right\}$ and  $q \in \left\{b_j + 1, \dots,  b_{j + 1}\right\}$.
    If $i = j$ then $X_{p, q} = \block{X}{1}{j}_{p - b_j, q - b_j}$, otherwise $X_{p, q} = 0$.
\end{definition}

\begin{example}
    Consider a matrix $X \in I_3 \otimes \FullMatrixAlg_4(\C)$.
    Then $X$ can be written as 
    \[X = 
    \arraycolsep=0.0pt
    \left(\begin{array}{ccc}
    \scalebox{1}{\fbox{$\block{X}{1}{1}$}}  \\ 
    & \scalebox{1}{$\fbox{$\block{X}{1}{2}$}$} & \\
    & & \scalebox{1}{$\fbox{$\block{X}{1}{3}$}$} \\
    \end{array}\right) = \left(\begin{array}{ccc}
    \scalebox{1}{\fbox{$A$}}  \\ 
    & \scalebox{1}{$\fbox{$A$}$} & \\
    & & \scalebox{1}{$\fbox{$A$}$} \\
    \end{array}\right) 
    \]
    So $X$ is a block matrix where a single block $A \in \FullMatrixAlg_4(\C)$ is repeated 3 times.
    This means that the blocks depend on each other with $\block{X}{1}{1} = \block{X}{1}{2} = \block{X}{1}{3}$.
\end{example}

We can also combine both examples as the next example shows.

\begin{example}
    Consider a matrix $X \in (I_2 \otimes \FullMatrixAlg_4(\C)) \oplus (I_3 \otimes \FullMatrixAlg_2(\C))$.
    Then $X$ can be written as 
\[X = 
\arraycolsep=0.0pt
\left(\begin{array}{ccccc}
 \scalebox{1.2}{\fbox{$\block{X}{1}{1}$}}  \\ 
  & \scalebox{1.2}{$\fbox{$\block{X}{1}{2}$}$} & \\
 & & \scalebox{0.9}{$\fbox{$\block{X}{2}{1}$}$} \\
 & & & \scalebox{0.9}{$\fbox{$\block{X}{2}{2}$}$} \\
 & & & & \scalebox{0.9}{$\fbox{$\block{X}{2}{3}$}$}\\
\end{array}\right) 
= \left(\begin{array}{ccccc}
 \scalebox{1.2}{\fbox{$A$}}  \\ 
  & \scalebox{1.2}{$\fbox{$A$}$} & \\
 & & \scalebox{0.9}{$\fbox{$B$}$} \\
 & & & \scalebox{0.9}{$\fbox{$B$}$} \\
 & & & & \scalebox{0.9}{$\fbox{$B$}$}\\
\end{array}\right).
\]
    So $X$ consists of a matrix $A \in \FullMatrixAlg_4(\C)$ that is repeated twice and a matrix $B \in \FullMatrixAlg_2(\C)$ that is repeated three times.
\end{example}

Using the direct sum $\oplus$ and the tensor product $\otimes$, we can formalize the following theorem.

\theoDirSum*

\section{Colorings and Refinements}
\label{sec:col}

In this section, we formally describe the $k$-WL algorithm. After that, we first focus on the case $k = 2$ and associate sets of matrices with the coloring sequence produced by $2$-WL.
Most of the arguments in this section already appeared in previous works \cite{LichterPS19,GroheLN25}, but sometimes formulated in a different way.

\subsection{Basics}

We start by covering basic terminology for colorings.
Fix $k \geq 2$ and let $V$ be a finite set.
In this work, a \emph{($k$-dimensional) coloring on $V$} is a mapping  $\chi\colon V^k \to C$, where $C$ is some non-empty set of colors.
Each coloring $\chi \colon V^k \to C$ induces a partition $\pi(\chi)$ of $V^k$ via its color classes.
For two colorings we write $\chi \succeq \chi'$ if $\pi(\chi')$ is finer than $\pi(\chi)$, i.e., every color class $P' \in \pi(\chi')$ is completely contained in a color class $P \in \pi(\chi$).
Furthermore, we write $\chi \equiv \chi'$ if $\pi(\chi) = \pi(\chi')$.
Note that this is equivalent to $\chi \succeq \chi' \succeq \chi$.
Finally, we write $\chi \succ \chi'$ if $\chi \succeq \chi'$ and $\chi \not\equiv \chi'$.

In this work, we shall only be interested in colorings $\chi\colon V^k \to C$ that are \emph{compatible with equality} and \emph{shufflable} (see also \cite{GroheLN25}).
We say that $\chi$ is \emph{compatible with equality} if, for every $(v_1,\dots,v_k),(w_1,\dots,w_k) \in V^k$ with $\chi(v_1,\dots,v_k) = \chi(w_1,\dots,w_k)$, and every $i,j \in [k]$ it holds that
\[v_i = v_j \iff w_i = w_j.\]
Also, $\chi$ is \emph{shufflable} if for every $(v_1,\dots,v_k), (w_1,\dots,w_k) \in V^k$ and every $\sigma\colon [k] \to [k]$ it holds that
\[\chi(v_1,\dots,v_k) = \chi(w_1,\dots,w_k) \implies \chi(v_{\sigma(1)},\dots,v_{\sigma(k)}) = \chi(w_{\sigma(1)},\dots,w_{\sigma(k)}).\] 
Note that $\sigma$ is not necessarily a permutation.

\begin{definition}
    Let $k \geq 2$ and let $V$ be a finite set.
    A \emph{$k$-refinement sequence over $V$} is a sequence $\bchi = (\chi_0,\dots,\chi_r)$ where $\chi_0,\dots,\chi_r\colon V^k \to C$ are colorings that are shufflable, compatible with equality, and $\chi_0 \succ \chi_1 \succ \dots \succ \chi_r$.
    We define $|\bchi| = r$ as the \emph{length} of the sequence.
\end{definition}

Observe that the length of a $k$-refinement sequence over $V$ is trivially at most $n^k-1$ (where $n \coloneqq |V|$), since the number of partition classes in $\pi(\chi_i)$ strictly increases in each step.

\subsection{The Weisfeiler-Leman Algorithm} \label{sec:WL:algo}

Next, we formally describe the WL algorithm.
Fix some $k \geq 2$ and let $\FA = (V,R_1^{\FA},\dots,R_m^{\FA})$ be a relational structure of arity at most $k$.
On input $\FA$, the $k$-WL produces a sequence of ($k$-dimensional) colorings $\chi_0^\FA,\dots,\chi_r^\FA\colon V^{k} \to C$.

For the initial coloring $\chi_0^\FA$, each $k$-tuple is colored by the isomorphism type of the underlying induced substructure.
More formally, for tuples $(v_1, \dots, v_k),(w_1, \dots, w_k) \in V^{k}$, we have $\chi_0^\FA(v_1, \dots, v_k) = \chi_0^\FA(w_1, \dots, w_k)$ if and only if, for all $i,j \in [k]$, it holds that $v_i = v_j \iff w_i = w_j$ and the mapping defined via $v_i \mapsto w_i$ (which is well-defined by the first property) is an isomorphism from $\FA[v_1,\dots,v_k]$ to $\FA[w_1,\dots,w_k]$.

Now, let $i \geq 0$ and suppose the coloring $\chi_i^\FA\colon V^k \to C$ is already defined.
For $\bv = (v_1,\dots,v_k) \in V^k$ we set
\begin{equation}
    \chi_{i+1}^\FA(\bv) \coloneqq \Big(\chi_i^\FA(\bv), \; \mulSet{(\chi_i^\FA(\bv[w/1]), \dots, \chi_i^\FA(\bv[w/k])) ~|~ w \in V}\Big),
\end{equation}
where $\bv[w/i] \coloneqq (v_1, \dots, v_{i-1}, w, v_{i+1}, \dots, v_k)$, that is the tuple that is obtained by replacing the $i$th entry with $w$.
The notation $\mulSet{\dots}$ denotes a multiset.
Note that each color $\chi_{i+1}^\FA(\bv)$ is a pair of the form $(c, \mulSet{\dots})$ where $c \in C$ and $\mulSet{\dots}$ is a multiset of tuples of length $k$ of colors in $C$.
We call $\mulSet{\dots}$ the \emph{multiset part of $\chi_{i+1}^\FA(\bv)$}.

Observe that $\chi_{i}^\FA \succeq \chi_{i+1}^\FA$ by definition.
Let $r \geq 0$ be minimal such that $\chi_{r}^\FA \succeq \chi_{r+1}^\FA$.
At this point, the refinement procedure terminates and we obtain the sequence $(\chi_0^\FA,\dots,\chi_r^\FA)$.
The following lemma is not difficult to prove.

\begin{lemma}[{\cite[Page 6-9]{GroheLN25}}]\label{lemma:k:WL:refinement}
    Let $k \geq 2$ and $\FA$ a relational structure of arity at most.
    Then the sequence $(\chi_0^\FA,\dots,\chi_r^\FA)$ produced by $k$-WL on input $\FA$ is a $k$-refinement sequence.
\end{lemma}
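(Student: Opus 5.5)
We need to prove that the $k$-WL sequence $(\chi_0^\FA,\dots,\chi_r^\FA)$ is a $k$-refinement sequence. That means checking four things: each $\chi_i^\FA$ is compatible with equality, each is shufflable, $\chi_i^\FA \succeq \chi_{i+1}^\FA$, and the refinement is strict up to index $r$.

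\textbf{Refinement and strictness.} The relation $\chi_i^\FA \succeq \chi_{i+1}^\FA$ is immediate, because $\chi_{i+1}^\FA(\bv)$ contains $\chi_i^\FA(\bv)$ as its first component. Strictness $\chi_i^\FA \succ \chi_{i+1}^\FA$ for $i<r$ follows from the choice of $r$ as the first index where the partition stops changing. Here I read the stopping condition as $\chi_r^\FA \equiv \chi_{r+1}^\FA$, equivalently $\chi_r^\FA \preceq \chi_{r+1}^\FA$. Hence for $i<r$ we have $\chi_i^\FA \not\equiv \chi_{i+1}^\FA$.

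\textbf{Compatibility with equality.} For $i=0$ this is built into the definition of $\chi_0^\FA$. For $i>0$ it is inherited from $\chi_0^\FA$, since $\chi_0^\FA \succeq \chi_i^\FA$: equal colors under $\chi_i^\FA$ imply equal colors under $\chi_0^\FA$.

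\textbf{Shufflability.} This is the main work, and I would prove it by induction on $i$. Fix $\sigma\colon[k]\to[k]$ and write $\bv^\sigma \coloneqq (v_{\sigma(1)},\dots,v_{\sigma(k)})$.

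For $i=0$, suppose $\chi_0^\FA(\bv)=\chi_0^\FA(\bw)$. Then the map $v_j\mapsto w_j$ is a well-defined bijection and an isomorphism $\FA[\bv]\to\FA[\bw]$. Its restriction to the entries of $\bv^\sigma$ is an isomorphism $\FA[\bv^\sigma]\to\FA[\bw^\sigma]$, and the equality pattern of $\bv^\sigma$ matches that of $\bw^\sigma$. So $\chi_0^\FA(\bv^\sigma)=\chi_0^\FA(\bw^\sigma)$.

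For the inductive step, assume $\chi_i^\FA$ is shufflable and $\chi_{i+1}^\FA(\bv)=\chi_{i+1}^\FA(\bw)$. The first components give $\chi_i^\FA(\bv^\sigma)=\chi_i^\FA(\bw^\sigma)$ by induction. For the multiset parts there is a bijection $\pi\colon V\to V$ such that $\chi_i^\FA(\bv[x/j])=\chi_i^\FA(\bw[\pi(x)/j])$ for all $x\in V$ and all $j\in[k]$. We must show that the multisets for $\bv^\sigma$ and $\bw^\sigma$ agree under the same $\pi$, i.e.\ that $\chi_i^\FA(\bv^\sigma[x/j])=\chi_i^\FA(\bw^\sigma[\pi(x)/j])$.

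\textbf{The key trick.} This is the step I expect to be the obstacle: $\bv^\sigma[x/j]$ is generally not a shuffle of a single $\bv[x/\ell]$, because the entry at position $j$ of $\bv^\sigma$ is replaced while $v_{\sigma(j)}$ may still be needed elsewhere. I would handle it as follows.

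First, if some $\ell$ satisfies $\sigma(\ell')\ne\ell$ for all $\ell'\neq j$, i.e.\ $\ell$ is unused by the other positions of $\sigma$, then $\bv^\sigma[x/j]=(\bv[x/\ell])^{\sigma'}$, where $\sigma'$ agrees with $\sigma$ except $\sigma'(j)=\ell$. Applying the inductive hypothesis (shufflability of $\chi_i^\FA$) to $\chi_i^\FA(\bv[x/\ell])=\chi_i^\FA(\bw[\pi(x)/\ell])$ then gives the claim.

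Such an $\ell$ exists whenever $\sigma$ restricted to $[k]\setminus\{j\}$ is not surjective. That restriction has only $k-1$ values, so it can never be surjective onto $[k]$. Hence a free $\ell$ always exists, and the same $\pi$ works for every $j$ simultaneously. This shows the multiset parts for $\bv^\sigma$ and $\bw^\sigma$ coincide, which completes the induction and the proof.
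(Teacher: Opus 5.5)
Your proof is correct and complete. The paper gives no proof of this lemma and only cites \cite{GroheLN25}, so your argument supplies a self-contained one, and there is no in-paper route to compare it with.

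The parts fit together as follows:
\begin{itemize}
\item Refinement and strictness come from the first component of $\chi_{i+1}^\FA$ together with the minimality of $r$. You rightly read the paper's stopping rule $\chi_r^\FA \succeq \chi_{r+1}^\FA$, which as written holds trivially, as $\chi_r^\FA \equiv \chi_{r+1}^\FA$.
\item Compatibility with equality is inherited from $\chi_0^\FA$ via $\chi_0^\FA \succeq \chi_i^\FA$.
\item Shufflability follows by induction on $i$.
\end{itemize}

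You handle the delicate point of the inductive step correctly. For a non-injective $\sigma$, you choose $\ell \notin \sigma([k]\setminus\{j\})$, so that $\bv^\sigma[x/j] = (\bv[x/\ell])^{\sigma'}$ and the inductive hypothesis applies with $\sigma'$. You also use that the bijection $\pi$ from the multiset equality matches entire $k$-tuples of colors. Hence the same $\pi$ works for all positions $j$ simultaneously, which is what makes the multiset parts for $\bv^\sigma$ and $\bw^\sigma$ coincide.
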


Let $\bchi = (\chi_0,\dots,\chi_r)$ be a $k$-refinement sequence of a set $V$.
We say that $\bchi$ is a \emph{$k$-WL sequence over $V$} if there is a relational structure $\FA$ with $V(\FA) = V$ such that $\bchi = (\chi_0^\FA,\dots,\chi_r^{\FA})$ is the sequence produced by $k$-WL on input $\FA$.

\subsection{Dimension Two: From Colorings to Matrices}

In the following, we fix $k = 2$ and focus on the proof of \Cref{mtheo:2:dim}.
Our goal is to prove that every $2$-WL sequence $\bchi = (\chi_0,\dots,\chi_r)$ over a set $V$ has length at most $5(n-1)$, where $n \coloneqq |V|$.
Towards this end, we first translate each coloring $\chi_i$ into a set of matrices that span a subspace of $\FullMatrixAlg_n(\C)$.

Given a coloring $\chi \colon V^2 \to C$ and a color $c \in C$, we define the $n \times n$ matrix $\ColorMatrix{\chi, c}$ 
as the matrix with $(\ColorMatrix{\chi, c})_{i, j} = 1$ if $\chi(i, j) = c$ and  $(\ColorMatrix{\chi, c})_{i, j} = 0$ otherwise.
Further, we define $\ColorMatrix{\chi} \coloneqq \{\ColorMatrix{\chi, c} \mid c \in C \}$ and $\langle \chi \rangle = \langle \ColorMatrix{\chi} \rangle$.
The following property relates the colorings in a $2$-WL sequence to multiplications of matrices in $\FullMatrixAlg_n(\C)$.

\begin{definition}\label{def:mul}
    Let $V$ be a finite set of size $n \coloneqq |V|$. We say a $2$-refinement sequence $(\chi_0, \dots, \chi_r)$ over $V$ is \emph{multiplicative} if, 
    for every $i \in \{0, \dots, r-1\}$, we have $\ColorMatrix{\chi_{i}}^{\leq 2} \subseteq  \C \ColorMatrix{\chi_{i+1}}$.
\end{definition}

In order to utilize this property, we have to show that the 2-dimensional Weisfeiler-Leman algorithm produces multiplicative sequences.

\begin{lemma}\label{lem:WL:is:mul}
    Every $2$-WL sequence is multiplicative.
\end{lemma}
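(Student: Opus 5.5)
We must show: for a 2-WL sequence $(\chi_0,\dots,\chi_r)$ and each $i<r$, every product of at most two matrices from $\ColorMatrix{\chi_i}$ lies in $\C\ColorMatrix{\chi_{i+1}}$. Products of length $0$ and $1$ are handled first, and then the real case of two factors.

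For length $0$ we need $I_n\in\C\ColorMatrix{\chi_{i+1}}$. Since $\chi_{i+1}$ is compatible with equality (\cref{lemma:k:WL:refinement}), the diagonal is a union of color classes. Hence $I_n$ is the sum of $\ColorMatrix{\chi_{i+1},c}$ over the colors $c$ occurring on the diagonal. For length $1$ we use that $\chi_{i+1}$ refines $\chi_i$ (the color $\chi_{i+1}(\bv)$ contains $\chi_i(\bv)$ as its first component). Thus each class of $\chi_i$ is a disjoint union of classes of $\chi_{i+1}$, so $\ColorMatrix{\chi_i,c}=\sum_{d}\ColorMatrix{\chi_{i+1},d}$, the sum taken over the $\chi_{i+1}$-colors $d$ contained in class $c$.

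The main step is length two. Fix colors $c,c'$ of $\chi_i$. For $v,w\in V$ we have
\[(\ColorMatrix{\chi_i,c}\ColorMatrix{\chi_i,c'})_{v,w}=\bigl|\{x\in V \mid \chi_i(v,x)=c,\ \chi_i(x,w)=c'\}\bigr|.\]
I will show that this number depends only on $\chi_{i+1}(v,w)$. By the definition of $k$-WL with $k=2$, $\chi_{i+1}(v,w)$ determines the multiset $\mulSet{(\chi_i(x,w),\chi_i(v,x)) \mid x\in V}$, since $(v,w)[x/1]=(x,w)$ and $(v,w)[x/2]=(v,x)$. The count above is exactly the multiplicity of the pair $(c',c)$ in this multiset, so it is a function $\lambda_{c,c'}(d)$ of $d=\chi_{i+1}(v,w)$. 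It follows that
\[\ColorMatrix{\chi_i,c}\ColorMatrix{\chi_i,c'}=\sum_{d}\lambda_{c,c'}(d)\,\ColorMatrix{\chi_{i+1},d}\in\C\ColorMatrix{\chi_{i+1}}.\]

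No real obstacle arises. The only care needed is to note which coordinate substitution gives which factor of the product, and to recall that $\chi_{i+1}\preceq\chi_i$ and compatibility with equality come from \cref{lemma:k:WL:refinement}. The argument works verbatim for all $i\in\{0,\dots,r-1\}$, which proves the lemma.
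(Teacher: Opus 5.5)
Your proof is correct and uses the same argument as the paper: the paper's proof is \cref{lem:k:wl:mul:ref} specialized to $k=2$, which likewise shows that the $(v,w)$ entry of $M_{\chi_i,c}M_{\chi_i,c'}$ is the multiplicity of the pair $(c',c)$ in the multiset part of $\chi_{i+1}(v,w)$. Your separate treatment of products of length $0$ and $1$ covers cases that the paper's proof passes over silently, and your pair $(c',c)$ has the correct order, whereas the paper's $k=2$ definition of $C_{b,a}$ has a small typo.
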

\begin{proof}
    The statement is a special case of \cref{lem:k:wl:mul:ref}.
\end{proof}

To use \cref{theo:direct:sum}, we also need that our algebras to contain the identity matrix.
This is ensured by the following lemma.

\begin{lemma}\label{lem:identity:matrix}
    Let $\chi \colon V^2 \to C$ be a coloring that is compatible with equality.
    Then $I_n \in \C M_{\chi}$.
\end{lemma}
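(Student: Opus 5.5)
The plan is to show that the diagonal pairs $\{(v,v) \mid v \in V\}$ form a union of color classes of $\chi$. The identity matrix is then a sum of color matrices.

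First, observe that compatibility with equality, applied with $k = 2$ and indices $i = 1$, $j = 2$, gives the following. Whenever $\chi(v_1,v_2) = \chi(w_1,w_2)$, we have $v_1 = v_2 \iff w_1 = w_2$. Hence each color $c \in C$ falls into one of two cases. Either every pair of color $c$ is diagonal, or no pair of color $c$ is diagonal. Let $D \coloneqq \{\chi(v,v) \mid v \in V\}$ be the set of colors occurring on the diagonal.

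Second, I claim that
\[I_n = \sum_{c \in D} \ColorMatrix{\chi, c}.\]
To verify this, compare entries. For $p \neq q$, the color $\chi(p,q)$ is not in $D$, because that color contains a non-diagonal pair. So every summand has entry $0$ at $(p,q)$. For $p = q$, exactly one summand has entry $1$, namely the one for $c = \chi(p,p)$; all others vanish. Thus the sum has $1$ on the diagonal and $0$ elsewhere. Since $D$ is finite, this is a finite linear combination of elements of $\ColorMatrix{\chi}$, so $I_n \in \C \ColorMatrix{\chi}$.

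There is no real obstacle here. The only point needing care is that the $\{0,1\}$-matrices $\ColorMatrix{\chi,c}$ have pairwise disjoint supports, so no entry is counted twice. This holds because distinct colors give disjoint sets of pairs.
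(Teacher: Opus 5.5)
Your proposal is correct and follows essentially the same route as the paper: compatibility with equality (for $i=1$, $j=2$) forces every color occurring on the diagonal to occur only on the diagonal, so $I_n$ is the sum of the color matrices $\ColorMatrix{\chi,c}$ over the diagonal colors. Your explicit use of the set $D$ and of the disjointness of supports just makes precise the sum that the paper writes somewhat informally as $\sum M_{\chi,\chi(a,a)}$.
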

\begin{proof}
    Let $a \in V$, $c \coloneqq \chi(a, a)$, and $M \coloneqq M_{\chi, c}$. We show that $M$ is a diagonal 0/1-matrix with $M_{a, a} = 1$. Note that this implies the statement since $I_n$ can then be written as a sum of diagonal matrices $\sum M_{\chi, \chi(a, a)}$.

    To show that $M$ is a diagonal matrix, pick $u,v \in V$ such that $M_{u,v} = 1$.
    Hence, $\chi(u, v) = c = \chi(a, a)$.
    Now, compatibility with equality implies that $u = v$ since $a = a$.
\end{proof}

Also, we need to show we obtain \cstar-algebras which, by Corollary \ref{theo:semisimple}, amounts to showing that the sets $M_\chi$ are closed under conjugate transposition.

\begin{lemma}\label{lem:mul:semisimple}
    Let $\chi \colon V^2 \to C$ be a coloring that is a shufflable.
    Then $\ColorMatrix{\chi}$ is closed under conjugate transposition and $\langle \chi \rangle$ is a \cstar-algebra.
\end{lemma}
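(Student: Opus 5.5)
The plan is to apply shufflability with the specific map $\sigma\colon [2] \to [2]$ given by $\sigma(1) = 2$, $\sigma(2) = 1$ (the swap). This shows that transposing a color class yields another color class. Then I will observe that every matrix in $\ColorMatrix{\chi}$ is real, so conjugate transposition coincides with transposition. The algebra statement will follow from \cref{theo:semisimple}.

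\emph{Step 1: $\ColorMatrix{\chi}$ is closed under conjugate transposition.} Fix a color $c \in C$ and consider $M \coloneqq \ColorMatrix{\chi,c}$. Since $M$ is a $0/1$-matrix, $M^\ast = M^\trans$.

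If $c$ does not occur in the image of $\chi$, then $M$ is the zero matrix. In that case $M^\ast = M \in \ColorMatrix{\chi}$ and there is nothing to show.

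Otherwise, pick $(u,v)$ with $\chi(u,v) = c$ and set $c' \coloneqq \chi(v,u)$. I claim that $M^\trans = \ColorMatrix{\chi,c'}$, i.e., $\chi(x,y) = c \iff \chi(y,x) = c'$.
\begin{itemize}
\item[$(\Rightarrow)$] If $\chi(x,y) = c = \chi(u,v)$, then shufflability with the swap gives $\chi(y,x) = \chi(v,u) = c'$.
\item[$(\Leftarrow)$] If $\chi(y,x) = c' = \chi(v,u)$, then applying the swap again gives $\chi(x,y) = \chi(u,v) = c$.
\end{itemize}
Since $(M^\trans)_{x,y} = M_{y,x}$, this yields $M^\ast = M^\trans = \ColorMatrix{\chi,c'} \in \ColorMatrix{\chi}$.

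\emph{Step 2: $\langle \chi \rangle$ is a \cstar-algebra.} Since $\ColorMatrix{\chi}^\ast = \ColorMatrix{\chi}$, \cref{theo:semisimple} applies directly and shows that $\langle \chi \rangle = \langle \ColorMatrix{\chi} \rangle$ is a \cstar-algebra.

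There is no real obstacle here. The only points needing care are the convention for $\ColorMatrix{\chi,c}$ when $c$ is unused (it is the zero matrix, which is self-adjoint) and checking that the converse direction of Step 1 also follows from the swap, which holds because the swap is an involution.
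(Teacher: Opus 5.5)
Your proposal is correct and follows essentially the same argument as the paper: use shufflability with the swap to show that $\ColorMatrix{\chi,c}^\trans = \ColorMatrix{\chi,\chi(v,u)}$, then invoke \cref{theo:semisimple}. The only difference is cosmetic. The paper splits into a symmetric and a non-symmetric case, whereas your argument handles every nonempty color class uniformly and treats unused colors (the zero matrix) separately.
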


\begin{proof}
    We start by showing that when $\chi$ is shufflable
    then $\ColorMatrix{\chi}$ is closed under conjugate transposition.
    For each $M \in \ColorMatrix{\chi}$ there is a color $c$ with $M = \ColorMatrix{\chi, c}$. Thus, $M$ is a $0/1$-matrix. If $M$ is symmetric then $M^\trans = M \in \ColorMatrix{\chi}$. Otherwise, $M$ is not symmetric and we have to show that $M^\trans \in \ColorMatrix{\chi}$. If $M$ is not symmetric, then there are indices $u, v$
    such that $M_{u, v} = 1$ (i.e., $\chi(u, v) = c$) and $M_{v, u} = 0$. Let $c' = \chi(v, u)$ and $M' \coloneqq \ColorMatrix{\chi, c'}$, we show that $M' = M^\trans$.  

    Let $a, b$ be two indices. If $M_{a,  b} = 1$ then $\chi(a, b) = \chi(u, v) = c$.
    Hence, shufflablity yields $\chi(b, a) = \chi(v, u) = c'$ which implies $M'_{b, a} = 1$.
    Next, if $M'_{b, a} = 1$ then  $\chi(b, a) = \chi(v, u) = c'$. Hence, shufflablity yields
    $\chi(a, b) = \chi(u, v) = c$ which implies $M_{a, b} = 1$.
    Thus, $M_{a,b} = 1$ if and only if $M'_{b,a} = 1$ which proves that $M' = M^\trans$ and shows that $\ColorMatrix{\chi}$ is closed under conjugate transposition.
    Now, \cref{theo:semisimple} implies that $\langle \chi \rangle$ is a \cstar-algebra. 
\end{proof}

Each set of matrices $S$ over the index set $V^2$ defines a partition $\pi(S)$ on $V^2$ via the equivalence relation $(u_1, v_1) \sim_S (u_2, v_2)$ if and only if $M_{u_1, v_1} = M_{u_2, v_2}$ for all $M \in S$. Observe that 
\begin{align}\label{eq:pi}
    \pi(\chi) = \pi(M_{\chi}) \quad \text{for all ($2$-dimensional) colorings $\chi$.}
\end{align}
Using this, we obtain our next result, which can be used to show that a sequence is stabilizing by simply comparing the set of matrices $M_{\chi}$ with $M_{\chi'}$. 

\begin{lemma}\label{lem:algera:matrix:sub}
    Let $\chi,\chi' \colon V^2 \to C$ be two colorings over $V$ such that $\chi \succeq \chi'$.
    If $\ColorMatrix{\chi'} \subseteq \C \ColorMatrix{\chi}$, then $\chi \equiv \chi'$.
\end{lemma}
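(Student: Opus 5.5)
By \eqref{eq:pi}, it suffices to show that $\pi(\ColorMatrix{\chi'}) = \pi(\ColorMatrix{\chi})$, since then $\pi(\chi') = \pi(\chi)$. The assumption $\chi \succeq \chi'$ already gives that $\pi(\chi')$ is finer than $\pi(\chi)$. So the plan is to prove the reverse refinement: every class of $\pi(\chi)$ lies inside a single class of $\pi(\chi')$.

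To do this, take two pairs $(u_1,v_1)$ and $(u_2,v_2)$ in the same class of $\pi(\chi)$, that is, $\chi(u_1,v_1) = \chi(u_2,v_2)$. Then every $M \in \ColorMatrix{\chi}$ satisfies $M_{u_1,v_1} = M_{u_2,v_2}$. Evaluation at a fixed entry is linear, so this equality extends to every linear combination. Hence every $N \in \C\ColorMatrix{\chi}$ also satisfies $N_{u_1,v_1} = N_{u_2,v_2}$.

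By hypothesis $\ColorMatrix{\chi'} \subseteq \C\ColorMatrix{\chi}$, so every $M' \in \ColorMatrix{\chi'}$ has $M'_{u_1,v_1} = M'_{u_2,v_2}$. This means $(u_1,v_1) \sim_{\ColorMatrix{\chi'}} (u_2,v_2)$. By \eqref{eq:pi} the two pairs then lie in the same class of $\pi(\chi')$. Therefore $\chi' \succeq \chi$, and together with $\chi \succeq \chi'$ we obtain $\chi \equiv \chi'$.

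A more concrete version of the same step also works. Take the indicator matrix $M' = \ColorMatrix{\chi',\chi'(u_1,v_1)}$. Its $(u_1,v_1)$-entry is $1$, so its $(u_2,v_2)$-entry is $1$ as well, which gives $\chi'(u_2,v_2) = \chi'(u_1,v_1)$ directly. None of these steps presents a real obstacle. The only thing to get right is that linearity of entry evaluation carries the equal-entries property from $\ColorMatrix{\chi}$ to its span. In fact the assumption $\chi \succeq \chi'$ is used only for the direction that is already given.
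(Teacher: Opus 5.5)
Your proof is correct and is essentially the paper's argument. The paper argues by contradiction: it takes two pairs with equal $\chi$-color but different $\chi'$-color, writes the separating matrix $A \in \ColorMatrix{\chi'}$ as a linear combination of matrices $B^{(i)} \in \ColorMatrix{\chi}$, and concludes that some $B^{(i)}$ already separates the pairs. That is your linearity-of-entry-evaluation step, run in the contrapositive direction.
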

\begin{proof}
    Assume that the statement is false and therefore $\chi \succ \chi'$ (i.e. $\chi'$ is strictly finer than $\chi$). By definition there are
    $u_1, v_1, u_2, v_2$ such that $\chi(u_1, v_1) = \chi (u_2, v_2)$ and $\chi'(u_1, v_1)\neq \chi'(u_2, v_2)$.

    Thus, \cref{eq:pi} implies that there is matrix $A \in \ColorMatrix{\chi'}$
    with $A_{u_1, v_1} \neq A_{u_2, v_2}$. Further, since $\ColorMatrix{\chi'} \subseteq \C \ColorMatrix{\chi}$, we obtain that 
    \[A = \sum_{i = 1}^s \lambda_i B^{(i)}, \text{ where } B^{(i)} \in \ColorMatrix{\chi}.\]
    However, since $A_{u_1, v_1} \neq A_{u_2, v_2}$ there is an $i \in [s]$ 
    with $B^{(i)}_{u_1, v_1} \neq B^{(i)}_{u_2, v_2}$. Hence,  \cref{eq:pi}
    yields  $\chi(u_1, v_1) \neq \chi(u_2, v_2)$ 
    which is a contradiction. Thus, $\chi \equiv \chi'$.
\end{proof}

\section{A Linear Bound on the Iteration Number}

This section contains the core technical contribution of the paper.
We prove \Cref{thm:upper-bound-multiplicative} and deduce \Cref{mtheo:2:dim}.

\subsection{Bounding the Iteration Number via $\Alg$-Sequences}

We start by introducing the notion of an \emph{$\Alg$-sequence} which is used to bound the iteration number.

\begin{definition}
    Let $\Alg \subseteq \FullMatrixAlg_n(\C)$ be a \cstar-algebra with $I_n \in \Alg$.
    We say that $\{I_n\} = M_0 \subseteq \dots \subseteq M_\ell \subseteq \Alg$ is an \emph{$\Alg$-sequence (of length $\ell$)} if 
    \begin{enumerate}[label = (\alph*)]
        \item $M_i$ is closed under conjugate transposition,
        \item $\C M_{i-1} \subsetneq \C M_i$, and
        \item $M_{i-1}^{\leq 2} \subseteq \C M_i$
    \end{enumerate}
    for all $i \in [\ell]$.
    We define $\Lambda(\Alg)$ to be the maximal integer $\ell \geq 0$ such that there is an $\Alg$-sequence $\{I_n\} = M_0 \subseteq \dots \subseteq M_\ell \subseteq \Alg$ of length $\ell$.
\end{definition}

Below, we collect some useful basic properties about $\Alg$-sequences.

\begin{observation}\label{obs:power}
    Let $M_0 \subseteq \dots \subseteq M_\ell$ be an $\Alg$-sequence and $s, r$ integers with $s + r \leq \ell$. Then $\C M_{s}^{\leq 2^r} \subseteq \C M_{s+r}$.
\end{observation}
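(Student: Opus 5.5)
We argue by induction on $r$, proving the stronger statement $M_s^{\leq 2^r} \subseteq \C M_{s+r}$; taking the linear span of both sides then gives the claim.

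For $r = 0$ we need $M_s^{\leq 1} \subseteq \C M_s$. This holds because $M_s^{\leq 1}$ consists of the elements of $M_s$ together with the empty product $I_n$. Since $I_n \in M_0 \subseteq M_s$, the empty product causes no trouble.

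For the inductive step, suppose $M_s^{\leq 2^r} \subseteq \C M_{s+r}$ with $s + r + 1 \leq \ell$. Every product of at most $2^{r+1}$ matrices from $M_s$ can be split as $AB$, where $A$ and $B$ are each products of at most $2^r$ matrices from $M_s$. By the inductive hypothesis, $A = \sum_i \lambda_i X_i$ and $B = \sum_j \mu_j Y_j$ with $X_i, Y_j \in M_{s+r}$. Hence
\[AB = \sum_{i,j} \lambda_i \mu_j X_i Y_j.\]
Each product $X_i Y_j$ lies in $M_{s+r}^{\leq 2}$, and property (c) of an $\Alg$-sequence gives $M_{s+r}^{\leq 2} \subseteq \C M_{s+r+1}$. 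So $AB \in \C M_{s+r+1}$, which completes the induction.

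The only points needing care are that products may be split as $2^r + 2^r$ with possibly empty parts, and that property (c) bounds products of elements of the set $M_{s+r}$ rather than of its span. The second point is handled by bilinearity of matrix multiplication, as in the expansion above. There is no real obstacle beyond this.
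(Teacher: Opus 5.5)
Your proof is correct and follows the paper's route: the paper's one-line proof just iterates property (c) together with the fact that products of elements of $\C X$ lie in $\C(X^{\leq k})$. Your induction on $r$, with the explicit bilinear expansion, is the detailed version of that argument. It also correctly uses only the inclusion $(\C X)^{\leq 2} \subseteq \C(X^{\leq 2})$, which is all that is needed, whereas the paper states this as an equality.
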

\begin{proof}
    This is a direct consequence of iterating the rule $M_{i-1}^{\leq 2} \subseteq \C M_i$ and  $(\C X)^{\leq k } = \C (X^{\leq k})$.
\end{proof}

\begin{observation}\label{obs:max:lenght}
    Let $M_0 \subseteq \dots \subseteq M_\ell$ be an $\Alg$-sequence of maximum length.
    Then $\C M_\ell = \Alg$.
\end{observation}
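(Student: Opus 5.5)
We argue by contradiction via an extension argument. Let $M_0 \subseteq \dots \subseteq M_\ell$ be an $\Alg$-sequence of maximum length $\ell = \Lambda(\Alg)$, and suppose $\C M_\ell \subsetneq \Alg$. The goal is to build an $\Alg$-sequence of length $\ell+1$ by appending one more set $M_{\ell+1}$, which contradicts maximality.

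The natural choice is $M_{\ell+1} \coloneqq \Alg$, viewed as a set of matrices (alternatively, any finite generating set of the vector space $\Alg$ that is closed under conjugate transposition, such as a basis together with the conjugate transposes of its elements). We check the three conditions for index $\ell+1$ in turn.
\begin{enumerate}[label = (\alph*)]
    \item Closure under conjugate transposition holds because $\Alg$ is a \cstar-algebra, hence closed under $\ast$.
    \item We have $\C M_\ell \subsetneq \Alg = \C M_{\ell+1}$ by assumption. The inclusion $M_\ell \subseteq M_{\ell+1}$ holds since $M_\ell \subseteq \Alg$.
    \item The inclusion $M_\ell^{\leq 2} \subseteq \Alg = \C M_{\ell+1}$ holds since $M_\ell \subseteq \Alg$ and $\Alg$ is closed under products. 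The empty product is $I_n$, which lies in $\Alg$ by hypothesis.
\end{enumerate}
Finally, $M_{\ell+1} \subseteq \Alg$ holds trivially.

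Hence $M_0 \subseteq \dots \subseteq M_\ell \subseteq M_{\ell+1}$ is an $\Alg$-sequence of length $\ell+1$, contradicting the maximality of $\ell$. Therefore $\C M_\ell = \Alg$.

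There is no real obstacle here. The only subtlety is whether the definition implicitly requires the sets $M_i$ to be finite. If it does, we replace $\Alg$ by a finite spanning set closed under $\ast$, for example a basis $B$ of $\Alg$ together with $B^\ast$. This set still spans $\Alg$ and still lies inside $\Alg$, so every step above goes through unchanged.
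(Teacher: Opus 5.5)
Your proposal is correct and is essentially the paper's own argument: the paper likewise appends $\Alg$ itself as $M_{\ell+1}$ and notes that this gives a longer $\Alg$-sequence. One small slip in your finite-set remark: taking $M_{\ell+1} = B \cup B^\ast$ would break the required inclusion $M_\ell \subseteq M_{\ell+1}$, so you would need $M_{\ell+1} = M_\ell \cup B \cup B^\ast$ instead; this is moot anyway, since the definition does not require the $M_i$ to be finite.
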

\begin{proof}
    If $\C M_\ell \subsetneq \Alg$, then $M_0 \subseteq \dots \subseteq M_\ell \subseteq \Alg$ is an even longer $\Alg$-sequence. 
\end{proof}

\begin{observation}\label{obs:A:seq:U}
    Let $\Alg \subseteq \FullMatrixAlg_n(\C)$ be a \cstar-algebra with $I_n \in \Alg$, and let $U \in \FullMatrixAlg_n(\C)$ be a unitary matrix.
    If $(M_0,\dots,M_\ell)$ is an $\Alg$-sequence, then $(UM_0 U^\ast,\dots,UM_\ell U^\ast)$ is a $U \Alg U^\ast$-sequence.
    In particular, $\Lambda(\Alg) = \Lambda(U \Alg U^\ast)$.
\end{observation}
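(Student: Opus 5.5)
We verify the three defining properties of an $U\Alg U^\ast$-sequence for the conjugated sets $M_i' \coloneqq U M_i U^\ast$, using only that $X \mapsto UXU^\ast$ is a $\ast$-automorphism of $\FullMatrixAlg_n(\C)$. Then we deduce the equality of $\Lambda$ by applying the same argument to $U^\ast$.

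First, the basic requirements. Since $U$ is unitary, $U I_n U^\ast = I_n$, so $M_0' = \{I_n\}$. Conjugation preserves inclusions, so $M_0' \subseteq \dots \subseteq M_\ell' \subseteq U\Alg U^\ast$. Moreover, $U\Alg U^\ast$ is a subalgebra containing $I_n$, and it is closed under conjugate transposition because $(UAU^\ast)^\ast = UA^\ast U^\ast$. By \cref{theo:subalgebra:cstar} it is therefore a \cstar-algebra, so the notion of a $U\Alg U^\ast$-sequence makes sense.

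Next, the three properties, each of which transfers directly.
\begin{enumerate}[label = (\alph*)]
    \item Closure under conjugate transposition follows from the identity $(UXU^\ast)^\ast = UX^\ast U^\ast$.
    \item Conjugation $\varphi(X) = UXU^\ast$ is a linear bijection, so $\C M_i' = \varphi(\C M_i)$. A bijection preserves strict inclusions, hence $\C M_{i-1}' \subsetneq \C M_i'$.
    \item Since $\varphi$ is multiplicative (using $U^\ast U = I_n$), we get $(M_{i-1}')^{\leq 2} = \varphi(M_{i-1}^{\leq 2})$. This is contained in $\varphi(\C M_i) = \C M_i'$. Note that the empty product is $I_n$, and it is fixed by $\varphi$.
\end{enumerate}

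Finally, the equality of $\Lambda$. The argument above turns any $\Alg$-sequence into a $U\Alg U^\ast$-sequence of the same length, so $\Lambda(\Alg) \leq \Lambda(U\Alg U^\ast)$. Applying the same argument with the unitary $U^\ast$ to the algebra $U\Alg U^\ast$ gives the reverse inequality, since $U^\ast(U\Alg U^\ast)U = \Alg$.

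No step here is a real obstacle; the only point needing a little care is checking that $U\Alg U^\ast$ again satisfies the hypotheses, which was handled above.
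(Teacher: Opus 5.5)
Your proof is correct and follows the paper's argument. Both verify $UM_0U^\ast=\{I_n\}$ and transfer the three defining properties via $(UXU^\ast)^\ast = UX^\ast U^\ast$, $UAU^\ast\, UBU^\ast = U(AB)U^\ast$, and the fact that $X \mapsto UXU^\ast$ is a linear bijection. You also spell out two points the paper leaves implicit: that $U\Alg U^\ast$ is again a \cstar-algebra containing $I_n$, and that the reverse inequality for $\Lambda$ follows by applying the same argument with $U^\ast$.
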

\begin{proof}
    First note that $U M_0 U^\ast = \{U I_n U^\ast\} = \{I_n\}$. Next, observe that for every matrix $X$ we obtain $(U X U^\ast)^\ast = U X^\ast U^\ast$. Thus, if $M_i$ is closed under conjugate transposition then $U M_i U^\ast$ is also closed under conjugate transposition. Next, note that $\C M_{i-1} \subsetneq \C M_i$ if and only if $\C (U M_{i-1}U^\ast)\subsetneq \C (U M_i U^\ast)$ since the map $X \mapsto U X U^\ast$ is a $\ast$-isomorphism. Lastly, observe that for all $A, B \in \C M_{i}$ we have $U A U^\ast U B U^\ast = U(A B) U^\ast$ which immediately yields that $M_{i-1}^{\leq 2} \subseteq \C M_i$ implies $(U M_{i-1}U^\ast)^{\leq 2} \subseteq \C (UM_iU^\ast)$. Hence, $(UM_0 U^\ast,\dots,UM_\ell U^\ast)$ is a $U \Alg U^\ast$-sequence. 
\end{proof}

\begin{lemma}\label{lem:max:lenght:algebra}
    Let $M_0 \subseteq \dots \subseteq M_\ell$ be an $\Alg$-sequence of maximum length and $s \in [\ell-1]$ be a position with $\langle M_{s}\rangle \subsetneq \langle M_{s+1} \rangle$ then $\C M_{s} = \langle M_{s}\rangle$. 
\end{lemma}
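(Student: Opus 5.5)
We argue by contradiction: assuming $\C M_s \subsetneq \langle M_s\rangle$, we construct an $\Alg$-sequence that is strictly longer than $\ell$, contradicting maximality. The natural idea is to insert additional steps between $M_s$ and $M_{s+1}$ that exploit the gap between $\C M_s$ and the algebra it generates.

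Concretely, we define a new sequence by keeping $M_0,\dots,M_s$ and then inserting $M_s' \coloneqq M_s \cup M_s^{\leq 2}$ before continuing. The set $M_s'$ is closed under conjugate transposition by \cref{theo:semisimple}, since $M_s$ is. It also satisfies $M_s^{\leq 2} \subseteq \C M_s'$ trivially. The delicate point is checking that the old step into $M_{s+1}$ remains valid, i.e.\ that $(M_s')^{\leq 2} = M_s^{\leq 4} \subseteq \C M_{s+1}$. This does not follow from the axioms, so we also modify the tail. We replace $M_{j}$ by $M_j' \coloneqq M_j \cup M_{j-1}'^{\leq 2}$ inductively for $j > s$; equivalently, using \cref{obs:power}, we take $M_{s+t}' \coloneqq M_{s+t} \cup M_s^{\leq 2^{t+1}}$. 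Each $M_j'$ is closed under $\ast$, and condition (c) holds by construction, because $(A\cup B)^{\leq 2}$ lies in the span of $A^{\leq 2}$, $B^{\leq 2}$ and mixed products. Mixed products such as $M_{j-1}\cdot M_s^{\leq 2^{t}}$ lie in $\C M_j\cdot \C M_j$, which is not obviously inside $\C M_j'$. To avoid this, we instead set $M_j' \coloneqq M_j \cup M_{j-1}'^{\leq 2}$ directly, so that (c) is automatic. We must then verify that everything stays inside $\Alg$, which holds since $\Alg$ is an algebra.

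Next we check strict growth. We have $\C M_s \subsetneq \C M_s'$, because otherwise $\C M_s$ would be closed under multiplication, giving $\C M_s=\langle M_s\rangle$. Then, since $M_{s+1}\subseteq M_{s+1}'$ and $M_s^{\leq 2}\subseteq \C M_{s+1}$, we get $\C M_s' \subseteq \C M_{s+1} \subseteq \C M_{s+1}'$. For strictness at this step we use the hypothesis $\langle M_s\rangle \subsetneq \langle M_{s+1}\rangle$: we have $\C M_s' \subseteq \langle M_s\rangle$, while $\C M_{s+1}'$ contains $M_{s+1}$ and hence is not inside $\langle M_s\rangle$. For later steps $j>s+1$, we have $\C M_{j-1}' \subsetneq \C M_j'$ whenever the $M_j'$ are still distinct. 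If at some point $\C M_{j-1}' = \C M_j'$, we simply delete duplicates. The key claim is that we lose at most what we gained, and we handle this by truncation: stop the new sequence once its span equals $\Alg$, and otherwise append.

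The main obstacle is precisely this counting. We must guarantee that the modified tail does not collapse by more steps than the single inserted step. A cleaner route I would try first avoids modifying the tail altogether: show $\C M_{s+1}' = \C M_{s+1}$ can be arranged by instead choosing $M_{s+1}' \coloneqq M_{s+1}\cup M_s'^{\leq 2}$ and proving $\C M_{j}' \subseteq \C M_{j+1}$ for all $j\ge s$. This holds because $M_{s}'^{\leq 2}\subseteq M_s^{\leq 4}\subseteq \C M_{s+2}$ by \cref{obs:power}, and inductively the modified sets lag exactly one index behind the originals. Then the sequence $M_0,\dots,M_s,M_s',M_{s+1}',\dots,M_{\ell-1}', M_\ell$ interleaves strictly increasing spans sandwiched between consecutive originals. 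This gives length at least $\ell+1$ once we also confirm that the spans strictly increase, which we would do by discarding any repeated sets and then arguing that at least $\ell+1$ distinct spans remain, using $\C M_s \subsetneq \C M_s' \subsetneq \C M_{s+1}'$ together with the original strict chain.
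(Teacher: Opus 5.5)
There is a genuine gap, and it sits in your final counting step.

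\textbf{The counting is off by one.} The sequence $M_0,\dots,M_s,M_s',M_{s+1}',\dots,M_{\ell-1}',M_\ell$ has $\ell+2$ entries. To reach length $\ell+1$ you therefore need \emph{every} consecutive pair of spans to be distinct. Having ``at least $\ell+1$ distinct spans'' only gives length $\ell$.

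\textbf{Collisions really occur.} The sandwich $\C M_{j-1}'\subseteq \C M_j\subseteq \C M_j'$ gives no strictness. Suppose $\C M_{j-1}'=\C M_j'$ for some $j\geq s+2$. Then the sandwich forces $\C M_j'=\C M_j$. Since $(\C M_t)^{\leq 2}\subseteq \C M_{t+1}$, you get inductively $\C M_t'=\C M_t$ for all $t\geq j$. So a collision costs exactly the one step you gained. The same happens if $\C M_{\ell-1}'=\C M_\ell$.

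\textbf{A concrete instance.} Let $\Alg$ be the diagonal $6\times 6$ matrices, $x=\mathrm{diag}(1,2,3,4,5,5)$ and $y=\mathrm{diag}(0,0,0,0,0,1)$. Take $M_0=\{I_6\}$, $M_1=\{I_6,x\}$, $M_2=\{I_6,x,x^2,y\}$ and $M_3=M_2^{\leq 2}$. The spans have dimensions $1,2,4,6$. The position $s=1$ meets your assumptions:
\begin{itemize}
\item $\dim \C M_1=2$;
\item $\langle M_1\rangle$ is the $5$-dimensional algebra of diagonal matrices with equal last two entries;
\item $\langle M_2\rangle=\Alg$.
\end{itemize}
Your sequence $M_0,M_1,M_1',M_2',M_3$ has span dimensions $1,2,3,6,6$, because $\C M_2'$ already contains $I_6,x,\dots,x^4,y$. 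This example is not of maximum length. However, your argument uses maximality only to contradict the length of the new sequence, so the example shows the construction itself cannot deliver the extra step.

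\textbf{The missing idea.} This is the route the paper takes. Keep the inserted steps inside $\langle M_s\rangle$ until they form an algebra, and do not modify the tail at all, only skip part of it.
\begin{itemize}
\item Let $V_j\coloneqq \C M_s^{\leq 2^j}$ and let $r\geq 1$ be minimal with $V_r=\langle M_s\rangle$.
\item Consider $M_0,\dots,M_s,V_1,\dots,V_r,\C M_{s+r},\dots,\C M_\ell$. It replaces the $r-1$ sets $M_{s+1},\dots,M_{s+r-1}$ by $r$ sets. If $s+r>\ell$, then $M_0,\dots,M_s,V_1,\dots,V_r$ is already too long.
\item The $V_j$ increase strictly by minimality of $r$.
\item Condition (c) at the junction is automatic, because $V_r$ is an algebra and $V_r\subseteq \C M_{s+r}$ by \cref{obs:power}.
\item Strictness at the junction is exactly where the hypothesis $\langle M_s\rangle\subsetneq\langle M_{s+1}\rangle$ enters: $M_{s+1}\subseteq \C M_{s+r}$, but $M_{s+1}\not\subseteq \langle M_s\rangle$.
\end{itemize}
In the example above this gives span dimensions $1,2,3,5,6$, i.e., length $4>3$.
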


\begin{proof}
    Assume that $s \in [\ell]$ is a position with $\langle M_{s}\rangle \subsetneq \langle M_{s+1} \rangle$ but $\C M_{s} \subsetneq \langle M_{s}\rangle$. We show this assumption leads to a $\Alg$-sequence of length $\ell +1 $ which is not possible since our sequence is of maximum length.

    For each $j$ we define $V_j = \C M_{s}^{\leq 2^j}$. Let $r$ be the smallest value with $V_r = \langle M_{s}\rangle$. Note that this value is at least $1$ since otherwise $\langle M_s \rangle = \C M_s$. We show that 
    \[M_0, \dots M_{s},  V_1, \dots V_r, \C M_{s+r}, \dots, \C M_\ell \]
    is an $\Alg$-sequence of length $s + r + (\ell - (s+r) + 1) = \ell + 1$. So, by re-index the above sequence we obtain a new sequence $N_0, \dots,  N_{\ell +1}$.
    Next, observe that $\{I_n\} = N_0$, $N_{\ell}= \C M_\ell \subseteq \Alg$ and that each $N_i$ is closed under conjugate transposition. We now consider a case distinction on $i \in [\ell + 1]$ to show $N_{i-1} \subseteq N_i$, $\C N_{i-1} \subsetneq \C N_i$ and $N_{i-1}^{\leq 2} \subseteq \C N_i$.
    \begin{itemize}
        \item If $i \leq s$ then the result directly follows form $M_{i-1} = N_{i-1}$ and $M_{i} = N_i$.
        \item If $i = s+1$ then $\C N_s = \C M_s \subsetneq \C M_s^{\leq 2} = V_1 =  N_{s+1}$ since otherwise $\C M_s = \langle  M_s\rangle$. Further $N_{i-1}^{\leq 2} = M_s^{\leq 2} \subseteq V_1 = \C N_i$ is true by definition. Further, $N_s \subseteq N_{s+1}$ is obvious. 
        \item If $s+2 \leq i \leq s + r $ then $\C N_{i-1} = V_{j-1} \subsetneq V_{j} = \C N_{i}$ for $j = i - s $. Note that we obtain $V_{j-1} \subsetneq V_j$ since otherwise $r = j-1 < r$. Further, $V_{j-1}^{\leq 2} \subseteq \C V_j$ is true by definition.
        \item If $i = s + r + 1$ then $\C N_{i-1} = V_r = \langle M_s \rangle \subsetneq \C M_{s + r} = \C N_i$. To see this, note that $\langle M_s \rangle  = \C M^{\leq 2^r}_s \subseteq \C M_{s + r}$ due to \cref{obs:power}. Further, there is a matrix $X \in \C M_{s+1}$ that is not in $\langle M_s \rangle$ since otherwise $\langle M_s \rangle = \langle M_{s + 1}\rangle$. This yields $\langle M_s \rangle \subsetneq \C M_{s + r}$. 
        Also, $N_{i-1}^{\leq 2} = V_r^{\leq 2} = V_r$, thus $N_{i-1}^{\leq 2} \subseteq \C N_i = N_i$.
        \item If $i \geq s + r +2$ then the result directly follows from $\C M_{i-1} = N_{i-1}$ and $\C M_{i} =  N_i$.
    \end{itemize}
    Hence, $N_0 \subsetneq \dots \subsetneq N_{\ell + 1}$ is a $\Alg$-sequence which is too long. Thus, our assumption that $\C M_{s} \subsetneq \langle M_{s}\rangle$ is wrong.
\end{proof}

The next lemma relates the maximum length of an $\FullMatrixAlg_n(\C)$-sequence to the maximum length of a multiplicative $2$-refinement sequence over an $n$-element universe.

\begin{lemma}\label{lem:lambda:chi}
    Let $\bchi = (\chi_0, \dots, \chi_r)$ be a multiplicative $2$-refinement sequence of colorings over a finite set $V$ of size $n \coloneqq |V|$. Then $r \leq \Lambda(\FullMatrixAlg_n(\C))$.
\end{lemma}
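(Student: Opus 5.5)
The plan is to turn the coloring sequence directly into an $\FullMatrixAlg_n(\C)$-sequence of the same length. Identify $V$ with $[n]$. For each $i \in \{0,\dots,r\}$ we take the matrix set $\ColorMatrix{\chi_i}$, and we set
\[N_0 \coloneqq \{I_n\}, \qquad N_i \coloneqq \{I_n\} \cup \bigcup_{j \leq i} \ColorMatrix{\chi_j} \quad \text{for } i \in [r+1].\]
Since the definition of an $\Alg$-sequence demands $M_0 = \{I_n\}$ and nested sets, we either shift the index by one (putting $N_0=\{I_n\}$ in front) or simply replace $\ColorMatrix{\chi_0}$ by $\{I_n\}$. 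It suffices to get an $\FullMatrixAlg_n(\C)$-sequence of length at least $r$, since $\Lambda$ is a maximum.

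Next we verify the three axioms.

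\emph{Closure under conjugate transposition.} Each $\ColorMatrix{\chi_j}$ is closed under conjugate transposition by \cref{lem:mul:semisimple}, because every $\chi_j$ is shufflable. The matrix $I_n$ is self-adjoint, so unions of such sets remain closed.

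\emph{Spans.} Because $\chi_{j} \succeq \chi_{j+1}$, every color class of $\chi_j$ is a union of color classes of $\chi_{j+1}$. Hence $\ColorMatrix{\chi_j} \subseteq \C \ColorMatrix{\chi_{j+1}}$, and so $\C N_i = \C \ColorMatrix{\chi_i}$ for $i \geq 1$; here $I_n \in \C\ColorMatrix{\chi_i}$ by \cref{lem:identity:matrix}. Strictness $\C\ColorMatrix{\chi_{i-1}} \subsetneq \C\ColorMatrix{\chi_i}$ follows from the contrapositive of \cref{lem:algera:matrix:sub}: equality of spans would give $\ColorMatrix{\chi_i} \subseteq \C \ColorMatrix{\chi_{i-1}}$ and hence $\chi_{i-1} \equiv \chi_i$, contradicting $\chi_{i-1} \succ \chi_i$. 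For the first step we also need $\C\{I_n\} \subsetneq \C \ColorMatrix{\chi_0}$. This holds unless $n=1$, or unless $\chi_0$ already has only the diagonal classes and can no longer be refined, in which case $r=0$ anyway. Even if this step fails, using the indexing $M_i = N_i$ for $i \le r$ with $M_0 = \{I_n\}$ still yields length $r$, because the strict inclusions between consecutive $\chi$'s give $r$ strict steps.

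\emph{Products.} We need $N_{i-1}^{\leq 2} \subseteq \C N_i$. Every element of $N_{i-1}$ lies in $\C \ColorMatrix{\chi_{i-1}}$, and products are bilinear. It therefore suffices that $\ColorMatrix{\chi_{i-1}}^{\leq 2} \subseteq \C\ColorMatrix{\chi_i}$, which is exactly multiplicativity (\cref{def:mul}); products involving $I_n$ are trivial.

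The only delicate point is this bookkeeping: sets must be nested and $M_0=\{I_n\}$, whereas the color matrix sets themselves are not nested, only their spans are. Taking cumulative unions as above resolves this. Beyond that, the argument is a direct combination of \cref{lem:identity:matrix,lem:mul:semisimple,lem:algera:matrix:sub} with multiplicativity.
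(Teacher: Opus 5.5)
Your proof is correct and follows the same route as the paper: pass from $\chi_i$ to $\ColorMatrix{\chi_i}$ and verify the axioms via \cref{lem:identity:matrix}, \cref{lem:mul:semisimple}, \cref{lem:algera:matrix:sub} and multiplicativity. Your cumulative unions are a genuine repair, since the paper simply asserts $\ColorMatrix{\chi_{i-1}} \subseteq \ColorMatrix{\chi_i}$, which cannot hold literally when $\chi_{i-1} \succ \chi_i$ (only the spans are nested). Your hedging about the first step is unnecessary, because compatibility with equality already forces $\dim \C\ColorMatrix{\chi_0} \geq 2$ for $n \geq 2$.
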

\begin{proof}
    For each refinement $\chi_i$, we obtain a set of matrices $M_i \coloneqq M_{\chi_i}$. We show that $M_i$ is $\FullMatrixAlg_n(\C)$-sequence of length $r$. Without loss of generality, we assume $M_{0} = \{I_n\}$ since otherwise we could extend the sequence (note that $I_n \in \C M_0$ due to \cref{lem:identity:matrix}). Further, since $\bchi$ is a multiplicative $2$-refinement sequence, it follows that $M_{i-1} \subseteq M_i$ and $M_{i-1}^{\leq 2} \subseteq \C M_i$.  Lastly, \cref{lem:mul:semisimple} implies that $M_i$ is closed under conjugate transposition and \cref{lem:algera:matrix:sub} yields $\C M_{i-1} \subsetneq \C M_i$ since otherwise $\chi_{i-1} \equiv \chi_i$.

    Thus, $M_0, \dots, M_{r}$ is an $\FullMatrixAlg_n(\C)$-sequence of length $r$ which yields $r \leq \Lambda(\FullMatrixAlg_n(\C))$.
\end{proof}

Hence, to obtain \Cref{thm:upper-bound-multiplicative}, we now bound the maximum length of an $\FullMatrixAlg_n(\C)$-sequence.

\subsection{The Structure of $\Alg$-Sequences}

To bound the maximum length of an $\Alg$-sequence, we proceed by induction and make use of \Cref{theo:direct:sum}.
The starting point is the following lemma which that allows us to ``decompose'' along direct sums.

\begin{lemma}\label{lemma:lambda:sum}
    Let $\Alg \subseteq M_{n}(\C)$ and $\Blg \subseteq M_{m}(\C)$ be two \cstar-algebras with $I_n \in \Alg$ and $I_m \in \Blg$ then 
    $\Lambda(\Alg \oplus \Blg) \leq \Lambda(\Alg) + \Lambda(\Blg) + 1$.
\end{lemma}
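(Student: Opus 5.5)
Given an $(\Alg \oplus \Blg)$-sequence $\{I_{n+m}\} = M_0 \subseteq \dots \subseteq M_\ell$, I will project it onto the two blocks and show that each step, except at most one, strictly enlarges at least one projection. Let $p_1 \colon \Alg \oplus \Blg \to \Alg$ and $p_2 \colon \Alg \oplus \Blg \to \Blg$ be the block projections $X \mapsto \mathbf{X}[1]$ and $X \mapsto \mathbf{X}[2]$. These are $\ast$-homomorphisms, so the sets $p_1(M_i)$ are closed under conjugate transposition and satisfy $p_1(M_{i-1})^{\leq 2} = p_1(M_{i-1}^{\leq 2}) \subseteq \C p_1(M_i)$; the same holds for $p_2$. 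After deleting consecutive indices with $\C p_1(M_{i-1}) = \C p_1(M_i)$, the sets $p_1(M_i)$ form an $\Alg$-sequence (up to the cosmetic requirement that the first set be exactly $\{I_n\}$, which holds since $p_1(I_{n+m}) = I_n$). Hence at most $\Lambda(\Alg)$ indices $i \in [\ell]$ satisfy $\C p_1(M_{i-1}) \subsetneq \C p_1(M_i)$. Likewise, at most $\Lambda(\Blg)$ indices strictly increase the second projection.

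It therefore suffices to show that at most one index $i$ is \emph{bad}, meaning that both projections stay the same while $\C M_{i-1} \subsetneq \C M_i$. Let $V_j \coloneqq \C M_j$. At a bad step, $V_{i-1} \subsetneq V_i$, but $V_{i-1}$ and $V_i$ have the same images $P \coloneqq p_1(V_i)$ and $Q \coloneqq p_2(V_i)$. So $V_i$ is a subspace of $P \oplus Q$ that properly contains the subspace $V_{i-1}$ with the same projections. Consequently $V_{i-1}$ is not all of $P \oplus Q$, since it is a proper subspace of $V_i$. This forces the block idempotents $E_1 = I_n \oplus 0$ and $E_2 = 0 \oplus I_m$ not to lie in $V_{i-1}$. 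Otherwise, $E_1 V_{i-1} \subseteq V_{i-1}^{\,2} \subseteq V_i$ would contain $P \oplus 0$, which only helps at the next level, and the argument below has to be phrased carefully with one step of lag.

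The key claim I will prove is the following: if $i$ is bad, then $E_1 \in V_{i}$, and for every $j > i$ we have $V_j = p_1(V_j) \oplus p_2(V_j)$. Given this claim, no later step can be bad. Indeed, if $V_{j-1} = P' \oplus Q'$ splits, then any proper enlargement $V_j$ of $V_{j-1}$ must enlarge $P'$ or $Q'$, because $V_j \subseteq p_1(V_j) \oplus p_2(V_j)$. This gives at most one bad step, hence $\ell \leq \Lambda(\Alg) + \Lambda(\Blg) + 1$.

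To prove the claim, I use that $V_{i-1} \subsetneq V_i$ with equal projections. Then there is some $X \in V_i$ and $Y \in V_{i-1}$ with $p_1(X) = p_1(Y)$ and $X \neq Y$, so $D \coloneqq X - Y \in V_i$ is a nonzero element of $0 \oplus Q$. Multiplying by $I$ and using closure under products one level up, $V_{i+1} \supseteq V_i^{\,2}$ contains $D^\ast D = 0 \oplus (\text{positive})$. From powers and sums of such elements one obtains, via spectral projections, an idempotent, but only after many multiplications. This is the main obstacle: controlling how many levels are needed.

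My fallback is to argue differently. Consider the first bad index $i$. I want to show that at every later index $j$ the space $V_j$ already contains $0 \oplus p_2(V_{j-1})\,D\,p_2(V_{j-1})$-type elements, so that the "off-diagonal correlation" between the blocks is destroyed in a controlled way. Then I compare dimensions: $\dim V_j \leq \dim p_1(V_j) + \dim p_2(V_j)$, with equality exactly when $V_j$ splits. The difference $\dim p_1(V_j) + \dim p_2(V_j) - \dim V_j = \dim(\ker p_1|_{V_j}$ complement$)$ measures the coupling between the blocks. I will try to show this coupling drops to zero at the first bad step and can never reappear, because $V_j \supseteq V_{j-1}^{\,2}$ is compatible with the ideal $\ker p_1 \cap V_j$ being a two-sided ideal of $V_{j-1}$ whose projection under $p_2$ is an ideal of $p_2(V_{j-1})$.
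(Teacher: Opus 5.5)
\textbf{There is a genuine gap.} Your treatment of $p_1$ is correct and matches the paper. The second half, however, rests on the claim that at most one step is \emph{bad} (both $p_1(\C M_i)$ and $p_2(\C M_i)$ unchanged), and this is false. So are the auxiliary claims that $E_1 \in \C M_i$ at a bad step and that $\C M_j$ splits afterwards.

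Take $\Alg = \Blg$ to be the diagonal $2 \times 2$ matrices, and set
$M_0 = \{I_4\}$,
$M_1 = M_0 \cup \{\mathrm{diag}(1,0,1,0), \mathrm{diag}(0,1,0,1)\}$,
$M_2 = M_1 \cup \{\mathrm{diag}(1,0,0,0)\}$,
$M_3 = M_2 \cup \{\mathrm{diag}(0,1,0,0)\}$.
Each $\C M_i$ is a $\ast$-closed subalgebra, so this is an $(\Alg \oplus \Blg)$-sequence. Both images already equal the full diagonal algebra at step $1$, so steps $2$ and $3$ are both bad. Moreover $\mathrm{diag}(1,1,0,0) \notin \C M_2$.

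More generally, let $\Alg = \Blg$ be the diagonal $a \times a$ matrices. Start with the coupled copy $\{x \oplus x\}$ and then add $e_j \oplus 0$ one at a time. This produces $a$ bad steps, and the intermediate spaces do not split. Hence your fallback (the coupling vanishes at the first bad step and never returns) fails as well. The underlying problem is that $p_2(\C M_i)$ can saturate in the same step as $p_1(\C M_i)$. After that, all the decoupling steps are invisible to both images. So the count $\#\{p_1\text{ grows}\} + \#\{p_2\text{ grows}\} + \#\{\text{bad}\}$ cannot yield the bound: in the $2 \times 2$ example it gives $4 > 3$.

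\textbf{The missing idea is to track the kernel of $p_1$ instead of the image of $p_2$.} You already produced such an element $D$. Let $K_i \coloneqq \{X \in \C M_i \mid p_1(X) = 0\}$.
\begin{itemize}
\item By rank--nullity, $\dim \C M_i = \dim p_1(\C M_i) + \dim K_i$. So every step enlarges $p_1(\C M_i)$ or $K_i$.
\item Unlike images under $p_2$, kernels are well behaved: $K_i^\ast = K_i$, and $K_{i-1}K_{i-1} \subseteq K_i$, because the product lies in $\C M_i$ and is killed by $p_1$.
\item Hence the sets $\{I_m\} \cup p_2(K_i)$, after deleting repetitions, form a $\Blg$-sequence. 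Their span therefore grows at most $\Lambda(\Blg)$ times.
\item Suppose $K_{i-1} \subsetneq K_i$ but the span of $\{I_m\} \cup p_2(K_i)$ does not change. Then $0 \oplus I_m$ enters $K_i$ at that step, which can happen only once. This is the source of the $+1$.
\end{itemize}
This is the paper's argument. In the example above, step $2$ enlarges the span of $\{I_2\} \cup p_2(K_i)$, and step $3$ is exactly where $0 \oplus I_2$ enters the kernel.
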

\begin{proof}
    Let $\{I_{n+m}\} = M_0 \subseteq \dots \subseteq M_{\ell} \subseteq A \oplus B$ be a $(\Alg \oplus \Blg)$-sequence of length $\ell$.
    We show that $\ell \leq \Lambda(\Alg)+ \Lambda(\Blg) + 1$.
    Let $\pi^{(\Alg)} \colon \Alg \oplus \Blg \to \Alg$ be the function that takes a matrix in $\Alg \oplus \Blg$ and projects to the $\Alg$ part.
    Note that $\pi^{(\Alg)}$ is a $\ast$-homomorphism.
    Similarly, we define $\pi^{(\Blg)} \colon \Alg \oplus \Blg \to \Blg$ as the projection onto $\Blg$.
    Next, for each $i \in [\ell]$ we define $P_i \coloneq \pi^{(\Alg)}(M_i)$ to be the projection of $P_i$ under $\pi^{(\Alg)}$. Note that $P_0 \subseteq P_1 \subseteq \dots \subseteq P_{\ell}$.

    \begin{claim}\label{claim:PI}
        There are at most $\Lambda(\Alg)$ many positions $i \in [\ell]$ with $\C P_{i-1} \subsetneq  \C P_{i}$.
    \end{claim}
    \begin{claimproof}
        Define $s_0 \coloneqq 0$. Assume that $s_{j-1}$ is defined then we define $s_j$ as the smallest value with $\C P_{s_{j-1}} \subsetneq \C P_{s_{j}}$. This way, we obtain a sequence of values $s_0 \leq \dots \leq s_z \leq \ell$. We show that $P_{s_0} \subseteq \dots \subseteq P_{s_z}$ is a $\Alg$-sequence which immediately implies that $z \leq \Lambda(\Alg)$ and thus there are at most $\Lambda(\Alg)$ many positions with $\C P_{i-1} \subsetneq \C P_{i}$.

        To show that $P_{s_0} \subseteq \dots \subseteq P_{s_z}$ is a $\Alg$-sequence, first note that $P_{s_0} = P_0 = \pi^{(\Alg)}(M_0) = \pi^{(\Alg)}(\{I_{n+m}\}) = \{I_n\}$. Next, every $P_i$ is closed under conjugate transposition since $M_i$ is conjugate transposition and $\pi^{(\Alg)}(A^\ast) = \pi^{(\Alg)}(A)^\ast$. Next, $\C P_{s_{j-1}} \subsetneq \C P_{s_{j}}$ holds by construction. Lastly, for each $i \in [\ell]$ and $A, B \in M_{i-1}$ we have $\pi^{(\Alg)}(A) \cdot \pi^{(\Alg)}(B)  = \pi^{(\Alg)}(AB)$. Now, $AB \in \C M_i$ and the linearity of $\pi^{(\Alg)}$ implies $P^{\leq 2}_{i-1} \subseteq \C P_i$.
    \end{claimproof}

    Next, we define $N_i \coloneqq \{ S \in M_i \mid \pi^{(\Alg)}(S) = \zeroMatrix_n\}$ as the set of matrices of $M_i$ that get mapped to zero. Note that $\C N_i$ is the kernel of $\pi^{(\Alg)}$ when restricted to $\C M_i$.\footnote{We define $\C \emptyset = \{\zeroMatrix\}$, i.e., the vector space that only contains the zero matrix. } Further, we define $T_i = \{I_m\} \cup \pi^{(\Blg)}(N_i)$. Note that $T_0 \subseteq \dots \subseteq T_\ell \subseteq \FullMatrixAlg_{m}(\C)$.

    Each matrix $S_1 \in \Alg \oplus \Blg$ can be interpreted as a pair $(A_1, B_1)$ with $A_1 \in \Alg$ and $B_1 \in \Blg$. For two matrices $S_1, S_2 \in \Alg \oplus \Blg$ we have $S_1 + S_2 = (A_1 + A_2, B_1 + B_2)$, $S_1 \cdot S_2 = (A_1 \cdot A_2, B_1 \cdot B_2)$, and $S_1^\ast = (A_1^\ast, B_1^\ast)$ due to the block structure of matrices in $\Alg \oplus \Blg$. Thus, we can write $N_i = \{(\zeroMatrix_n, S) \in M_i\}$ and $T_i = \{I_m\} \cup \{S \mid (\zeroMatrix_n, S) \in M_i\}$.
    
    \begin{claim}\label{claim:NI}
        There are at most $\Lambda(\Blg) + 1$ many positions $i \in [\ell]$ with $ \C N_{i-1} \subsetneq \C N_{i}$.
    \end{claim}
    \begin{claimproof}
        Define $s_0 \coloneqq 0$. Assume that $s_{j-1}$ is defined then we define $s_j$ as the smallest value with $\C T_{s_{j-1}} \subsetneq \C T_{s_{j}}$. This way we obtain a sequence of values $s_0 \leq \dots \leq s_z \leq \ell$.
        We prove the statement in two steps. 
    
        For the first step, we show that $T_{s_0} \subseteq \dots \subseteq T_{s_z}$ is a $\Blg$-sequence which immediately implies that $z \leq \Lambda(\Blg)$. Observe that $\C T_{i-1} \subsetneq \C T_{i}$ directly implies $\C N_{i-1} \subsetneq \C N_{i}$. 
        We obtain $T_0 = \{I_n\}$ by construction since $N_0 = \emptyset$. Next observer $\pi^{(\Alg)}(A) = \zeroMatrix_n$ if and only if $\pi^{(\Alg)}(A^\ast) = \pi^{(\Alg)}(A)^\ast = \zeroMatrix_n$. Hence, each $N_i$ is closed under conjugate transposition since $M_i$ is closed under conjugate transposition. This implies that $T_i$ is also closed under conjugate transposition. Next, by construction we have $\C T_{s_{j-1}} \subsetneq \C T_{s_{j}}$. Lastly, we show $T_{i-1}^{\leq 2} \subseteq \C T_i$. To this end, it is enough to show that $N_{i-1}^{\leq 2} \subseteq \C N_i$. Let $A, B \in N_{i-1}$ then $A, B \in M_{i-1}$. Hence, $A \cdot B \in \C M_{i}$
        since $M_{i-1}^{\leq 2} \subseteq \C M_i$. Further, $\pi^{(\Alg)}(A \cdot B) = \pi^{(\Alg)}(A) \cdot \pi^{(\Alg)}(B) = \zeroMatrix_n$ which yields $A \cdot B \in \C N_{i}$.
        Hence, $T_{s_0} \subseteq \dots \subseteq T_{s_z}$ is a $\Blg$-sequence.
        
        For the second step, we show that there is at most one position $i \in [\ell]$ with $\C N_{i-1} \subsetneq \C N_{i}$ and $ \C T_{i-1} = \C T_{i}$. Indeed, if this is the case then we have a matrix $(\zeroMatrix_n, S) \notin N_{i-1}$ with $(\zeroMatrix_n, S) \in \C N_{i}$ and $S \in \C T_{i} = \C T_{i-1}$. Thus, $S = I_m$ since otherwise $(\zeroMatrix_n, S)$ would be in $\C N_{i-1}$. Therefore, $(\zeroMatrix_n, I_m) \in \C N_i$. Now, if there is another $j \geq i + 1$ with $\C N_{j-1} \subsetneq \C N_{j}$ and $ \C T_{j-1} = \C T_{j}$ then the same argument yields $(\zeroMatrix_n, I_m) \notin \C N_{j-1}$. However this is not possible since $(\zeroMatrix_n, I_m) \in \C N_i \subseteq \C N_{j-1}$.

        Combining the first step with the second step yields that there are at most $\Lambda(\Blg) +  1$ many positions $i \in [\ell]$ with $\C N_{i-1} \subsetneq \C N_{i}$.
    \end{claimproof}

    \begin{claim}\label{claim:PI:or:NI}
        For every $i \in [\ell]$ we have  $\C P_{i-1}  \neq  \C P_i$ or $\C N_{i-1} \neq \C N_i$.
    \end{claim}
    \begin{claimproof}
        Due to the rank–nullity theorem\footnote{Each linear mapping $\varphi \colon V \to W$ between two finite vector spaces satisfies $\dim(V) = \dim(\varphi(W)) + \dim(\ker \varphi)$ where $\ker \varphi \coloneqq \{x \mid \varphi(x) = 0 \}$ is the kernel of $\varphi$.  }, we obtain
        \[\dim(\C M_i) = \dim(\C P_i) + \dim(\C N_i), \]
        since $\pi^{(A)}$ restricted to $\C M_i$ is a linear map with kernel $\C N_i$ and image $\C P_i$. For $i \in [\ell]$ we have 
        \[ \dim(\C P_{i-1}) + \dim(\C N_{i-1}) = \dim(\C M_{i-1}) < \dim(\C M_{i})  = \dim(\C P_i) + \dim(\C N_i), \]
        thus $\dim(\C P_{i-1}) <  \dim(\C P_i)$ or $\dim(\C N_{i-1}) <  \dim(\C N_i)$. Therefore, we obtain  $\C P_{i-1}  \neq  \C P_i$ or $\C N_{i-1} \neq \C N_i$.
    \end{claimproof}

    Now, let $\{I_{n+m}\} = M_0 \subseteq \dots \subseteq M_{\ell} \subseteq A \oplus B$ be our $(\Alg \oplus \Blg)$-sequence of length $\ell$.
    By \cref{claim:PI:or:NI}, for each $i \in [\ell]$ we have  $\C P_{i-1}  \neq  \C P_i$ or $\C N_{i-1} \neq \C N_i$. \cref{claim:PI} together with \cref{claim:NI} yields that there are at most $\Lambda(\Alg) + \Lambda(\Blg) + 1$ many $i \in [\ell]$ with $\C P_{i-1}  \neq  \C P_i$ or $\C N_{i-1} \neq \C N_i$.
    Thus, $\ell \leq \Lambda(\Alg) + \Lambda(\Blg) + 1$.
\end{proof}

\begin{lemma}
    \label{lem:sequence-direct-sum}
    Let $1 \leq \alpha_1,\dots,\alpha_k \leq n$ and $1 \leq \beta_1,\dots,\beta_k \leq n$ be integers such that $\sum_{i = 1}^k \alpha_i \cdot \beta_i = n$.
    Define
    \[\Alg = \bigoplus_{i = 1}^k I_{\alpha_i} \otimes \FullMatrixAlg_{\beta_i}(\C).\]
    Then
    \[\Lambda(\Alg) \leq (k-1) + \sum_{i = 1}^k \Lambda(\FullMatrixAlg_{\beta_i}(\C)).\]
\end{lemma}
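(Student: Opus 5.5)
Two ingredients suffice: the direct-sum bound of \cref{lemma:lambda:sum}, applied repeatedly, and the observation that the tensor factor $I_{\alpha}$ does not change $\Lambda$, i.e.
\[\Lambda\bigl(I_\alpha \otimes \FullMatrixAlg_\beta(\C)\bigr) = \Lambda\bigl(\FullMatrixAlg_\beta(\C)\bigr).\]
Granting the latter, we argue by induction on $k$. For $k = 1$ the claim is exactly the displayed identity. For $k \geq 2$ we write
\[\Alg = \Bigl(\bigoplus_{i=1}^{k-1} I_{\alpha_i} \otimes \FullMatrixAlg_{\beta_i}(\C)\Bigr) \oplus \bigl(I_{\alpha_k} \otimes \FullMatrixAlg_{\beta_k}(\C)\bigr).\]
Both summands are \cstar-algebras containing the respective identity matrices, so \cref{lemma:lambda:sum} applies. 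It bounds $\Lambda(\Alg)$ by $\Lambda$ of the first summand, plus $\Lambda$ of the second, plus $1$. The induction hypothesis bounds the first term by $(k-2) + \sum_{i<k}\Lambda(\FullMatrixAlg_{\beta_i}(\C))$, and the identity bounds the second by $\Lambda(\FullMatrixAlg_{\beta_k}(\C))$. Adding these gives $(k-1) + \sum_{i=1}^k \Lambda(\FullMatrixAlg_{\beta_i}(\C))$.

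The remaining step is the identity for $I_\alpha \otimes \FullMatrixAlg_\beta(\C)$, and only the inequality $\leq$ is needed. The map $\varphi\colon \FullMatrixAlg_\beta(\C) \to I_\alpha \otimes \FullMatrixAlg_\beta(\C)$, $X \mapsto I_\alpha \otimes X$, is a $\ast$-isomorphism whose inverse reads off the first diagonal block $\block{X}{1}{1}$. We transport an $(I_\alpha \otimes \FullMatrixAlg_\beta(\C))$-sequence $M_0 \subseteq \dots \subseteq M_\ell$ through $\varphi^{-1}$ and check each axiom.
\begin{enumerate}[label = (\roman*)]
    \item $\varphi^{-1}(M_0) = \{I_\beta\}$.
    \item Each $\varphi^{-1}(M_i)$ is closed under conjugate transposition, because $\varphi^{-1}$ is a $\ast$-homomorphism.
    \item $\varphi^{-1}(M_{i-1})^{\leq 2} = \varphi^{-1}(M_{i-1}^{\leq 2}) \subseteq \varphi^{-1}(\C M_i) = \C\varphi^{-1}(M_i)$, because $\varphi^{-1}$ is multiplicative and linear.
    \item Strict growth of the spans is preserved, because $\varphi^{-1}$ is injective.
\end{enumerate}
So we obtain an $\FullMatrixAlg_\beta(\C)$-sequence of the same length, which gives $\Lambda(I_\alpha\otimes \FullMatrixAlg_\beta(\C)) \leq \Lambda(\FullMatrixAlg_\beta(\C))$. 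The reverse inequality is not needed but follows symmetrically.

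This mirrors \cref{obs:A:seq:U}, and there is no real obstacle. The only points to check are that the intermediate direct sums satisfy the hypotheses of \cref{lemma:lambda:sum}, namely that they are \cstar-algebras containing the identity, and that the block-ordering convention of \cref{def:direct:sum} is respected when splitting off the last summand.
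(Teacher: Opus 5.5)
Your proof is correct and follows the paper's argument: split off the last summand repeatedly via \cref{lemma:lambda:sum}, then use $\Lambda(I_{\alpha}\otimes\FullMatrixAlg_{\beta}(\C)) = \Lambda(\FullMatrixAlg_{\beta}(\C))$. The paper only asserts that identity, so your transport of sequences through the $\ast$-isomorphism $X \mapsto I_\alpha \otimes X$ (where only the inequality $\leq$ is needed) usefully fills in the one step the paper leaves implicit.
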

\begin{proof}
    We recursively apply \cref{lemma:lambda:sum} to $\Alg$ which yields 
    \begin{align*}
    \Lambda(\Alg) &= \Lambda\left(\bigoplus_{i = 1}^k I_{\alpha_i} \otimes \FullMatrixAlg_{\beta_i}(\C)\right) \leq  1+ \Lambda\left(\bigoplus_{i = 1}^{k-1} I_{\alpha_i} \otimes \FullMatrixAlg_{\beta_i}(\C)\right)  + \Lambda(I_{\alpha_k} \otimes \FullMatrixAlg_{\beta_k}(\C))  \\
    &\leq \dots \leq  (k-1) +  \sum_{i = 1}^k \Lambda(I_{\alpha_i} \otimes \FullMatrixAlg_{\beta_i}(\C)) = (k-1) +  \sum_{i = 1}^k \Lambda(\FullMatrixAlg_{\beta_i}(\C)).
    \end{align*}
    Here, the last step follows since $ \Lambda(I_{\alpha_i} \otimes \FullMatrixAlg_{\beta_i}(\C)) = \Lambda(\FullMatrixAlg_{\beta_i}(\C))$.
\end{proof}

To be able to bound the length of an $\FullMatrixAlg_n(\C)$-sequence by a linear term in $n$, we require one more auxiliary lemma.

\begin{lemma}\label{lem:mul:bound:beta}
    Let $n \geq 2$, $1 \leq \beta < n$ and $X \subseteq \FullMatrixAlg_n(\C)$ with $I_n\in X$, $X=X^\ast$, $\langle X \rangle = \FullMatrixAlg_n(\C)$,
    \[\mathcal B_\beta \coloneqq \FullMatrixAlg_{\beta}(\C) \oplus \zeroMatrix_{n - \beta} 
    =   
    \left\{
    \begin{pmatrix}
        \;\;C & \zeroMatrix\\
        \;\;\zeroMatrix & \quad \;\;\zeroMatrix_{n - \beta}
    \end{pmatrix}
   \mid C \in \FullMatrixAlg_\beta(\mathbb C)
    \right\} 
   \subseteq X.
    \]
    Then $\C X^{\leq  2(n - \beta)} = \langle X \rangle$.
\end{lemma}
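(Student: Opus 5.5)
The plan is to realize every matrix unit $\oneMatrix{p}{q}$ as a linear combination of products of length at most $2(n-\beta)$. Each such product has the shape ``(short word in $X$) $\cdot$ (element of $\mathcal B_\beta$) $\cdot$ (adjoint of a short word in $X$)''. The words are controlled by a dimension-growth argument on column spaces, in the spirit of Paz's theorem (\cref{theo:dim:algebra}). Let $P\in\mathcal B_\beta\subseteq X$ be the orthogonal projection onto $\C^\beta\oplus 0$, and put $Q\coloneqq I_n-P\in\C X$.

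\textbf{Step 1 (column spaces).} For $j\ge 0$ define
\[D_j\coloneqq \operatorname{span}\{A u \mid A\in X^{\le j}\cup\{I_n\},\ u\in\operatorname{im}P\}.\]
Since $I_n\in X$, we have $D_0=\operatorname{im}P\subseteq D_1\subseteq D_2\subseteq\cdots$, and $\dim D_0=\beta$. If $D_j=D_{j+1}$, then $D_j$ is invariant under every element of $X$, hence under $\langle X\rangle=\FullMatrixAlg_n(\C)$. Because $D_j\neq 0$, this forces $D_j=\C^n$. So the dimension grows by at least one per step until it reaches $n$, and $D_{n-\beta}=\C^n$.

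\textbf{Step 2 (assembling matrix units).} For any $p,q$, write
\[e_p=\sum_i A_iu_i \quad\text{and}\quad e_q=\sum_j B_jw_j,\]
with $A_i,B_j\in X^{\le n-\beta}$ and $u_i,w_j\in\operatorname{im}P$. Then
\[\oneMatrix{p}{q}=\sum_{i,j}A_i\,(u_iw_j^\ast)\,B_j^\ast .\]
Here $u_iw_j^\ast\in\mathcal B_\beta\subseteq X$, and $B_j^\ast\in X^{\le n-\beta}$ because $X=X^\ast$ (\cref{theo:semisimple}). If $p\le\beta$ we may take $A_i=I_n$, giving length at most $1+(n-\beta)\le 2(n-\beta)$. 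Symmetrically, the case $q\le\beta$ is also fine. This covers all blocks except $Q\,\FullMatrixAlg_n(\C)\,Q$.

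\textbf{Step 3 (the main obstacle: the $QQ$ block).} For $p,q>\beta$, the naive count gives $2(n-\beta)+1$, one factor too many. So I must save one factor. The first trick to try is to merge the middle $\mathcal B_\beta$ factor into an outer word. Concretely, replace the filtration by
\[D'_j\coloneqq\operatorname{span}\{A C u\mid A\in X^{\le j-1},\ C\in\mathcal B_\beta,\ u\in\operatorname{im}P\},\]
so that $D'_1=\operatorname{im}P$ is already ``paid for'' by a single factor. The aim is to show $D'_{n-\beta}=\C^n$. The invariance argument then has to be run with one fewer step, and the extra step would come from exploiting $Q\in\C X$: the first growth step $\operatorname{im}P\to X\operatorname{im}P$ can be obtained via $QYP$ with $Y\in X$, which is a product of length $2$ inside $\C X^{\le 2}$. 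The sanity check is the case $\beta=n-1$, where $\oneMatrix{n}{n}=I_n-P\in\C X$ directly. If this merging does not work cleanly, the fallback is a refined dimension count: track $\dim(\C X^{\le t}\cap Q\FullMatrixAlg_n(\C)Q)$ jointly with the row and column spaces, and show that each added factor on the left or right enlarges the reachable column or row space inside $\operatorname{im}Q$ by at least one. Growing $n-\beta$ columns and $n-\beta$ rows then costs at most $2(n-\beta)$ factors in total, once the initial $\mathcal B_\beta$ factor is absorbed into the first growth step on each side.

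Summing over all matrix units then gives $\C X^{\le 2(n-\beta)}=\FullMatrixAlg_n(\C)=\langle X\rangle$.
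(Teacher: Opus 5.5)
\textbf{Gap in Step 3.} Your Steps 1 and 2 are correct and match the paper's filtration argument (your $D_j$ is the paper's $H_t$). Step 3, however, is exactly where the bound $2(n-\beta)$ instead of $2(n-\beta)+1$ has to be earned, and you leave it unresolved. Your main suggestion cannot work.

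\begin{itemize}
\item For $C\in\mathcal B_\beta$ and $u\in\operatorname{im}P$, the vectors $Cu$ span exactly $\operatorname{im}P$. Hence your $D'_j$ is simply $D_{j-1}$, and the goal $D'_{n-\beta}=\C^n$ is the claim $D_{n-\beta-1}=\C^n$.
\item That claim is false whenever the dimension grows by exactly one per step. Already for $n=2$, $\beta=1$ you have $D_0=\operatorname{im}P\neq\C^2$.
\item The remarks about $QYP$ and the ``refined dimension count'' specify no mechanism that actually saves a factor. In general the middle $\mathcal B_\beta$ factor cannot be absorbed into an outer word.
\end{itemize}

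\textbf{The missing idea.} Stop insisting on standard matrix units, and let $I_n\in X$ pay for the one leftover direction.

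\begin{itemize}
\item Strict growth gives more than $D_{n-\beta}=\C^n$: it also gives $\dim D_{n-\beta-1}\ge n-1$.
\item Your Step 2 computation shows $vw^\ast\in\C X^{\le 2(n-\beta)}$ whenever $v\in D_{n-\beta}$ and $w\in D_{n-\beta-1}$, or vice versa. The length is $(n-\beta)+1+(n-\beta-1)$.
\item Choose an orthonormal basis $v_1,\dots,v_n$ of $\C^n$ with $v_1,\dots,v_{n-1}\in D_{n-\beta-1}$. Then every $v_iv_j^\ast$ with $(i,j)\neq(n,n)$ lies in $\C X^{\le 2(n-\beta)}$.
\item The remaining one also does: $v_nv_n^\ast=I_n-\sum_{i<n}v_iv_i^\ast$.
\item The $n^2$ matrices $v_iv_j^\ast$ span $\FullMatrixAlg_n(\C)$, which finishes the proof.
\end{itemize}

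This is precisely the paper's argument. Your sanity check $E^{(n,n)}=I_n-P$ for $\beta=n-1$ is the special case of it. For general $p,q>\beta$, neither $e_p$ nor $e_q$ need lie in $D_{n-\beta-1}$, which is why the basis adapted to $D_{n-\beta-1}$ is indispensable.
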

\begin{proof}
    We write $U = \C^\beta \oplus \{0\}^{n- \beta}$ for the vector space of vectors in $\C^n$ whose last $n- \beta$ entries are zero. We define
    \[H_t \coloneqq \C \{A u \mid A \in \C X^{\leq t}, u \in U\}.\]
    Note that $H_0 = U$ since $X^{\leq 0} = \{I_n\}$ and $A H_t \subseteq H_{t+1}$ for all $A \in X$. Next, if $H_t = H_{t + 1}$ then $H_t = H_{s}$ for arbitrarily large $s \geq t$. Hence, $H_t = \langle X \rangle H_t = \FullMatrixAlg_n(\C)H_t$ which immediately yields $H_t = \C^n$ since $U$ contains non-zero vectors. Thus the dimensions of the vector spaces $H_t$ increase strictly until they reach $n$. Since $\dim(H_0) = \beta$, this yields
    \begin{align}\label{eq:H}
        H_{n - \beta} = \C^n \qquad \text{and} \qquad \dim(H_{n - \beta - 1}) \geq n-1
    \end{align}
    Next, we show that
    \begin{align}\label{imp:a:b}
        \text{for all $a, b \in \N$ and $v \in H_a$, $w \in H_b$, we obtain $v w^\ast \in \C X^{\leq a + b + 1}$.}
    \end{align}
    To this end, we write $v = \sum_{i = 1}^c A_i u_i$ and  $w = \sum_{i = 1}^d B_j w_j$ with $A_i \in \C X^{\leq a}$, $B_i \in \C X^{\leq b}$, and $u_i, w_j \in U$. Since all $u_i w_j^\ast$ are in $\mathcal{B}_\beta \subseteq X$ and $\C X^{\leq b}$ is closed under complex conjugation, we obtain
    \[v w^\ast = \sum_{i = 1}^c \sum_{j = 1}^d  A_i (u_i  w_j^\ast) B_j^\ast \in \C X^{ \leq a + b+ 1}, \]
    where we used that $M\cdot N \in X^{\leq x + y}$ for all $M \in X^{\leq x}$ and $N \in X^{\leq y}$. This yields (\ref{imp:a:b}).

    To show $\C X^{\leq 2(n - \beta)} = \FullMatrixAlg_n(\C)$, we prove that there is an orthonormal basis $v_1, \dots, v_n \in \C^n$ such that $v_i v_j^\ast = E^{(i,j)} \in \C X^{\leq 2(n - \beta)}$ for all $i, j \in [n]$. Note that this implies $\C X^{\leq 2(n - \beta)} = \FullMatrixAlg_n(\C)$ since the matrix units clearly generate $\FullMatrixAlg_n(\C)$. We choose, $v_1, \dots, v_n$ in such a way that $v_1, \dots, v_{n-1} \in H_{n - \beta - 1}$ which is possible since $H_{n - \beta - 1}$ has dimension at least $n - 1$ due to (\ref{eq:H}). Due to  (\ref{imp:a:b}) we obtain for all $i \in [n]$ and $j \in [n-1]$ that  $v_i v_j^\ast \in \C X^{\leq 2(n - \beta)}$ and  for all $i \in [n-1]$ and $j \in [n]$ that $v_i v_j^\ast \in \C X^{\leq 2(n - \beta)}$.
    Lastly, $v_n v^\ast_n  = E^{(n,n)} = I_n - \sum_{i =1 }^{n-1} v_i v^\ast_i \in \C X^{\leq 2(n - \beta)}$. Hence, $\C X^{\leq 2(n - \beta)}$ contains all matrix units and is therefore equal to $\FullMatrixAlg_n(\C)$.
\end{proof}

\subsection{Bounding the Length of $\Alg$-Sequences}

To bound the maximum length of an $\FullMatrixAlg_n(\C)$, we use the following recursion with $\cost(1) = 1$ and for $n \geq 2$
\begin{equation}
    \label{eq:def-c}
    \cost(n) \coloneqq \max_{\substack{n > \beta_1 \geq \dots \geq \beta_\ell \geq 1 \\ \beta_1 + \dots + \beta_\ell \leq n \\ \beta_1, \dots, \beta_\ell \in \N}} \; \;   \sum_{i=1}^{\ell} \cost(\beta_i) + \begin{cases}
        2 + \lceil \log(n - \beta_1) \rceil &\text{If $\beta_1 > \frac{1}{2}n$}  \\
        \lceil 2\log(n)\rceil & \text{otherwise}
    \end{cases}.    
\end{equation}

\begin{lemma}\label{lem:c:bound}
    For all $n\geq 1$ we have $\cost(n) \leq 5n - 4$.
\end{lemma}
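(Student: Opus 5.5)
We prove $\cost(n) \leq 5n-4$ by strong induction on $n$. The base case $\cost(1) = 1 = 5\cdot 1 - 4$ is immediate. For $n \geq 2$, fix an admissible choice $n > \beta_1 \geq \dots \geq \beta_\ell \geq 1$ with $\sum_i \beta_i \leq n$. By the induction hypothesis, each $\beta_i < n$ satisfies $\cost(\beta_i) \leq 5\beta_i - 4$. Hence
\[\sum_{i=1}^\ell \cost(\beta_i) \leq 5\sum_i \beta_i - 4\ell \leq 5n - 4\ell.\]
It therefore suffices to show that the additive term $T$ (the case expression in \eqref{eq:def-c}) satisfies $T \leq 4\ell - 4 + 5(n - \sum_i \beta_i)$. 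In particular, $T \leq 4(\ell-1)$ suffices whenever the $\beta_i$ sum to $n$, and any slack $s \coloneqq n - \sum_i\beta_i$ contributes an extra $5s$.

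\textbf{Case $\beta_1 > n/2$.} We need $2 + \lceil \log(n-\beta_1)\rceil \leq 4(\ell-1) + 5s$. Put $m \coloneqq n - \beta_1 \geq 1$, so that $m = \sum_{i \geq 2}\beta_i + s$. If $\ell = 1$, then $s = m$ and we need $2 + \lceil \log m\rceil \leq 5m$, which is true for all $m \geq 1$. If $\ell \geq 2$, the right-hand side is at least $4(\ell - 1)$. Since $\beta_i \leq \beta_1$ for all $i$, we only have $m \leq (\ell-1)\beta_1 + s$, so $m$ can be large compared to $\ell$. This is why we must keep the slack more carefully: rather than using the crude bound $\sum_{i \geq 2} \cost(\beta_i) \leq 5\sum_{i \geq 2}\beta_i - 4(\ell - 1)$ and then $\ell \geq 2$, we retain the full bound. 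We need
\[2 + \lceil\log m\rceil \leq 5m - \sum_{i\ge2}\cost(\beta_i).\]
Since $\sum_{i\geq 2}\cost(\beta_i) \leq 5(m-s) - 4(\ell-1)$, the right-hand side is at least $5s + 4(\ell - 1) \geq 4$. So for $\ell \geq 2$ the requirement becomes $2 + \lceil \log m\rceil \leq 4(\ell-1) + 5s$, which can fail when $m$ is large and $\ell = 2$, $s = 0$. This shows the induction must be strengthened: I would instead prove the sharper bound $\cost(n) \leq 5n - 4 - g(n)$ for a concave correction $g$, or exploit that $\cost(\beta_i) \leq 5\beta_i - 4$ is loose for large $\beta_i$. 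The cleanest fix is to bound $\lceil\log m\rceil \leq m - 1 + 1$ (i.e.\ $\lceil \log m\rceil \leq m$) and to use the per-unit gain. Each $\beta_i$ for $i\geq 2$ contributes $5\beta_i - \cost(\beta_i) \geq 4$, but we need roughly $\log m$ in total. So I would show, as an auxiliary statement, that $\cost(b) \leq 5b - 4 - \lceil \log b \rceil$ (or similar) for $b \geq 1$. Then the $i \geq 2$ terms yield gains of $\sum_{i\ge 2}(4 + \lceil\log\beta_i\rceil)$, and concavity-type estimates give $\sum_{i\geq2}(4+\lceil\log\beta_i\rceil) + 5s \geq 2 + \lceil\log m\rceil$. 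For instance, $4 + \log a + 4 + \log b \geq 2 + \log(a+b)$ holds because $\log(a+b) \leq \log a + \log b + 1$ for $a,b\geq1$.

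\textbf{Case $\beta_1 \leq n/2$.} Then $\ell \geq 2$ unless there is slack, and more precisely $\sum_i\beta_i \leq \ell n/2$. We need $\lceil 2\log n\rceil \leq$ the total gain, namely $4(\ell-1) + 5s$ plus the strengthened gains $\sum_i \lceil\log\beta_i\rceil$. Since $\sum_i\beta_i + s = n$ with each $\beta_i \leq n/2$, the subadditivity estimate above, applied repeatedly, gives $2\log n \leq \sum_i \log\beta_i + 2(\ell-1) + 2 + O(s)$, which fits in the budget.

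The main obstacle is choosing the right strengthened induction hypothesis, something like $\cost(n) \leq 5n - 4 - \lceil\log n\rceil$ for $n \geq 2$, or a weaker variant, so that the logarithmic overheads telescope. I would first test this hypothesis on small $n$ (where $\cost(2)$ and $\cost(3)$ can be computed directly) and adjust the correction term until both cases close. I would then derive $5n-4$ as a corollary.
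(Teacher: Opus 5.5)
Your proposal has a genuine gap. You correctly see that induction on $C(n)\le 5n-4$ alone does not close, but you never fix a strengthened hypothesis that does, and the one you suggest, $C(n)\le 5n-4-\lceil\log n\rceil$, is not inductive. Take the case $\beta_1\le n/2$ with the balanced split $\ell=2$, $\beta_1=\beta_2=n/2$. The hypothesis for $n/2$ gives $C(n)\le 2\bigl(\tfrac52 n-3-\lceil\log n\rceil\bigr)+\lceil 2\log n\rceil$. This is at most $5n-4-\lceil\log n\rceil$ only if $\lceil 2\log n\rceil\le 2+\lceil\log n\rceil$, which fails for every power of two $n\ge 8$ (for $n=8$ you get $2\cdot 14+6=34>33$). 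Your case-2 accounting missed this because you compared the overhead $\lceil 2\log n\rceil$ only with the gains $4(\ell-1)+\sum_i\lceil\log\beta_i\rceil$. You forgot that the strengthened bound must itself be re-established at $n$, which costs another $\lceil\log n\rceil$. A ``weaker variant'' only makes this worse. The balanced split forces a correction growing like $2\log n$. Conversely, $C(2^k)\ge 2C(2^{k-1})+2k$ gives $C(2^k)\ge 5\cdot 2^k-2k-4$, so no correction larger than $2\log n+4$ is possible. The paper takes exactly $q(n)=\lceil 2\log n\rceil+4$, proves $C(n)\le 5n-q(n)$, and uses $q(a)+q(b)\ge q(a+b)$.

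Even with the right $q$, your sketch of the case $\beta_1\le n/2$ lacks the key step for $\ell\ge 3$. There you need $\sum_i q(\beta_i)\ge q(n)+\lceil 2\log n\rceil=2\lceil 2\log n\rceil+4$. Applying subadditivity until everything is merged only yields $\sum_i q(\beta_i)\ge q(n)$, which is short by $\lceil 2\log n\rceil$. The paper therefore stops merging early. It partitions the parts into three nonempty groups, two of total weight at least $n/4$ each and a third arbitrary one. Each of the first two contributes $q(\gamma)=\lceil 2\log(4\gamma)\rceil\ge\lceil 2\log n\rceil$, and the third contributes at least $4$. The subcase $\ell=2$ is exactly the balanced split above, where $2q(n/2)=q(n)+\lceil 2\log n\rceil$. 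Your case-1 intuition does carry over with this $q$. For $\beta_1>n/2$ one has $q(\beta_1)\ge q(n)-2$ and $\sum_{i\ge 2}q(\beta_i)\ge q(n-\beta_1)\ge\lceil\log(n-\beta_1)\rceil+4$. Together these cover $q(n)$ plus the overhead $2+\lceil\log(n-\beta_1)\rceil$.
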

\begin{proof}
    We proof a slightly stronger statement: $C(n) \leq 5n - q(n)$ for  $q(n) \coloneqq \lceil2 \log(n)\rceil + 4$. To this end, first obverse that 
    \begin{align}\label{eq:q}
        q(a) + q(b) \geq q(a + b) \text{ for all $a, b \geq 1$}.
    \end{align}
    We now proof $C(n) \leq 5n - q(n)$ via strong induction on $\N$. For the base case, observe that $C(1) = 1 = 5\cdot 1 - 0 - 4$. For the induction step, we have to show the statement for $n \geq 2$ while assuming $C(\beta_i) \leq 5\beta_i - q(\beta_i)$ for all $\beta_i < n$. Let $n > \beta_1 \geq \dots \geq \beta_\ell \geq 1$ with $\sum \beta_i \leq n$ and $\beta_i \in \N$ be chosen such that 
    \[C(n) = \sum_{i=1}^{\ell} \cost(\beta_i) + \begin{cases}
    2 + \lceil \log(n - \beta_1) \rceil &\text{If $\beta_1 > \frac{1}{2}n$}  \\
    \lceil 2\log(n)\rceil & \text{otherwise}
\end{cases} .\]
Note that $\sum \beta_i = n$ since otherwise we can add $\beta_{\ell+1} = 1$ to the list to obtain a larger value $C(n)$.  
We now consider a case distinction.

First, assume that $\beta_1 > \frac{1}{2}n$. In this case our induction hypothesis yields
\begin{align*}
    C(n) \leq 5n - \sum_{i = 1}^\ell q(\beta_i) +  2 + \lceil \log(n - \beta_1) \rceil.
\end{align*}
By applying \eqref{eq:q} multiple times, we obtain $\sum_{i = 2}^r q(\beta_i) \geq q(n - \beta_1)$. Using this, we obtain
\begin{align*}
C(n) &\leq 5n - q(n- \beta_1) - q(\beta_1) +  2 + \lceil \log(n - \beta_1) \rceil \\ &= 5n + \lceil \log(n - \beta_1) \rceil - \lceil 2\log(n - \beta_1) \rceil - \lceil 2\log(\beta_1) \rceil   - 6 \\
&\leq 5n - \lceil 2\log(\beta_1) \rceil   - 6,
\end{align*}
where we used $\lceil \log(n - \beta_1) \rceil \leq \lceil 2\log(n - \beta_1) \rceil$ for the last step. Further, since $\beta_1 > \frac{1}{2}n$, we obtain that $\lceil2 \log(\beta_1)\rceil     \geq \lceil 2(\log(n/2) )  \rceil   = \lceil 2(\log(n) -1)  \rceil = \lceil 2 \log(n)  \rceil - 2$. Using this yields
\[C(n) \leq 5n - (\lceil 2\log(n) \rceil  - 2) - 6 \leq 5n -  \lceil2 \log(n)\rceil - 4 = 5n - q(n).\]

Second, we assume $\beta_i \leq \frac{1}{2}n$ for all $i \in [\ell]$. In this case our induction hypothesis yields
\begin{align*}
    C(n) \leq 5n - \sum_{i = 1}^\ell q(\beta_i) +   \lceil 2\log(n) \rceil
\end{align*}
We consider a case distinction. First assume that $\ell = 2$. Now, $\beta_1 = \beta_2 = n/2$. We use that $\lceil 2 \log(n/2) \rceil  + 2 = \lceil 2 (\log(n/2) + 1) \rceil =\lceil 2 \log(n) \rceil$ to obtain
\[C(n) \leq 5n - 2(\lceil 2 \log(n/2) \rceil + 2 + 2) +   \lceil 2\log(n) \rceil = 5n - 2 \lceil 2 \log(n) \rceil  - 4   +   \lceil 2\log(n) \rceil,  \]
which proves the induction step.
Next, assume $\ell \geq 3$. We start by showing the following claim.
\begin{claim}
    We can partition $\frac{1}{2}n \geq \beta_1 \geq \dots \geq \beta_\ell \geq 1$ into three sets $A_1, A_2, A_3$ with $\gamma_i \coloneqq \sum_{\beta \in A_i} \beta$ such that $\gamma_1, \gamma_2 \geq n/4$ and $\gamma_3 > 0$.
\end{claim}
\begin{claimproof}
First, if $\beta_2 \geq \frac{1}{4}n$ then we are immediately done since we can choose $A_1 = \{1\}, A_2 = \{2\}, A_3 = \{3, \dots ,\ell\}$. Now, assume $\beta_i < \frac{1}{4}n$ for all $i \geq 2$. Let $t$ be the smallest value with $\gamma_1 \coloneqq \sum_{i = 1}^t \beta_i \geq \frac{1}{4}n$ and let $s$ be the smallest value with $\gamma_2 \coloneqq \sum_{i = t+1}^s \beta_i \geq \frac{1}{4}n$. It is sufficient to show that these values exist and $s < \ell$ since this yields $A_1 = \{1, \dots, t\}, A_2 = \{t+1, \dots, s\}, A_3 = \{s+1, \dots ,\ell\}$. We consider a case distinction. If $\beta_1 \geq \frac{1}{4}n$ then $t = 1$. Further, $s < \ell$ since $n - \beta_1 \geq \frac{1}{2}n$ and $\beta_i < \frac{1}{4}n$ for all $i \geq 2$. Otherwise, all $\beta_i < \frac{1}{4}n$ for all $i \in [\ell]$. In this case, $n - \gamma_1 > \frac{1}{2}n$ and thus $n - \gamma_1 - \gamma_2 > 0$.
\end{claimproof}
    
Using this claim we obtain $A_1, A_2, A_3$ and $\gamma_1, \gamma_2, \gamma_3$. Note that for $\gamma \geq n/4$ we obtain
\begin{align}\label{eq:gamma}
    q(\gamma ) = \lceil 2 \log(\gamma) \rceil  + 4 = \lceil 2 (\log(\gamma) + 2) \rceil \stackrel{(4\gamma \geq n)}{\geq } \lceil 2 \log(n) \rceil
\end{align}
Using this we obtain 
\begin{align*}
    C(n) &\leq 5n - \sum_{i = 1}^\ell q(\beta_i) +   \lceil 2\log(n) \rceil  \\
    &\stackrel{(\ref{eq:q})}{\leq} 5n - q(\gamma_1) - q(\gamma_2) - q(\gamma_3)  +  \lceil 2\log(n) \rceil \\
    &\stackrel{(\ref{eq:gamma})}{\leq} 5n - \lceil 2\log(n) \rceil  - q(\gamma_3) \leq 5n-\lceil 2\log(n) \rceil -4. \qedhere
\end{align*}
\end{proof}

Now, we can bound the maximum length of an $\FullMatrixAlg_n(\C)$-sequence by $C(n) - 1$, which is linear in $n$ by the previous lemma.

\begin{theorem}\label{theo:lambda:C}
    Let $n \geq 1$.
    Then $\Lambda(\FullMatrixAlg_n(\C)) \leq  \cost(n) - 1$.
\end{theorem}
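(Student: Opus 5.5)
We argue by strong induction on $n$. For $n = 1$, $\FullMatrixAlg_1(\C) = \C I_1$, so every $\FullMatrixAlg_1(\C)$-sequence has length $0 = \cost(1) - 1$. For $n \geq 2$, fix an $\FullMatrixAlg_n(\C)$-sequence $\{I_n\} = M_0 \subseteq \dots \subseteq M_\ell$ of maximum length $\ell = \Lambda(\FullMatrixAlg_n(\C))$. By \cref{obs:max:lenght}, $\C M_\ell = \FullMatrixAlg_n(\C)$. Let $s$ be maximal with $\langle M_s \rangle \neq \FullMatrixAlg_n(\C)$. This exists since $\langle M_0 \rangle = \C I_n$, and $s < \ell$.

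\textbf{Bounding $s$.} Since $\langle M_s\rangle \subsetneq \langle M_{s+1}\rangle$, \cref{lem:max:lenght:algebra} gives $\C M_s = \langle M_s\rangle =: \Alg_s$. (If $s = 0$ this holds trivially; the lemma is stated for $s \geq 1$.) Thus $M_0,\dots,M_s$ is an $\Alg_s$-sequence and $s \leq \Lambda(\Alg_s)$. By \cref{theo:semisimple}, $\Alg_s$ is a \cstar-algebra containing $I_n$, so \cref{theo:direct:sum} and \cref{obs:A:seq:U} let us assume, after conjugating everything by a unitary, that $\Alg_s = \bigoplus_{i=1}^k I_{\alpha_i}\otimes \FullMatrixAlg_{\beta_i}(\C)$. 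We order the blocks so that $\beta_1 \geq \dots \geq \beta_k$. Because $\Alg_s \neq \FullMatrixAlg_n(\C)$, every $\beta_i < n$. Also $\sum_i \beta_i \leq \sum_i \alpha_i\beta_i = n$. Then \cref{lem:sequence-direct-sum} and the induction hypothesis give
\[
s \leq (k-1) + \sum_i (\cost(\beta_i)-1) \leq \sum_i \cost(\beta_i) - 1.
\]

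\textbf{Bounding $\ell - s$.} Since $\langle M_{s+1}\rangle = \FullMatrixAlg_n(\C)$, \cref{theo:dim:algebra} yields $\C M_{s+1}^{\leq \lfloor n^2/2\rfloor} = \FullMatrixAlg_n(\C)$. Then \cref{obs:power} gives $\C M_{s+1+t} = \FullMatrixAlg_n(\C)$ once $2^t \geq n^2/2$, i.e.\ for $t = \lceil 2\log n\rceil - 1$. Since the spans strictly increase until position $\ell$, we get $\ell \leq s + \lceil 2\log n\rceil$.

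In the case $\beta_1 > n/2$, we have $\alpha_1 = 1$. Otherwise $\alpha_1\beta_1 > n$, contradicting $\sum_i \alpha_i\beta_i = n$. Hence $\mathcal B_{\beta_1} = \FullMatrixAlg_{\beta_1}(\C)\oplus \zeroMatrix_{n-\beta_1} \subseteq \Alg_s = \C M_s$. Put $X := M_s \cup \mathcal B_{\beta_1}$. This set contains $I_n$, is closed under $\ast$, and satisfies $\langle X\rangle = \FullMatrixAlg_n(\C)$. The last claim needs care: since $\C X = \Alg_s$, we have $\langle X \rangle = \Alg_s$, not $\FullMatrixAlg_n(\C)$. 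So we instead take $X := M_{s+1} \cup \mathcal B_{\beta_1}$, which lies in $\C M_{s+1}$ and generates $\FullMatrixAlg_n(\C)$. \cref{lem:mul:bound:beta} then gives $\C X^{\leq 2(n-\beta_1)} = \FullMatrixAlg_n(\C)$. Because $X \subseteq \C M_{s+1}$, \cref{obs:power} yields $X^{\leq 2^t} \subseteq \C M_{s+1+t}$. Taking $t = \lceil \log(2(n-\beta_1))\rceil = 1 + \lceil\log(n-\beta_1)\rceil$ gives $\C M_{s+1+t} = \FullMatrixAlg_n(\C)$. Hence $\ell \leq s + 2 + \lceil \log(n-\beta_1)\rceil$.

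Adding the two bounds gives
\[
\ell \leq \sum_i \cost(\beta_i) - 1 + \Bigl(2+\lceil\log(n-\beta_1)\rceil \text{ or } \lceil 2\log n\rceil\Bigr) \leq \cost(n) - 1,
\]
since $(\beta_1,\dots,\beta_k)$ is admissible in the maximum \eqref{eq:def-c}. The main obstacle is the boundary bookkeeping. One must check that the case $s = 0$ is allowed in \cref{lem:max:lenght:algebra} (there $\C M_0 = \C I_n$ is already an algebra, so it is fine). One must also handle the off-by-one in $\ell \leq s+1+t$ versus $\C M_{s+1+t} = \FullMatrixAlg_n(\C)$. Since spans strictly increase, $\ell$ equals the first index at which the span is $\FullMatrixAlg_n(\C)$, which is at most $s+1+t$. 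For the generic case this yields $\ell \leq s + \lceil 2\log n\rceil$, matching the recursion.
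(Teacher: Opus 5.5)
Your proof is correct and follows the paper's argument essentially step by step. It uses induction on $n$, bounds $s$ via the block decomposition of $\Alg_s=\langle M_s\rangle$ together with \cref{lem:sequence-direct-sum}, and bounds $\ell-s$ either by Paz's bound (\cref{theo:dim:algebra}) or, when $\beta_1>n/2$, by \cref{lem:mul:bound:beta} combined with \cref{obs:power}. The differences are cosmetic: you apply \cref{lem:mul:bound:beta} to $M_{s+1}\cup\mathcal B_{\beta_1}$ rather than to $\C M_{s+1}$, and you spell out boundary checks the paper leaves implicit ($s=0$ in \cref{lem:max:lenght:algebra}, and $\sum_i\beta_i\le n$ for admissibility in \eqref{eq:def-c}).
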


\begin{proof}
    We prove the statement by induction on $n \geq 1$.
    For $n = 1$ we obtain that $\Lambda(\FullMatrixAlg_n(\C)) = 0 \leq  \cost(1) - 1$ as desired.
    So, assume that $n \geq 2$.
    
    Let $\{I_n\} = M_0 \subseteq \dots \subseteq M_\ell \subseteq \FullMatrixAlg_n(\C)$ be an  $\FullMatrixAlg_n(\C)$-sequence of length $\ell = \Lambda(\FullMatrixAlg_n(\C))$, i.e., for all $i \in [\ell]$ we have
    \begin{itemize}
        \item $M_i$ is closed under conjugate transposition,
        \item $\C M_{i-1} \subsetneq \C M_i$, and
        \item $M_{i-1}^{\leq 2} \subseteq \C M_i$.
    \end{itemize}

    For $i \in [\ell]$, let $\Alg_i = \langle M_i \rangle$ be the \cstar-algebra generated by $M_i$.
    Observe that $\Alg_0 \subseteq \dots \subseteq \Alg_\ell$, $I_n \in \Alg_i$, and $\C M_i \subseteq \Alg_\ell$ for all $i \in [\ell]$. Further, \cref{obs:max:lenght} implies $\Alg_\ell = M_n(\C)$.
    Let $s \in \{0,\dots,\ell\}$ denote the maximal index such that $\Alg_s \neq \FullMatrixAlg_n(\C)$ (observe that such an index exists since $M_0 = \{I_n\}$ and $n \geq 2$).
    We define $\Alg \coloneqq \Alg_s$.
    Observe that $s \leq \Lambda(\Alg)$ since $M_0,\dots,M_s$ is an $\Alg$-sequence.

    By Theorem \ref{theo:direct:sum}, there are integers $1 \leq \alpha_1,\dots,\alpha_k \leq n$ and $n \geq \beta_1 \geq \dots \geq \beta_k \geq 1$, and a unitary matrix $U \in \FullMatrixAlg_n(\C)$ such that
    \[U \Alg U^\ast = \bigoplus_{i = 1}^k I_{\alpha_i} \otimes \FullMatrixAlg_{\beta_i}(\C) \quad\text{and}\quad \sum_{i = 1}^k \alpha_i \cdot \beta_i = n.\]
    Since by \cref{obs:A:seq:U} the sequence $(UM_0 U^\ast,\dots,UM_\ell U^\ast)$ is still a $\FullMatrixAlg_n(\C)$-sequence of length $\ell = \Lambda(\FullMatrixAlg_n(\C))$, we may assume without loss of generality that $U = I_n$ and hence $\Alg$ is already in the form described above.
    Also note that $\beta_1,\dots,\beta_k < n$, since $\Alg \neq \FullMatrixAlg_n(\C)$.
    
    Now, \cref{lem:sequence-direct-sum} together with the induction hypothesis yields
    \begin{align}\label{eq:s}
        s \leq \Lambda(\Alg) \stackrel{(\ref{lem:sequence-direct-sum})}{\leq} (k-1) + \sum_{i = 1}^k \Lambda(\FullMatrixAlg_{\beta_i}(\C)) \leq \left(\sum_{i = 1}^k \cost(\beta_i)\right) - 1.
    \end{align}

    It remains to bound the difference between $\ell$ and $s$. We first prove the following basic bound.
    \begin{claim}\label{claim:bound:big}
        $\ell \leq s + \lceil2 \log(n)\rceil$.
    \end{claim}
    \begin{claimproof}
        Let $t = s + \lceil2 \log(n)\rceil$.
        We assume that $t < \ell$ and derive a contradiction. By assumption we have $\langle M_{s+1} \rangle = \Alg_{s+1} = \Alg_\ell$. We now show that  
        \begin{align}\label{eq:dim:case}
            \Alg_\ell = \langle M_{s+1} \rangle = \C M_{s+  \lceil2 \log(n)\rceil} =  \C M_{t}
        \end{align}
        implies the result. Assuming \eqref{eq:dim:case}, we obtain $\C M_t = \C M_{t+1}$ since otherwise $\Alg_\ell \subseteq \C M_{t} \subsetneq \C M_{t+1} \subseteq \Alg_\ell$ which is a contradiction. However, $\C M_t = \C M_{t+1}$ is also not possible since the sequence is a $\FullMatrixAlg_n(\C)$-sequence. Thus, our assumption that $t < \ell$ is wrong proving the claim.

        It remains to show \eqref{eq:dim:case}. To this end, we use $
        \C M_{s+1}^{\leq \lfloor n^2 / 2 \rfloor} = \langle M_{s+1}\rangle = \FullMatrixAlg_n(\C)$
        due to \cref{theo:dim:algebra}. This yields
        \[\Alg_\ell \subseteq \C M_{s+1}^{\leq \lfloor n^2 / 2 \rfloor} \stackrel{(\ref{obs:power})}{\subseteq} \C M_{s+1 + \lceil \log(n^2 / 2) \rceil} = \C M_{t} \subseteq \Alg_\ell.\qedhere\]
    \end{claimproof}

    Next, we prove the following more intricate bound which applies if $\beta_1$ is sufficiently large.

    \begin{claim}\label{claim:bound:small}
        Suppose $\beta_1 > \frac{1}{2}n$.
        Then $\ell \leq s + 2 + \lceil\log(n - \beta_1)\rceil$.
    \end{claim}
    \begin{claimproof}
         First observe that $\alpha_1 = 1$, since $\sum_{i = 1}^k \alpha_i \cdot \beta_i = n$. Define $\mathcal{B}_{\beta_1} \coloneqq \FullMatrixAlg_{\beta_1}(\C) \oplus \zeroMatrix_{n - \beta_1}$. By \cref{lem:max:lenght:algebra}, we obtain $\C M_s = \Alg_s$ and thus $\mathcal{B}_{\beta_1} \subseteq \C M_s$.

        Let $t = s + 2 + \lceil\log(n - \beta_1)\rceil$.
        We assume that $t < \ell$ and derive a contradiction. By assumption we have $\langle M_{s+1} \rangle = \Alg_{s+1} = \Alg_\ell$. We now show that  
        \begin{align}\label{eq:dim:case:2}
            \Alg_\ell = \langle M_{s+1} \rangle = \C M_{s+2 + \lceil\log(n - \beta_1)\rceil} =  \C M_{t}
        \end{align}
        implies the result. Assuming (\ref{eq:dim:case:2}), we obtain $\C M_t = \C M_{t+1}$ since otherwise $\Alg_\ell \subseteq \C M_{t} \subsetneq \C M_{t+1} \subseteq \Alg_\ell$ which is a contradiction. However, $\C M_t = \C M_{t+1}$ is also not possible since the sequence is a $\FullMatrixAlg_n(\C)$-sequence. Thus, our assumption that $t < \ell$ is wrong proving the claim.

        It remains to show (\ref{eq:dim:case:2}).
        To this end, we use $\C M_{s+1}^{\leq 2(n - \beta_1)} = \langle M_{s+1}\rangle = \Alg_{\ell}$ 
        due to \cref{lem:mul:bound:beta} since $\Alg_\ell = \FullMatrixAlg_n(\C)$, $I_n \in \C M_{s+1}$,  $\C M_{s+1}$ is closed under conjugate transposition, and $\mathcal{B}_{\beta_1} \subseteq \C M_{s}$. This yields
        \[\Alg_\ell \subseteq  \C M_{s+1}^{\leq  2(n - \beta_1)} \stackrel{(\ref{obs:power})}{\subseteq} \C M_{s+2 + \lceil \log(n - \beta_1)\rceil} = \C M_{t} \subseteq \Alg_\ell.\qedhere\]
    \end{claimproof}

    Lastly, we consider two cases.
    First assume that  $\beta_1 \leq \frac{1}{2}n$.
    Then \cref{claim:bound:big} implies 
    \[\ell \leq s +  \lceil 2 \log(n) \rceil \stackrel{\eqref{eq:s}}{ \leq }  \left(\sum_{i = 1}^k \cost(\beta_i)\right) - 1 + \lceil2  \log(n) \rceil \leq C(n) - 1. \]
    Otherwise, $\beta_1 > \frac{1}{2}n$. Thus, \cref{claim:bound:small} yields
    \[\ell \leq s + 2 + \lceil\log(n - \beta_1)\rceil \stackrel{\eqref{eq:s}}{ \leq }  \left(\sum_{i = 1}^k \cost(\beta_i)\right) - 1 +  2 + \lceil\log(n - \beta_1)\rceil \leq C(n) - 1 .\]
    Hence, $\ell \leq  C(n) - 1$ which proves the induction step and thus $\Lambda(\FullMatrixAlg_n(\C)) \leq  C(n) - 1$.
\end{proof}

\theoMultSequence*
\begin{proof}
    By combing \cref{lem:lambda:chi} with \cref{lem:c:bound} and \cref{theo:lambda:C}, we obtain \[|\bchi| \leq \Lambda(\FullMatrixAlg_n(\C)) \leq C(n) - 1 \leq 5(n-1). \qedhere\]
\end{proof}

\mtOne*
\begin{proof}
    Each run of the $2$-WL algorithm produces a sequence of refinements $\bchi$. By \cref{lemma:k:WL:refinement,lem:WL:is:mul}, this sequence is a multiplicative 2-refinement sequence over $V$. Hence, \cref{thm:upper-bound-multiplicative} yields the result.
\end{proof}

\begin{remark}
    Assuming the Paz's conjecture, we obtain that $\langle X \rangle = \C X^{\leq 2n - 2}$ for all $X \subseteq \FullMatrixAlg_n(\C)$ with $\langle X \rangle = \FullMatrixAlg_n(\C)$. Using this, we can modify our cost function $\cost(n)$ to obtain a new function:
    \[D(n) \coloneqq \max_{\substack{n > \beta_1 \geq \dots \geq \beta_\ell \geq 1 \\ \beta_1 + \dots + \beta_\ell \leq n \\ \beta_1, \dots, \beta_\ell \in \N}} \; \;   \sum_{i=1}^{\ell} D(\beta_i) + \begin{cases}
        2 + \lceil \log(n - \beta_1) \rceil &\text{If $\beta_1 > \frac{1}{2}n$}  \\
        1 + \lceil\log(2n - 2)\rceil & \text{otherwise}
    \end{cases}.
    \]
    Again, we can show $\Lambda(\FullMatrixAlg_n(\C)) \leq D(n) - 1$ by simulating the proof of \cref{theo:lambda:C}.
    Furthermore, one can show that $D(n) \leq 4.5 n - 4$ for $n\geq 2$.
    Thus, assuming  Paz's conjecture, we obtain that the classical Weisfeiler-Leman algorithm stabilizes after at most $4.5 n - 5$ many rounds. 
\end{remark}

\section{Higher Dimensions}

In this section, we prove \Cref{mtheo:k:dim}. As in the previous section, we actually prove a more general result on the length on certain $k$-refinement sequences ($k \geq 3$) which, in particular, include all $k$-WL sequences.

\subsection{Definitions and Basic Properties}

For the remainder of this section, let us fix some $k \geq 3$.
The proof of \Cref{mtheo:k:dim} is based on \Cref{thm:upper-bound-multiplicative}.
In order to turn a $k$-refinement sequence into suitable $2$-refinement sequences, we use projections.
Let $V$ be a finite set and let $\chi\colon V^k \to C$ be a coloring of $k$-tuples over $V$.
For $\bv = (v_1,\dots,v_{k-2}) \in V^{k-2}$ we define $\chi^{\bv}\colon V^2 \to C$ via
\[\chi^{\bv}(w_1,w_2) \coloneqq \chi(\bv,w_1,w_2) =  \chi(v_1, \dots, v_{k-2},w_1,w_2)\]
as the \emph{$\bv$-restriction of $\chi$.}

\begin{definition}
    Let $\bchi = (\chi_0,\dots,\chi_r)$ be a $k$-refinement sequence over $V$.
    We say that $\bchi$ is \emph{multiplicative} if $\bchi^\bv = (\chi_0^\bv,\dots,\chi_r^\bv)$ is multiplicative for every $\bv \in V^{k-2}$.
\end{definition}

We start by showing that every $k$-WL sequence is multiplicative.
This also includes a proof for \Cref{lem:WL:is:mul}. Hence, we allow $k = 2$. Note that \cref{lemma:k:WL:refinement} already yields that each  $k$-WL sequence over $V$ is a $k$-refinement sequence over $V$.

\begin{lemma}\label{lem:k:wl:mul:ref}
    Let $k \geq 2$,  $V$ be a finite set and let $\bchi = (\chi_0,\dots,\chi_r)$ be a $k$-WL sequence over $V$.
    Then $\bchi$ is multiplicative.
\end{lemma}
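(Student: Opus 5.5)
We have to show that for every $\bv \in V^{k-2}$ and every $i \in \{0,\dots,r-1\}$ we have $\ColorMatrix{\chi_i^{\bv}}^{\leq 2} \subseteq \C \ColorMatrix{\chi_{i+1}^{\bv}}$. (For $k=2$, $\bv$ is the empty tuple.) The bare requirement that $(\chi_0^\bv,\dots,\chi_r^\bv)$ be a multiplicative $2$-refinement sequence also includes strict refinement. That can fail at the projection level, so we read the definition modulo duplicates, as the overview indicates. Shufflability and compatibility with equality of $\chi_i^\bv$ follow directly from the corresponding properties of $\chi_i$, using maps $\sigma$ that fix the first $k-2$ coordinates.

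Products of length $0$ and $1$ are easy. $I_n \in \C \ColorMatrix{\chi_{i+1}^\bv}$ holds by \cref{lem:identity:matrix} applied to $\chi_{i+1}^\bv$. Moreover $\chi_{i+1} \preceq \chi_i$ because $\chi_{i+1}(\bv)$ contains $\chi_i(\bv)$ as its first component. Hence $\chi_{i+1}^\bv \preceq \chi_i^\bv$, so each color-class matrix $\ColorMatrix{\chi_i^\bv,c}$ is a sum of matrices $\ColorMatrix{\chi_{i+1}^\bv,c'}$.

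The main step is products of length two. Fix colors $c,d$ and set $P \coloneqq \ColorMatrix{\chi_i^\bv,c}\ColorMatrix{\chi_i^\bv,d}$. Then
\[P_{w_1,w_2} = \bigl|\{x \in V \mid \chi_i(\bv,w_1,x) = c,\ \chi_i(\bv,x,w_2) = d\}\bigr|.\]
It suffices to show that this number depends only on $\chi_{i+1}(\bv,w_1,w_2)$. If so, $P = \sum_{e} \lambda_e \ColorMatrix{\chi_{i+1}^\bv,e}$, where $\lambda_e$ is the common value on the class $e$.

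To see this, look at the multiset part of $\chi_{i+1}(\bv,w_1,w_2)$. It contains, for each $x \in V$, the tuple of colors $\chi_i$ of the tuples $(\bv,w_1,w_2)[x/j]$ for $j \in [k]$. Coordinate $j=k$ gives $\chi_i(\bv,w_1,x)$ and coordinate $j=k-1$ gives $\chi_i(\bv,x,w_2)$. Therefore the number of $x$ with $\chi_i(\bv,w_1,x)=c$ and $\chi_i(\bv,x,w_2)=d$ equals the number of multiset entries whose $k$-th component is $c$ and whose $(k-1)$-th component is $d$. This count is determined by $\chi_{i+1}(\bv,w_1,w_2)$.

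The main obstacle is this identification: we must check carefully that the substituted positions $k-1$ and $k$ give exactly the two factors of the product, in the right order. For $k=2$ this specializes to the formula from the overview. The rest is linearity, which extends the statement from basis products to $\ColorMatrix{\chi_i^\bv}^{\leq 2}$.
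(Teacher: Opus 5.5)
Your proposal is correct and follows essentially the same route as the paper. Both proofs express the $(w_1,w_2)$-entry of a product of two color-class matrices of $\chi_i^{\bv}$ as the number of multiset entries of $\chi_{i+1}(\bv,w_1,w_2)$ with prescribed colors in coordinates $k-1$ and $k$, and conclude that the product lies in the span of the color-class matrices of $\chi_{i+1}^{\bv}$. Your separate treatment of products of length $0$ and $1$ (via \cref{lem:identity:matrix} and refinement) and your check of the properties of the restricted colorings fill in details that the paper's proof leaves implicit.
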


\begin{proof}
    Let $\bv \in V^{k-2}$ and $i \in \{0, \dots, r-1\}$.
    Note that for $k = 2$, we have that $\bv \in V^0$ is the \emph{empty vector}, meaning that $\chi_i^\bv = \chi_i$.
    We need to check that
    \[M_{\chi_i^\bv}^{\leq 2} \subseteq \C M_{\chi_{i+1}^\bv}.\] 
    For simplicity, let us write $\chi \coloneqq \chi_i^\bv$ and $\chi' \coloneqq \chi_{i+1}^\bv$.
    Now, for each $X  \in M_{\chi}^{\leq 2} $ there are two colors $a, b$ such that $A \coloneqq M_{\chi, a}$, $B \coloneqq M_{\chi, b}$, and $X = A \cdot B$.
    Let $C$ and $C'$ denote the set of colors in the image of $\chi$ and $\chi'$, respectively.
    By construction, the multiset part of each color in $C'$ is a multiset of tuples of colors in $C$.
    For every $q \geq 0$, we define $D'_q \subseteq C'$ to be the set of all colors in $C'$ that contain a color from $C_{b,a} \coloneqq \{(c_1, \dots, c_{k-2}, b, a) \mid c_1, \dots, c_{k-2} \in C\}$ inside their multiset part exactly $q$ times (if $k  = 2$ then $C_{b, a} \coloneqq \{(a, b)\}$). 
    We show that 
    \[X = \sum_{q = 0}^\infty \sum_{d \in D'_q} q \cdot  M_{\chi', d},\]
    which yields that $X \in \C M_{\chi'}$. For arbitrary vertices $u, u'$, we obtain that there is exactly one color $d \coloneqq \chi'(u, u') \in C'$ such that $(M_{\chi', d})_{u, u'} = 1$, and $(M_{\chi', d'})_{u, u'} = 0$ for all other colors $d' \neq d$.
    Thus, it is enough to show that for $d \in D'_q$ we have $q = X_{u, u'}$.
    Observe that
    \begin{align*}
    X_{u, u'} = \sum_{w \in V} A_{u, w} \cdot B_{w, u'} &=  |\{w \in V \mid \chi(w, u') = b \text{ and }  \chi(u, w) = a\}| \\
    &= |\{w \in V \mid (\chi_i((\bv, u, u')[w/1]), \dots,  \chi_i((\bv, u, u')[w/k]) \in C_{b,a} \}|.
    \end{align*} 
    By definition the multiset part of $\chi'(u, u')$ is equal to 
    \[\mulSet{ (\chi((\bv, u, u')[w/1]), \dots,  \chi((\bv, u, u')[w/k]) ~|~ w \in V }.\]
    Thus, the multiset part of $d = \chi'(u, u')$ contains the colors of $C_{b, a}$ exactly $X_{u, u'}$ times which implies that $d \in \mathcal{D}'_q$ for $q = X_{u, u'}$.
\end{proof}

To obtain the desired bound on the length of a $k$-refinement sequence, we need a second property.

\begin{definition}
    Let $\bchi = (\chi_0,\dots,\chi_r)$ be a $k$-refinement sequence over $V$.
    We say that $\bchi$ is \emph{extendable} if, for every $i \in \{0,\dots,r-1\}$, every $\bv,\bv' \in V^{k-2}$ and every $w_1,w_2 \in V$ such that
    \[\chi_i(\bv,w_1,w_2) \notin \{\chi_{i}(\bv',w_1',w_2') \mid w_1',w_2' \in V\}\]
    it holds that
    \[\chi_{i+1}(\bv,w_1,w_1) \notin \{\chi_{i+1}(\bv',w_1',w_1') \mid w_1' \in V\}.\]
\end{definition}

\begin{lemma}\label{lem:k:wl:mul:ext}
    Let $V$ be a finite set and let $\bchi = (\chi_0,\dots,\chi_r)$ be a $k$-WL sequence over $V$.
    Then $\bchi$ is extendable.
\end{lemma}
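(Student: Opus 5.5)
We prove the contrapositive: if $\chi_{i+1}(\bv,w_1,w_1) = \chi_{i+1}(\bv',w_1',w_1')$ for some $w_1' \in V$, then there is some $w_2' \in V$ with $\chi_i(\bv',w_1',w_2') = \chi_i(\bv,w_1,w_2)$. The idea is to read off the color $\chi_i(\bv,w_1,w_2)$ from the multiset part of $\chi_{i+1}(\bv,w_1,w_1)$. The relevant entry is the one obtained by replacing the last coordinate of $(\bv,w_1,w_1)$ by an arbitrary $w \in V$.

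Concretely, write $\bu \coloneqq (\bv,w_1,w_1)$ and $\bu' \coloneqq (\bv',w_1',w_1')$. By the definition of $k$-WL, equality $\chi_{i+1}(\bu) = \chi_{i+1}(\bu')$ implies that the multisets
\[\mulSet{(\chi_i(\bu[w/1]),\dots,\chi_i(\bu[w/k])) \mid w \in V}\quad\text{and}\quad\mulSet{(\chi_i(\bu'[w/1]),\dots,\chi_i(\bu'[w/k])) \mid w \in V}\]
coincide. Now take $w \coloneqq w_2$ in the first multiset. Its $k$-th component is $\chi_i(\bu[w_2/k]) = \chi_i(\bv,w_1,w_2)$. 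Since the multisets are equal, there is some $w' \in V$ whose tuple for $\bu'$ equals this tuple. In particular, comparing $k$-th components gives
\[\chi_i(\bu'[w'/k]) = \chi_i(\bv',w_1',w') = \chi_i(\bv,w_1,w_2).\]
Setting $w_2' \coloneqq w'$ gives $\chi_i(\bv,w_1,w_2) \in \{\chi_i(\bv',w_1',w_2') \mid w_1',w_2'\in V\}$, which is the contrapositive.

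Two points need care. First, $\chi_{i+1}$ here means the $k$-WL color itself; if the sequence is only given up to $\equiv$, equal classes still mean equal WL-colors by definition of a $k$-WL sequence, so nothing is lost. Second, the case $k\geq 3$ guarantees that the last two coordinates are distinct positions from $\bv$; the argument is uniform and does not even need shufflability.

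I expect no real obstacle. The only subtle step is recognizing that the single coordinate $k$ replacement in $(\bv,w_1,w_1)$ recovers exactly the tuple $(\bv,w_1,w_2)$, and that the multiset equality transfers the existence of such a $w$ to the primed side.
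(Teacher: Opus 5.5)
Your proof is correct and is essentially the paper's argument. The paper likewise uses the equality of the multiset parts of $\chi_{i+1}(\bv,w_1,w_1)$ and $\chi_{i+1}(\bv',w_1',w_1')$, observes that replacing the last coordinate by $w_2$ yields a tuple whose $k$-th entry is $\chi_i(\bv,w_1,w_2)$, and transfers this to some $w_2'$ on the primed side; it merely phrases this as a contradiction rather than a contrapositive.
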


\begin{proof}
    Assume that $\bchi = (\chi_0, \dots, \chi_r)$ is not extendable over $V$.
    Then there exist some index $i \in \{0, \dots, r-1\}$, values $\bv, \bv' \in V^{k-2}$, and $w_1, w_2 \in V$ such that 
    \[\chi_i(\bv, w_1, w_2) \notin \{ \chi_i(\bv', w_1', w_2') \mid w_1', w_2' \in V\} \text{ and } \chi_{i+1}(\bv, w_1, w_1) \in \{ \chi_{i+1}(\bv', w_1', w_1') \mid w_1' \in V\}. \]
    Pick $w_1' \in V$ such that $\chi_{i+1}(\bv, w_1, w_1) = \chi_{i+1}(\bv', w_1', w_1')$.
    In particular, the multiset part of $\chi_{i+1}(\bv, w_1, w_1)$ equals the multiset part of $\chi_{i+1}(\bv', w_1', w_1')$.
    However, by definition the multiset part of $\chi_{i+1}(\bv, w_1, w_1)$ contains the tuple $(c_1, \dots, c_{k-1}, \chi_{i}(\bv, w_1, w_2))$, for some $c_1, \dots, c_{k-1} \in C$.
    Also, the multiset part of $\chi_{i+1}(\bv', w_1', w_1')$ consists of elements from
    \[\{(c'_1, \dots, c_{k-1}', \chi_{i+1}(\bv', w_1', w_2')) \mid c'_1, \dots, c_{k-1}' \in C,  w_2' \in V\}.\]
    Because $\chi_{i+1}(\bv, w_1, w_1) = \chi_{i+1}(\bv', w_1', w_1')$, there is some $w_2' \in V$ such that $\chi_{i}(\bv, w_1, w_2) = \chi_{i}(\bv', w_1', w_2')$ which is a contradiction. 
\end{proof}

\subsection{Reducing to Dimension Two}

Now, we can formulate the main technical result of this section, which allows us to prove \Cref{mtheo:k:dim}.
Let $f(n)$ denote the maximal length of a multiplicative $2$-refinement sequence over an $n$-element set $V$.
Note that \cref{thm:upper-bound-multiplicative} implies $f(n) \leq 5(n-1)$.

\theoHighWL*

\begin{proof}
    For a tuple $\bv = (v_1,\dots,v_{k-2}) \in V^{k-2}$ we define $\ext(\bv) = (v_1,\dots,v_{k-2},v_{k-2},v_{k-2}) \in V^k$ to be the tuple obtained from $\bv$ by repeating the last entry two more times.
    For $i \in \{0,\dots,r\}$ let
    \[\bar \chi_i\colon V^{k-2} \to C\colon \bv \mapsto \chi_i(\ext(\bv)).\]
    It is easy to see that $\bar \chi_{i-1} \succeq \bar \chi_{i}$ for every $i \in [r]$.
    Also, note that $\chi_{i-1}^\bv \succeq \chi_{i}^\bv$, $\chi_i^\bv$ is compatible with equality, and $\chi_i^\bv$ is shufflable for every $\bv \in V^{k-2}$ and $i \in [r]$.

    \begin{claim}
        \label{claim:ext-color}
        Let $i \in \{0,\dots,r-2\}$ and $\bv,\bv' \in V^{k-2}$. Suppose there are $w_1,w_2 \in V$ such that
        \[\chi_i(\bv,w_1,w_2) \notin \{\chi_{i}(\bv',w_1',w_2') \mid w_1',w_2' \in V\}.\]
        Then $\chi_{i+2}(\ext(\bv)) \neq \chi_{i+2}(\ext(\bv'))$.
    \end{claim}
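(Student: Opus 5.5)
We chain the extendability property twice and then invoke shufflability to go from the pair-diagonal $(\bv,w_1,w_1)$ to the tuple $\ext(\bv)$. By the hypothesis and extendability (applied at index $i$), we have
\[\chi_{i+1}(\bv,w_1,w_1) \notin \{\chi_{i+1}(\bv',w_1',w_1') \mid w_1' \in V\}.\]
In particular, $\chi_{i+1}(\bv,w_1,w_1)\notin\{\chi_{i+1}(\bv',w_1',w_2')\mid w_1',w_2'\in V\}$ would be too strong to claim directly. However, extendability is applied with the premise ``not in the set over all pairs''. So the first step is to upgrade the conclusion to this stronger form. Suppose $\chi_{i+1}(\bv,w_1,w_1)=\chi_{i+1}(\bv',w_1',w_2')$ for some $w_1',w_2'$. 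Compatibility with equality forces $w_1'=w_2'$, because the last two entries of the left tuple coincide. This contradicts the displayed non-membership. Hence
\[\chi_{i+1}(\bv,w_1,w_1)\notin\{\chi_{i+1}(\bv',w_1',w_2')\mid w_1',w_2'\in V\}.\]

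Now apply extendability again at index $i+1 \le r-1$ (valid since $i\le r-2$) with the pair $(w_1,w_1)$. This yields
\[\chi_{i+2}(\bv,w_1,w_1)\notin\{\chi_{i+2}(\bv',w_1',w_1')\mid w_1'\in V\}.\]
By the same compatibility-with-equality argument, this gives non-membership among all pairs $(w_1',w_2')$ as well. That is, the color $\chi_{i+2}(\bv,w_1,w_1)$ does not occur anywhere in the $\bv'$-restriction $\chi_{i+2}^{\bv'}$.

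The final step converts this into $\chi_{i+2}(\ext(\bv))\neq\chi_{i+2}(\ext(\bv'))$. We argue by contradiction. Suppose $\chi_{i+2}(\bv,v_{k-2},v_{k-2})=\chi_{i+2}(\bv',v'_{k-2},v'_{k-2})$. We want to produce some $w_1'$ with $\chi_{i+2}(\bv,w_1,w_1)=\chi_{i+2}(\bv',w_1',w_1')$, and shufflability alone cannot achieve this, since $w_1$ is a new vertex not appearing in $\ext(\bv)$. So the proof must instead use equality of colors in the refinement. The cleanest route is to run the chain with one more step of room. We use extendability only once to reach level $i+1$, with the pair-level non-occurrence as above. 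Then we exploit that $\chi_{i+2}$ refines via the multiplicative/WL-like structure, reading off from $\chi_{i+2}(\ext(\bv))$ the set of $\chi_{i+1}$-colors of $(\bv,w,w)$, and hence equality of the sets $\{\chi_{i+1}(\bv,w,w)\}_w$ and $\{\chi_{i+1}(\bv',w,w)\}_w$. Within the abstract axioms, the tool for this is extendability applied to the pair $(v_{k-2},w_1)$ and the index structure. Concretely, we view $\ext(\bv)=(\bv,v_{k-2},v_{k-2})$ and note that the hypothesis, applied with $(w_1,w_2)$ replaced suitably, yields the diagonal conclusion at $w_1=v_{k-2}$. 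This is where I expect the main obstacle: choosing the right intermediate tuple, likely $(v_1,\dots,v_{k-3},w_1,w_1,w_1)$ obtained via shufflability from $(\bv,w_1,w_1)$, so that the second extendability application lands exactly on $\ext(\bv)$. I would first try to use shufflability with the non-injective map $\sigma$ to rewrite the color of $(\bv,w_1,w_1)$ as one attached to the prefix $(v_1,\dots,v_{k-3},w_1)$ and reapply extendability.
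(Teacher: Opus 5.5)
Your first step (extendability at index $i$, giving $\chi_{i+1}(\bv,w_1,w_1)\notin\{\chi_{i+1}(\bv',w_1',w_1')\mid w_1'\in V\}$) agrees with the paper. You also correctly see that no shuffle turns $(\bv,w_1,w_1)$ into $\ext(\bv)$. But the argument is never completed, and none of your candidate continuations works:
\begin{itemize}
\item Applying extendability a second time to the pair $(w_1,w_1)$ is a dead end, since it only produces another statement about $(\bv,w_1,w_1)$.
\item ``Reading off'' $\chi_{i+1}$-colors from $\chi_{i+2}(\ext(\bv))$ uses WL-specific structure that is not available for abstract extendable sequences.
\item Your intermediate tuple $(v_1,\dots,v_{k-3},w_1,w_1,w_1)$ replaces the prefix $\bv$ by $(v_1,\dots,v_{k-3},w_1)$. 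Extendability applied to it therefore compares the wrong prefixes and says nothing about $\ext(\bv)$ versus $\ext(\bv')$.
\end{itemize}

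The missing idea is the intermediate tuple $(\bv,v_{k-2},w_1)$. It has three properties you need. It keeps the prefix $\bv$. It shuffles \emph{onto} $(\bv,w_1,w_1)$ via $\sigma(k-1)=k$ and $\sigma(j)=j$ otherwise. And its diagonal extension is exactly $\ext(\bv)=(\bv,v_{k-2},v_{k-2})$.

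Suppose $\chi_{i+1}(\bv,v_{k-2},w_1)=\chi_{i+1}(\bv',w_0',w_1')$ for some $w_0',w_1'\in V$. Then shufflability with this $\sigma$ gives $\chi_{i+1}(\bv,w_1,w_1)=\chi_{i+1}(\bv',w_1',w_1')$, contradicting your first step. (The paper first restricts to $w_0'=v_{k-2}'$ and then invokes compatibility with equality, but shufflability alone already handles arbitrary $w_0'$.)

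So the premise of extendability holds at index $i+1\le r-1$ for the pair $(v_{k-2},w_1)$. This yields $\chi_{i+2}(\bv,v_{k-2},v_{k-2})\notin\{\chi_{i+2}(\bv',w,w)\mid w\in V\}$, and taking $w=v_{k-2}'$ gives the claim. You mention the pair $(v_{k-2},w_1)$ in passing, but verifying this premise is the whole content of the step, and your proposal does not do it.
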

    \begin{claimproof}
        First, we have that
        \begin{equation}
            \label{eq:ext-color-1}
            \chi_{i+1}(\bv,w_1,w_1) \notin \{\chi_{i+1}(\bv',w_1',w_1') \mid w_1' \in V\}
        \end{equation}
        since $\bchi$ is extendable.
        Now, suppose that $\bv = (v_1,\dots,v_{k-2})$ and $\bv' = (v_1',\dots,v_{k-2}')$.
        We claim that
        \begin{equation}
            \label{eq:ext-color-2}
            \chi_{i+1}(\bv,v_{k-2},w_1) \notin \{\chi_{i+1}(\bv',v_{k-2}',w_1') \mid w_1' \in V\}.
        \end{equation}
        Indeed, suppose towards a contradiction that there is some $w_1' \in V$ such that
        \[\chi_{i+1}(\bv,v_{k-2},w_1) = \chi_{i+1}(\bv',v_{k-2}',w_1').\]
        Consider the mapping $\sigma\colon [k] \to [k]$ defined via $\sigma(k-1) = k$ and $\sigma(j) = j$ for all $j \neq k-1$.
        Since $\chi_{i+1}$ is shufflable, we conclude that
        \[\chi_{i+1}(\bv,w_1,w_1) = \chi_{i+1}(\bv',w_1',w_1')\]
        which contradicts \eqref{eq:ext-color-1}.
        Since $\chi_{i+1}$ is compatible with equality, we obtain
        \begin{equation}
            \label{eq:ext-color-3}
            \chi_{i+1}(\bv,v_{k-2},w_1) \notin \{\chi_{i+1}(\bv',w_0',w_1') \mid w_0',w_1' \in V\}
        \end{equation}
        by \eqref{eq:ext-color-2}.
        Finally, since $\bchi$ is extendable, we get that
        \begin{equation*}
            \chi_{i+2}(\bv,v_{k-2},v_{k-2}) \notin \{\chi_{i+2}(\bv',w_0',w_0') \mid w_0' \in V\}
        \end{equation*}
        which implies the claim.
    \end{claimproof}

    \begin{claim}\label{claim:bar:or:v}
        For every $i \in \{1,\dots,r-2\}$ it holds that $\bar \chi_{i-1} \succ \bar \chi_{i+2}$ or there is some $\bv \in V^{k-2}$ such that $\chi_{i-1}^{\bv} \succ \chi_{i}^{\bv}$.
    \end{claim}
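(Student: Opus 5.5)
We argue by contraposition. Fix $i \in \{1,\dots,r-2\}$ and assume that $\chi_{i-1}^{\bv} \equiv \chi_i^{\bv}$ for every $\bv \in V^{k-2}$. We will show that $\bar\chi_{i-1} \succ \bar\chi_{i+2}$. Since $\bar\chi_{i-1} \succeq \bar\chi_{i+2}$ already holds, it suffices to show $\bar\chi_{i-1} \not\equiv \bar\chi_{i+2}$. That is, we want tuples $\bv,\bv' \in V^{k-2}$ with $\bar\chi_{i-1}(\bv) = \bar\chi_{i-1}(\bv')$ but $\chi_{i+2}(\ext(\bv)) \neq \chi_{i+2}(\ext(\bv'))$. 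The second inequality will come from \cref{claim:ext-color}, applied with index $i$. This is allowed since $i \le r-2$.

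The first step uses that $\chi_{i-1} \succ \chi_i$ as colorings of $V^k$. This gives two $k$-tuples $(\bv,w_1,w_2)$ and $(\bv',w_1',w_2')$ with
\[\chi_{i-1}(\bv,w_1,w_2) = \chi_{i-1}(\bv',w_1',w_2') \quad\text{and}\quad \chi_i(\bv,w_1,w_2) \neq \chi_i(\bv',w_1',w_2').\]
From the first equality and shufflability (with $\sigma$ mapping $k-1,k$ to $k-2$ and fixing everything else) we get $\chi_{i-1}(\ext(\bv)) = \chi_{i-1}(\ext(\bv'))$, that is, $\bar\chi_{i-1}(\bv) = \bar\chi_{i-1}(\bv')$.

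The main step is to verify the hypothesis of \cref{claim:ext-color}, namely
\[\chi_i(\bv,w_1,w_2) \notin \{\chi_i(\bv',x,y) \mid x,y \in V\}.\]
Suppose instead that $\chi_i(\bv,w_1,w_2) = \chi_i(\bv',x,y)$ for some $x,y$. Then also $\chi_{i-1}(\bv,w_1,w_2) = \chi_{i-1}(\bv',x,y)$, because $\chi_{i-1} \succeq \chi_i$. Combined with the first equality from the previous paragraph, this gives $\chi_{i-1}^{\bv'}(x,y) = \chi_{i-1}^{\bv'}(w_1',w_2')$. Our assumption $\chi_{i-1}^{\bv'} \equiv \chi_i^{\bv'}$ then yields $\chi_i(\bv',x,y) = \chi_i(\bv',w_1',w_2')$. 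Hence $\chi_i(\bv,w_1,w_2) = \chi_i(\bv',w_1',w_2')$, contradicting the choice of the two tuples.

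The key point of this step is that the projection assumption is applied only to the single restriction $\chi^{\bv'}$. This converts the "cross-$\bv$" distinction into a distinction that no pair $(x,y)$ can undo. Once the hypothesis holds, \cref{claim:ext-color} gives $\chi_{i+2}(\ext(\bv)) \neq \chi_{i+2}(\ext(\bv'))$. Together with $\bar\chi_{i-1}(\bv) = \bar\chi_{i-1}(\bv')$, this shows $\bar\chi_{i-1} \not\equiv \bar\chi_{i+2}$, and hence $\bar\chi_{i-1} \succ \bar\chi_{i+2}$.
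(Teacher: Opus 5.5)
Your proof is correct and is essentially the paper's argument: assume every restriction $\chi_{i-1}^{\bv} \equiv \chi_i^{\bv}$, take two tuples that share a $\chi_{i-1}$-color but are split by $\chi_i$, use \cref{claim:ext-color} to separate their extensions under $\bar\chi_{i+2}$, and use shufflability to show they agree under $\bar\chi_{i-1}$. The only difference is presentational: you check the hypothesis of \cref{claim:ext-color} directly through the restriction to $\bv'$, while the paper states the same fact as disjointness of the projections $\rho(X_j)$ of the $\chi_i$-subclasses (and leaves that deduction implicit).
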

    \begin{claimproof}
        Suppose $\chi_{i-1}^{\bv} \equiv \chi_{i}^{\bv}$ for every $\bv \in V^{k-2}$.
        Since $\chi_{i-1} \succ \chi_i$, there is a color class $X \subseteq V^{k}$ that is split, i.e., $c \coloneqq \chi_{i-1}(\bw) = \chi_{i-1}(\bw')$ for every $\bw,\bw' \in X$, but there are $\bw,\bw' \in X$ such that $\chi_{i}(\bw) \neq \chi_{i}(\bw')$.
        Let $X = X_1 \uplus \dots \uplus X_\ell$ denote the partition of $X$ into color classes of $\chi_i$.
        For $j \in [\ell]$ let
        \[\rho(X_j) \coloneqq \{\bv \in V^{k-2} \mid \exists w_1,w_2 \in V\colon (\bv,w_1,w_2) \in X_j\}\]
        denote the projection of $X_j$ to the first $k-2$ components.
        Since $\chi_{i-1}^{\bv} \equiv \chi_{i}^{\bv}$ for every $\bv \in V^{k-2}$, we conclude that $\rho(X_j) \cap \rho(X_{j'}) = \emptyset$ for every $j \neq j' \in [\ell]$.
        Now, pick $\bv \in \rho(X_j)$, $\bv' \in \rho(X_{j'})$ for distinct $j \neq j' \in [\ell]$.
        Then $\bar \chi_{i+2}(\bv) \neq \bar \chi_{i+2}(\bv')$ by Claim \ref{claim:ext-color}.
        Let $w_1,w_2,w_1',w_2' \in V$ such that $\bw = (\bv,w_1,w_2) \in X_j$ and $\bw' = (\bv',w_1',w_2') \in X_{j'}$.
        Then $\chi_{i-1}(\bw) = \chi_{i-1}(\bw')$, and hence, $\bar \chi_{i-1}(\bv) = \bar \chi_{i-1}(\bv')$ since $\chi_{i-1}$ is shufflable. 
        We conclude that $\bar \chi_{i-1} \succ \bar \chi_{i+2}$.
    \end{claimproof}

    Now, let us fix an arbitrary order $<$ on $V$.
    For $j \in [k-2]$ let
    \[S_j \coloneqq \{(v_1,\dots,v_j,v_j,\dots,v_j) \in V_{k-2} \mid v_1 < v_2 < \dots < v_j\}.\]
    Observe that $|S_j| \leq \binom{n}{j}$.
    Let $S \coloneqq \bigcup_{j \in [k-2]} S_j$ and observe that $|S| = \sum_{j = 1}^{k-2} \binom{n}{j}$.

    \begin{claim}\label{claim:v:S}
        Suppose $i \in \{1,\dots,r\}$ such that $\chi_{i-1}^{\bv} \succ \chi_{i}^{\bv}$ for some $\bv \in V^{k-2}$.
        Then there is some $\bv' \in S$ such that $\chi_{i-1}^{\bv'} \succ \chi_{i}^{\bv'}$.
    \end{claim}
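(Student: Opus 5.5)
Given $\bv = (v_1,\dots,v_{k-2})$ with $\chi_{i-1}^{\bv} \succ \chi_{i}^{\bv}$, I will produce a tuple $\bv' \in S$ by a map $\sigma\colon [k]\to[k]$ that fixes positions $k-1,k$ and rearranges the first $k-2$ coordinates. Then I transfer the strict refinement via shufflability. Concretely, let $u_1 < \dots < u_j$ be the distinct elements of $\{v_1,\dots,v_{k-2}\}$ in the fixed order, so $1 \leq j \leq k-2$. Set $\bv' \coloneqq (u_1,\dots,u_j,u_j,\dots,u_j) \in S_j$.

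Choose $\sigma\colon[k]\to[k]$ so that $\sigma(k-1)=k-1$, $\sigma(k)=k$, and for $p \in [k-2]$, $\sigma(p)$ is some index $q$ with $v_q = u_p$ if $p\leq j$, and $v_q=u_j$ if $p>j$. Then $(\bv,w_1,w_2)$ composed with $\sigma$ gives $(\bv',w_1,w_2)$ for all $w_1,w_2$. Conversely, choose $\tau\colon[k]\to[k]$ fixing $k-1,k$ with $\tau(q)$ an index $p\le j$ such that $u_p = v_q$. Then composing $(\bv',w_1,w_2)$ with $\tau$ gives $(\bv,w_1,w_2)$. These maps need not be permutations, which is exactly what the definition of shufflability allows.

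Next I show that, for each $t\in\{i-1,i\}$, $\chi_t^{\bv}$ and $\chi_t^{\bv'}$ induce the same partition of $V^2$. If $\chi_t(\bv,w_1,w_2)=\chi_t(\bv,w_1',w_2')$, applying shufflability with $\sigma$ gives $\chi_t(\bv',w_1,w_2)=\chi_t(\bv',w_1',w_2')$. The reverse implication follows the same way using $\tau$. Hence $\chi_t^{\bv}\equiv\chi_t^{\bv'}$ for $t=i-1,i$. Therefore $\chi_{i-1}^{\bv'} \equiv \chi_{i-1}^{\bv} \succ \chi_i^{\bv} \equiv \chi_i^{\bv'}$, and so $\chi_{i-1}^{\bv'}\succ\chi_i^{\bv'}$.

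The only point needing care is confirming that the composed tuples are exactly as claimed, i.e.\ that positions $k-1,k$ are untouched and the first $k-2$ entries become $\bv'$ or $\bv$, respectively. This is immediate from the construction of $\sigma$ and $\tau$. I do not expect any real obstacle beyond this bookkeeping.
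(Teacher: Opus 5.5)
Your proof is correct and follows the paper's argument: you use the same sorted, deduplicated tuple $\bv' \in S_j$ and the same pair of non-bijective index maps (one taking $\bv$ to $\bv'$, one taking $\bv'$ back to $\bv$, both fixing the last two positions), fed into shufflability. The only difference is cosmetic: you prove $\chi_t^{\bv} \equiv \chi_t^{\bv'}$ for $t \in \{i-1,i\}$ outright, whereas the paper transfers a single witnessing pair $(w_1,w_2),(w_1',w_2')$ from $\bv$ to $\bv'$.
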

    \begin{claimproof}
        Pick $\bv = (v_1, \dots, v_{k-2}) \in V^{k-2}$ such that $\chi_{i-1}^{\bv} \succ \chi_{i}^{\bv}$.
        We show how to obtain a $\bv' \in S$ with $\chi_{i-1}^{\bv'} \succ \chi_{i}^{\bv'}$.
        To this end, let $(v'_1, \dots, v'_j)$ be obtained from $(v_1, \dots, v_{k-2})$ by sorting $(v_1, \dots, v_{k-2})$ and deleting duplicates.
        Let $\bv' = (v'_1, \dots, v'_j, v'_j, \dots v'_j) \in S_j$.
        Now, there is a function $\sigma \colon [k-2] \to [k-2]$ with $(v_{\sigma(1)}, \dots, v_{\sigma(k-2)}) = (v'_1, \dots, v'_j, v'_j, \dots v'_j)$.
        Furthermore, there is a function $\sigma' \colon [k-2] \to [j]$ with $(v'_{\sigma'(1)}, \dots, v'_{\sigma'(k-2)}) = (v_1, \dots, v_{k-2})$.

        Since $\chi_{i-1}^{\bv} \succ \chi_{i}^{\bv}$, there are $w_1, w_2, w'_1, w'_2$ with 
        \begin{align}
            \chi_{i-1}(v_1, \dots, v_{k-2}, w_1, w_2) ~&=~  \chi_{i-1}(v_1, \dots, v_{k-2}, w'_1, w'_2) \quad \text{ and } \label{eq:chi:v:1} \\
            \chi_{i}(v_1, \dots, v_{k-2}, w_1, w_2) ~&\neq~  \chi_{i}(v_1, \dots, v_{k-2}, w'_1, w'_2). \label{eq:chi:v:2}
        \end{align}
        We apply shufflablity to \eqref{eq:chi:v:1} which yields
        \[\chi_{i-1}(v_{\sigma(1)}, \dots, v_{\sigma(k-2)}, w_1, w_2) = \chi_{i-1}(v_{\sigma(1)}, \dots, v_{\sigma(k-2)}, w'_1, w'_2)\]
        and is equivalent to $\chi_{i-1}(\bv', w_1, w_2) = \chi_{i-1}(\bv', w'_1, w'_2)$.
        
        Also, we obtain that $\chi_{i}(\bv', w_1, w_2) \neq \chi_{i}(\bv', w'_1, w'_2)$.
        Indeed, if not, we can use $\sigma'$ to obtain
        \[\chi_{i}(v'_{\sigma'(1)}, \dots, v'_{\sigma'(k-2)}, w_1, w_2) = \chi_{i}(v'_{\sigma'(1)}, \dots, v'_{\sigma'(k-2)}, w'_1, w'_2),\]
        which is equivalent to $\chi_{i}(\bv, w_1, w_2) = \chi_{i}(\bv, w'_1, w'_2)$ and contradicts \eqref{eq:chi:v:2}.
        Hence, $\chi_{i}(\bv', w_1, w_2) \neq \chi_{i}(\bv', w'_1, w'_2)$ which together with $\chi_{i-1}(\bv', w_1, w_2) = \chi_{i-1}(\bv', w'_1, w'_2)$ yields $\chi_{i-1}^{\bv'} \succ \chi_{i}^{\bv'}$.
    \end{claimproof}

    We now show that $r = |\bchi| \leq 3 \cdot n^{k-2} + f(n) \cdot |S|$ which yields the theorem.
    To this end, let
    \[\bar{I} = \{i \in [r] \mid \bar{\chi}_{i-1} \succ \bar{\chi}_{i}\}\]
    and, for $\bv \in S$, let
    \[I^{\bv}= \{i \in [r] \mid \chi^\bv_{i-1} \succ \chi^\bv_{i}\}.\]
    Combing \cref{claim:bar:or:v} with \cref{claim:v:S}, for every $i \in [r-2]$, we have that $\bar \chi_{i-1} \succ \bar \chi_{i+2}$ or $\chi_{i-1}^{\bv} \succ \chi_{i}^{\bv}$ for some $\bv \in S$.
    In the former case, we conclude that $\{i,i+1,i+2\} \cap \bar{I} \neq \emptyset$, and in the latter case we have $i \in I^{\bv}$ for some $\bv \in S$.
    It follows that
    \[r-2 \leq 3 \cdot |\bar I| + \sum_{\bv \in S} |I^{\bv}|.\]
    Since each $\bar{\chi}_i$ is a coloring of $V^{k-2}$, we obtain that
    \[|\bar{I}| \leq n^{k-2} - 1.\]
    Also, for $\bv \in S$, after removing duplicates from the sequence $(\chi_0^{\bv},\dots,\chi_r^{\bv})$, we obtain a multiplicative $2$-refinement sequence.
    It follows that $|I^{\bv}| \leq f(n)$.
    Overall, we obtain that
    \[r \leq 3 \cdot |\bar I| + 2 + \sum_{\bv \in S} |I^{\bv}| \leq 3 \cdot (n^{k-2} - 1) + 2 + |S| \cdot f(n) \leq 3 \cdot n^{k-2} + f(n) \cdot \sum_{j = 1}^{k-2} \binom{n}{j}\]
    as desired.
\end{proof}

\mtTwo*

\begin{proof}
    According to \cref{lemma:k:WL:refinement,lem:k:wl:mul:ref,lem:k:wl:mul:ext}, the sequence $\bchi = (\chi_0, \dots, \chi_r)$ is multiplicative and extendable $k$-refinement sequence over $V$. 
    Thus, \cref{theo:upper:bound:k} together with \cref{thm:upper-bound-multiplicative} yields
    \[|\bchi| \leq 3 \cdot n^{k-2} + 5(n-1) \cdot \sum_{j = 1}^{k-2} \binom{n}{j}.\]
    Now, the result is a direct consequence of $\sum_{j = 1}^{k-2}  \binom{n}{j} \leq 5 \cdot \frac{n^{k-2}}{(k-2)!}$ (see \cref{lem:sum}).
\end{proof}

\section*{AI Statement}

An initial version of the proof of \Cref{lemma:lambda:sum,lem:mul:bound:beta} was found by ChatGPT 5.6 Sol.
Also, the same tool was used to analyze several recursive cost functions, including the function $C$ defined in \eqref{eq:def-c}.
We stress that all proofs in this version are written and checked by the authors.

\bibliographystyle{plainurl}
\bibliography{refs}

\clearpage

\appendix

\section{Sums of Binomial Coefficients}

\begin{lemma}\label{lem:sum}
    For every $n \geq k \geq 1$ it holds that 
    \[\sum_{j = 1}^k \binom{n}{j} \leq 5 \cdot \frac{n^{k}}{k!}.\]
\end{lemma}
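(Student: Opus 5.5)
The plan is to bound each term $\binom{n}{j}$ by $n^j/j!$ and then compare the resulting sum with its largest term $n^k/k!$ via a ratio argument.

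First, for every $j \ge 1$ we have $\binom{n}{j} = \frac{n(n-1)\cdots(n-j+1)}{j!} \le \frac{n^j}{j!}$. Hence it suffices to prove
\[\sum_{j=1}^k \frac{n^j}{j!} \le 5\cdot\frac{n^k}{k!}.\]

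Next, normalise by the last term and set $a_j \coloneqq \frac{n^j/j!}{n^k/k!}$. Reading the ratio $a_{j-1}/a_j = j/n$ downward from $j=k$ gives
\[a_{k-m} = \prod_{i=0}^{m-1}\frac{k-i}{n}.\]
Since $n \ge k$, every factor is at most $\frac{k-i}{k}$, so $a_{k-m} \le \frac{k!}{(k-m)!\,k^m}$.

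The main obstacle is bounding $\sum_{m=0}^{k-1} a_{k-m}$ by $5$ uniformly in $k$. A naive bound fails: when $n = k$ the first few factors are close to $1$, so the terms do not decay geometrically at the start. I would split the sum into two ranges.

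\begin{itemize}
\item \textbf{Terms with $m \le 2$.} These contribute at most $3$, since each $a_{k-m}\le 1$.
\item \textbf{Terms with $m \ge 3$.} Here $a_{k-m} \le \prod_{i<m}(1-i/k) \le e^{-m(m-1)/(2k)}$. This is Gaussian-type decay, so summing it directly yields a bound of order $\sqrt{k}$, not a constant.
\end{itemize}

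So the plain term-by-term approach is genuinely insufficient when $n = k$. Indeed, in that case the left-hand side is $2^k - 1$, while the right-hand side is $5k^k/k!\approx 5e^k/\sqrt{2\pi k}$. Because $e > 2$, the inequality still holds, but only via a global comparison rather than termwise. The fallback plan is therefore to avoid the lossy step $\binom{n}{j}\le n^j/j!$ and work with binomials directly, in two regimes.

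\begin{itemize}
\item \textbf{Case $n \ge 2k$.} The ratio $\binom{n}{j-1}/\binom{n}{j} = \frac{j}{n-j+1} \le \frac{k}{n-k+1} \le \frac12$ for $j\le k$. The sum is then at most $2\binom{n}{k} \le 2n^k/k!$.
\item \textbf{Case $k \le n < 2k$.} Bound the sum by $2^n$ and verify $2^n \le 5 n^k/k!$. Using $k!\le e\sqrt{k}(k/e)^k$, it suffices to have
\[2^n \le \frac{5}{e\sqrt{k}}\Big(\frac{en}{k}\Big)^k.\]
With $n = tk$ for $t\in[1,2)$, this becomes $2^t \le e t$ up to the $\sqrt{k}$ factor. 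The factor $e t/2^t \ge e/2$ exceeds $1$ by a constant ratio to the power $k$, which absorbs $\sqrt{k}$ for large $k$.
\end{itemize}

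Small $k$ are then checked by hand. I expect this small-$k$ verification and tuning the constant $5$ to be the fiddly part.
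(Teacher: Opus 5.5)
\textbf{Gap in your case $n \ge 2k$.} The ratio $\binom{n}{j-1}/\binom{n}{j} = \frac{j}{n-j+1}$ is at most $\frac{k}{n-k+1}$. However, $\frac{k}{n-k+1} \le \frac12$ only when $n \ge 3k-1$; at $n = 2k$ it equals $\frac{k}{k+1}$. The intermediate claim $\sum_{j=1}^k \binom{n}{j} \le 2\binom{n}{k}$ is also false in this range. For $k=3$, $n=6$ the left side is $6+15+20 = 41 > 40 = 2\binom{6}{3}$. For $n=2k$ the quotient of the two sides grows like $\sqrt{\pi k}/4$, which is the same middle-of-the-row effect you diagnosed for $n=k$. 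Since your second case stops at $n<2k$, your argument as written does not cover $2k \le n \le 3k-2$.

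\textbf{Repair and comparison.} In this case, go back to the lossy comparison you discarded. Use $\binom{n}{j} \le n^j/j!$ together with your own normalisation $a_{j-1}/a_j = j/n \le k/n \le \frac12$; this gives $\sum_{j=1}^k \binom{n}{j} \le 2\,n^k/k!$. This is exactly the paper's first case, and the loss is harmless once $n \ge 2k$. With this fix your proof is complete.

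Your second case takes a genuinely different route from the paper, and it is correct. Using $k! \le e\sqrt{k}\,(k/e)^k$ and $et/2^t \ge e/2$ on $[1,2]$ (by concavity of $\ln t - t\ln 2$), the needed inequality reduces to $(e/2)^k \ge e\sqrt{k}/5$. This holds for every $k\ge 1$, for example via Bernoulli and AM--GM, so no hand-check of small $k$ is needed.

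The paper instead stays elementary. For $2k>n$ it bounds the whole row by symmetry, $\sum_{j=1}^{n}\binom{n}{j} \le \binom{n}{\lceil n/2\rceil} + 2\sum_{j=1}^{\lfloor n/2\rfloor}\binom{n}{j}$. It then applies the first case with $k'=\lfloor n/2\rfloor$ and uses that $n^k/k!$ is nondecreasing in $k\le n$, which yields the constant $1+2\cdot 2=5$. Your version explains the bound through the exponential gap $e/2>1$, at the price of a quantitative Stirling estimate.
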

\begin{proof}
    We first show that for $2k \leq n$, we obtain $\sum_{j = 1}^k \binom{n}{j} \leq 2 \cdot  \frac{n^{k}}{k!}$.
    To this end, observe that
    \begin{align*}
        \binom{n}{j} \leq \frac{n^j}{j!} = \frac{n^k}{j!} \cdot \frac{1}{n^{k-j}} \leq \frac{n^k}{k!} \cdot \frac{k^{k-j}}{n^{k-j}} = \frac{n^k}{k!} \cdot \left(\frac{k}{n}\right)^{k-j} \leq \frac{n^k}{k!} \cdot \frac{1}{2^{k-j} },
    \end{align*}
    where the last step uses that $2k \leq n$.
    Using this together with the properties of geometric series, we obtain
    \[\sum_{j = 1}^k \binom{n}{j}  \leq \frac{n^k}{k!} \cdot \sum_{j = 1}^k \frac{1}{2^{k-j} } \leq 2 \cdot \frac{n^k}{k!}.  \]
    Finally, for $2k > n$, we obtain
    \[\sum_{j = 1}^k \binom{n}{j} \leq \sum_{j = 1}^{\lfloor n/2 \rfloor} \binom{n}{j} + \binom{n}{\lceil n/2 \rceil} + \sum_{j = 1 + \lceil n/2 \rceil}^{n} \binom{n}{j} = \binom{n}{\lceil n/2 \rceil}  + 2 \cdot \sum_{j = 1}^{\lfloor n/2 \rfloor} \binom{n}{j} \leq  5 \cdot \frac{n^k}{k!}, \]
    using that $n^k / k!$ is monotone increasing with respect to $k \leq n$.
\end{proof}

\end{document}